\documentclass[11pt,letterpaper]{article}
\usepackage[letterpaper,top=1in,bottom=1in,left=1in,right=1in]{geometry}
\pdfoutput=1

\usepackage[utf8]{inputenc}
\usepackage[english]{babel}
\usepackage{amsmath,amsthm,amssymb,amsfonts,bm}
\usepackage{dsfont}
\usepackage{graphicx}
\usepackage[svgnames]{xcolor}
\usepackage{booktabs}
\usepackage{algpseudocode}
\usepackage{algorithm}
\usepackage[caption=false]{subfig}

\usepackage{hyperref}
\hypersetup{colorlinks,linkcolor=blue,urlcolor=black,citecolor=blue}
\allowdisplaybreaks

\usepackage{tikz}
\usetikzlibrary{quantikz,circuits.logic.US}
\newcommand{\multimeterD}[2]{
\gate[#1,style={and gate US,draw,inner sep=-5pt}]{#2}
}

\def\ket #1{\vert #1\rangle}
\def\bra #1{\langle #1\vert}
\newcommand{\ketbra}[2]{\ensuremath{\ket{#1}\!\bra{#2}}}
\renewcommand{\braket}[2]{\ensuremath{\langle {#1} \vert {#2} \rangle}}
\newcommand{\kett}[1]{|#1\rangle\!\rangle}
\newcommand{\bbra}[1]{\langle\!\langle#1|}
\newcommand{\kettbbra}[2]{\ensuremath{\kett{#1}\!\bbra{#2}}}
\newcommand{\bbrakett}[2]{\ensuremath{\langle\!\langle{#1}\vert{#2}\rangle\!\rangle}}
\newcommand{\bigkett}[1]{\Big|#1\Big\rangle\!\!\Big\rangle}
\newcommand{\bigbbra}[1]{\Big\langle\!\!\Big\langle#1\Big|}
\newcommand{\biggkett}[1]{\Bigg|#1\Bigg\rangle\!\!\!\Bigg\rangle}
\newcommand{\biggbbra}[1]{\Bigg\langle\!\!\!\Bigg\langle#1\Bigg|}
\newcommand{\tr}[0]{\textup{tr}}

\newtheorem{definition}{Definition}

\newtheorem{proposition}{Proposition}
\newtheorem{lemma}{Lemma}

\newtheorem{theorem}{Theorem}

\newtheorem{problem}{Problem}

\newtheorem*{theorem*}{Theorem}

\newcommand{\Hmain}[0]{\mathbb{H}_{\textup{main}}}
\newcommand{\Hanc}[0]{\mathbb{H}_{\textup{anc}}}

\newcommand{\Var}{\operatorname{Var}}

\usepackage{authblk}

\begin{document}

\title{Unitarity estimation for quantum channels}

\date{}
\author[1,2,\thanks{\href{mailto:chenka@ios.ac.cn}{chenka@ios.ac.cn}}]{Kean Chen}
\author[3,\thanks{\href{mailto:QishengWang1994@gmail.com}{QishengWang1994@gmail.com}}]{Qisheng Wang}
\author[1,2,\thanks{\href{mailto:longpx@ios.ac.cn}{longpx@ios.ac.cn}}]{Peixun Long}
\author[1,4,\thanks{\href{mailto:yingms@ios.ac.cn}{yingms@ios.ac.cn}}]{Mingsheng Ying}
\affil[1]{{\footnotesize\textit{Institute of Software, Chinese Academy of Sciences, China}}}
\affil[2]{{\footnotesize\textit{University of Chinese Academy of Sciences, China}}}
\affil[3]{{\footnotesize\textit{Graduate School of Mathematics, Nagoya University, Japan}}}
\affil[4]{{\footnotesize\textit{Department of Computer Science and Technology, Tsinghua University, China}}}

\maketitle

\begin{abstract}
Estimating the unitarity of an unknown quantum channel $\mathcal{E}$ provides information on how much it is unitary, which is a basic and important problem in quantum device certification and benchmarking.
Unitarity estimation can be performed with either coherent or incoherent access, where the former in general leads to better query complexity while the latter allows more practical implementations.
In this paper, we provide a unified framework for unitarity estimation, which induces ancilla-efficient algorithms that use $O(\epsilon^{-2})$ and $O(\sqrt{d}\cdot\epsilon^{-2})$ calls to $\mathcal{E}$ with coherent and incoherent accesses, respectively, where $d$ is the dimension of the system that $\mathcal{E}$ acts on and $\epsilon$ is the required precision. We further show that both the $d$-dependence and $\epsilon$-dependence of our algorithms are optimal.
As part of our results, we settle the query complexity of the distinguishing problem for depolarizing and unitary channels with incoherent access by giving a matching lower bound $\Omega(\sqrt{d})$, improving the prior best lower bound $\Omega(\sqrt[3]{d})$ by \hyperlink{cite.aharonov2022quantum}{Aharonov et al. (Nat. Commun. 2022)} and \hyperlink{cite.chen2022exponential}{Chen et al. (FOCS 2021)}.
\end{abstract}

\section{Introduction}
Testing and verifying properties of quantum channels is a central problem that has been extensively studied in the literature.
The standard approach for learning quantum channels is quantum process tomography~\cite{chuang1997prescription,poyatos1997complete,nielsen2002quantum}, which can reconstruct the full information of quantum channels, but is excessively costly in the dimension of the quantum system.
By contrast, we may not need the full information but certain properties of quantum channels, which is more meaningful in the learning and certification tasks.
The learning task is to partially identify some theoretical descriptions from a restricted set of possibilities that best matches the experimental results~\cite{da2011practical,aharonov2022quantum,montanaro2013survey}.
The certification task is to ensure whether an experimental quantum device acts correctly as its theoretical target~\cite{da2011practical,pfister2018verification,kliesch2021theory}. These tasks usually require much fewer resources than the full tomography, showing further significance for near-term quantum devices.

{\vskip 3pt}

\textbf{Problem statement}. In this paper, we study a basic problem --- given access to an unknown quantum channel \(\mathcal{E}\), measure how ``unitary'' \(\mathcal{E}\) is.
This problem is important both theoretically and practically. 
To see this, we consider the following scenarios that the unitarity is used to test the functionality of quantum channels:
\begin{enumerate}
\item If \(\mathcal{E}\) is a quantum gate/circuit device, one might wonder whether \(\mathcal{E}\) acts (or acts closely) as a unitary operation. This can be seen as a certification task~\cite{montanaro2013survey} without the prerequisite of knowing detailed specification of \(\mathcal{E}\), but only the knowledge that \(\mathcal{E}\) ought to be unitary.
\item If \(\mathcal{E}\) is a noisy process, one might wonder whether the dominant noise of \(\mathcal{E}\) is coherent (i.e., overrotaion or calibration errors) or incoherent (i.e., depolarizing or dephasing noise)~\cite{dirkse2019efficient}. This is useful since such two different types of noise are generally reduced in different ways~\cite{feng2016estimating,sheldon2016characterizing}. Furthermore, this also provides a tighter connection  between worst-case and average-case errors~\cite{kueng2016comparing,wallman2015bounding}; and a better bound for the gate fidelity of composite channels~\cite{carignan2019bounding}.
\end{enumerate}
The absolute unitarity was characterized in \cite{montanaro2013survey} via the well-known Choi-Jamio{\l}kowski isomorphism~\cite{choi1975completely,jamiolkowski1972linear}, in the sense that a quantum channel is unitary if and only if its Jamio{\l}kowski state is pure. Therefore, a unitarity measure can be defined as a direct application of purity:
\begin{equation} \label{11261554}
    \mathfrak{u}(\mathcal{E}) := \tr \left( \mathfrak{J}(\mathcal{E})^2 \right),
\end{equation}
where $\mathfrak{J}(\mathcal{E}) = (\mathcal{E} \otimes \mathcal{I}) (\ketbra{\Phi}{\Phi})$ is the Jamio{\l}kowski state of $\mathcal{E}$ and $\ket{\Phi}$ is the maximally entangled state. 
One can see that, $\mathfrak{u}(\mathcal{E}) \leq 1$ with equality if and only if $\mathcal{E}$ is unitary. Another \textit{closely related} definition of unitarity is proposed in \cite{wallman2015estimating} to characterize the coherence of noise. For simplicity, we use the definition of Eq.~\eqref{11261554} in the main text, and the alternative definition can be similarly handled by our methods to obtain the exactly same results (see Appendix~\ref{sec-ver}).

It is worth noting that there is already a direct approach~\cite{montanaro2013survey} for unitarity estimation, according to Eq.~\eqref{11261554}, by estimating the inner product of two copies of the Jamio{\l}kowski state \(\mathfrak{J}(\mathcal{E})\) via the SWAP test~\cite{BCWdW01}:
\begin{equation}
\begin{quantikz}[row sep=-0.3em, column sep=1.0em]
	 \lstick[2]{\(|\Phi\rangle\!\langle\Phi|\)}  & \gate[1]{\mathcal{E}} & \gate[wires=4,disable auto height]{\begin{array}{c}\text{SWAP}\\ \text{test}\end{array}}\\
	  \qw  &  \ghost{\mathcal{E}}\qw& \qw\\
	 \lstick[2]{\(|\Phi\rangle\!\langle\Phi|\)}  &  \gate[1]{\mathcal{E}}  & \\
	   \qw &  \ghost{\mathcal{E}}\qw & \qw
\end{quantikz}.
\end{equation}
However, such approach involves global entanglements on a large quantum system of dimension \(\Omega(d^4)\), where \(d\) is the dimension of the quantum system that \(\mathcal{E}\) acts on. This means, at least an \(\Omega(d^3)\)-dimensional ancilla system is needed, which is hardly ancilla-efficient (here the ancilla system refers to the additional quantum system other than that of the top \(\mathcal{E}\), formal definition will be given later). Furthermore, this direct approach is not practically suitable on near-term devices, since it requires so-called \textit{coherent access}~\cite{aharonov2022quantum} to $\mathcal{E}$.

{\vskip 3pt}

\textbf{Coherent/incoherent access}. To see this, we start by introducing the frameworks of learning algorithms with quantum channel access. Suppose we have an unknown quantum channel \(\mathcal{E}\) acting on a \(d\)-dimensional ``main'' system \(\Hmain\), and we also have a \(d'\)-dimensional ancilla system \(\Hanc\). To learn the channel, experiments are conducted with access to \(\mathcal{E}\), which can be divided into two categories:
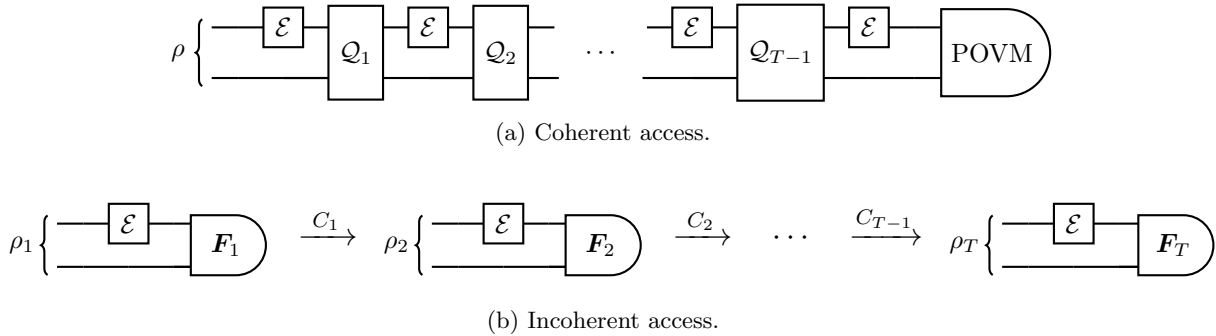
\begin{figure}[ht]
\centering
\subfloat[Coherent access.]{\label{fig:co}\small
\begin{quantikz}[row sep=0.2em, column sep=1.0em]
	 \lstick[2]{\(\rho\)}  &\qw & \gate{\mathcal{E}} & \gate[2]{\mathcal{Q}_1} & \gate{\mathcal{E}}  &\gate[2]{\mathcal{Q}_2}&\qw \midstick[2,brackets=none]{\(\cdots\)} &\gate{\mathcal{E}}& \gate[2]{\mathcal{Q}_{T-1}}& \gate{\mathcal{E}} &\qw& \gate[2,style={and gate US,draw,inner sep=-3pt}]{\textup{POVM}} \\
	   & \qw &\qw & \qw &\qw&\qw & \qw &\qw & \qw& \qw & \qw&
\end{quantikz}
}
\\
\vspace{2mm}
\subfloat[Incoherent access.]{\label{fig:inco}\small
\begin{quantikz}[row sep=-0em, column sep=1.0em]
	   \lstick[2]{\(\rho_1\)\!} \qw & \qw &\gate{\mathcal{E}} &\qw  &\multimeterD{2}{\bm{F}_1}\\
	   	  \qw & \qw & \qw& \qw &\qw
\end{quantikz}
{\large\(\quad\xrightarrow{\,\,C_1\,\,}\)}
\begin{quantikz}[row sep=-0em, column sep=1.0em]
	   \lstick[2]{\(\rho_2\)\!} \qw & \qw &\gate{\mathcal{E}} &\qw  &\multimeterD{2}{\bm{F}_2}\\
	   	  \qw & \qw & \qw& \qw &\qw
\end{quantikz}
{\large\(\quad\xrightarrow{\,\,C_2\,\,}\quad\cdots\quad\xrightarrow{C_{T-1}}\)}
\begin{quantikz}[row sep=-0em, column sep=1.0em]
	   \lstick[2]{\(\rho_T\)\!} \qw & \qw &\gate{\mathcal{E}} &\qw  &\multimeterD{2}{\bm{F}_T}\\
	   	  \qw & \qw & \qw& \qw &\qw
\end{quantikz}
}
\caption{Learning quantum channel \(\mathcal{E}\) with coherent/incoherent access.}
\label{fig-1282341}
\vspace{-3mm}
\end{figure}
\begin{enumerate}
\item \textit{Coherent access}. As shown in Fig.~\ref{fig:co}, one can prepare an arbitrary initial state \(\rho\) on \(\Hmain \otimes \Hanc\), evolve under multiple calls to \(\mathcal{E}\otimes \mathcal{I}_{\textup{anc}}\) interleaved by arbitrary quantum channels \(\mathcal{Q}_i\), and finally perform an arbitrary POVM to obtain a classical outcome. The quantum channels \(\mathcal{Q}_i\) can be seen as quantum computers processing the quantum data from previous output states.
\item \textit{Incoherent access}. As shown in Fig.~\ref{fig:inco}, one can prepare an arbitrary initial state \(\rho_i\) on \(\Hmain \otimes \Hanc\), evolve under \(\mathcal{E}\otimes \mathcal{I}_{\textup{anc}}\), and perform an arbitrary POVM \(\bm{F}_i\) to obtain a classical outcome. This experiment is repeated many times, where at each time, there is a classical computer \(C_i\) that processes the measurement data and designs the next experiment.
The algorithm is called adaptive if the selections of \(\rho_i\) and \(\bm{F}_i\) depend on previous measurement outcomes; and is called ancilla-assisted if \(d'>1\). 
\end{enumerate}

It is easy to see that the incoherent access framework is more restricted. As shown in \cite{aharonov2022quantum,chen2022exponential}, there is an exponential separation between the learning algorithms with coherent and incoherent access, in terms of the sample/query complexity.
However, the coherent access framework is practically prohibitive for near-term devices~\cite{preskill2018quantum,chen2022tight}. The reason is two-fold:
\begin{enumerate}
  \item An algorithm with coherent access often requires a larger ancilla quantum system and longer coherence time, for storing the quantum states output by the previous calls to \(\mathcal{E}\).
  \item In the sense of quantum device testing, the device \(\mathcal{E}\) under test (e.g., a potentially defective quantum chip~\cite{wang2020integrated}) can be \textit{unique}, since it might be hard or even impossible to make any copy of \(\mathcal{E}\) that has exactly the same defect. Thus, coherent access requires the ability of saving and retrieving the quantum data from the previous experiments~\cite{chen2022exponential}, thereby prohibitive without the presence of persistent quantum memory.
\end{enumerate}
Therefore, despite the potentially higher query complexity, it is still valuable to design learning algorithms for quantum channels with incoherent access.

\subsection{Our results}
In this paper, we give a unified framework for estimating the unitarity of quantum channels, with either coherent or incoherent access. Within it, we are able to prove the following upper bounds: 

\begin{theorem*}[Upper bounds, see Section~\ref{sec-12241419}] Suppose our task is to estimate the unitarity of an unknown quantum channel \(\mathcal{E}\) acting on a \(d\)-dimensional quantum system to precision \(\epsilon\).
\begin{itemize}\item \textbf{Coherent access}.
There is an algorithm for this task with coherent access using \(O(\epsilon^{-2})\) calls to \(\mathcal{E}\) and an \(O(d)\)-dimensional ancilla system. 
\item \textbf{Incoherent access}. There is an algorithm for this task with incoherent access using \(O(\sqrt{d}\cdot\epsilon^{-2})\) calls to \(\mathcal{E}\). Moreover, this algorithm is non-adaptive, non-ancilla-assisted.
\end{itemize}
\end{theorem*}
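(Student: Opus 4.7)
The plan is to reduce unitarity estimation to two purity estimations on the main $d$-dimensional system via a single Schur--Weyl identity, and then instantiate each purity estimation by either a SWAP test (coherent) or a collision count (incoherent). The identity, which follows from $\mathbb{E}_{|\psi\rangle}|\psi\rangle\langle\psi|^{\otimes 2}=(I+\mathrm{SWAP})/(d(d+1))$ together with the computation $\tr[\mathrm{SWAP}\cdot(\mathcal{E}\otimes\mathcal{E})(\mathrm{SWAP})]=d^{2}\mathfrak{u}(\mathcal{E})$, is
\[
\mathfrak{u}(\mathcal{E})=\tfrac{d+1}{d}\,\mathbb{E}_{|\psi\rangle\sim\mu}\!\left[\tr(\mathcal{E}(|\psi\rangle\langle\psi|)^{2})\right]-\tr(\mathcal{E}(I/d)^{2}),
\]
valid for any state $2$-design $\mu$. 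Estimating each right-hand-side term to precision $\epsilon/2$ yields $\mathfrak{u}(\mathcal{E})$ to precision $\epsilon$, and both are output-state purities of $\mathcal{E}$ on $d$-dim inputs.

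For the coherent upper bound, I would implement each output-purity estimation by a standard SWAP test: prepare the input on main (dim $d$), apply $\mathcal{E}$, swap the output into a $d$-dim ancilla, re-prepare the input on main, apply $\mathcal{E}$ again, and run a single-control SWAP test between main and ancilla. One shot costs $2$ calls to $\mathcal{E}$, the total ancilla dimension is $O(d)$, and the $\{0,1\}$ outcome concentrates by Hoeffding in $O(\epsilon^{-2})$ shots. For the incoherent upper bound a non-adaptive collision count replaces the SWAP test: fix $K=O(\epsilon^{-2})$ pairs $(|\psi_{k}\rangle,U_{k})$ with $|\psi_{k}\rangle$ from a state $4$-design and $U_{k}$ from a unitary $2$-design, and for each $k$ take $M=O(\sqrt{d})$ shots that prepare $|\psi_{k}\rangle$, apply $\mathcal{E}$, rotate by $U_{k}$, and measure in the computational basis. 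The estimator $\bar{p}=K^{-1}\sum_{k}C_{k}/\binom{M}{2}$, with $C_{k}$ the number of within-block collisions, satisfies $\mathbb{E}[\bar{p}]=(1+\mathbb{E}_{\psi}[\tr(\mathcal{E}(|\psi\rangle\langle\psi|)^{2})])/(d+1)$ via the Haar identity $\mathbb{E}_{U}[\|\mathrm{diag}(U\rho U^{\dagger})\|_{2}^{2}]=(1+\tr(\rho^{2}))/(d+1)$. A twin run with $|\psi_{k}\rangle$ uniform over the computational basis estimates $\tr(\mathcal{E}(I/d)^{2})$, and the identity above assembles $\mathfrak{u}(\mathcal{E})$. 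The total query count is $KM=O(\sqrt{d}\,\epsilon^{-2})$, and the scheme is manifestly non-adaptive and non-ancilla-assisted.

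The coherent-case concentration is routine; the crux is the incoherent variance bound. By the law of total variance, the intra-block $U$-statistic variance of $C_{k}/\binom{M}{2}$ is $O(1/(Md^{2})+1/(M^{2}d))$ while the inter-block variance of $\|\mathrm{diag}(U_{k}\mathcal{E}(|\psi_{k}\rangle\langle\psi_{k}|)U_{k}^{\dagger})\|_{2}^{2}$ around its mean is $O(1/d^{2})$. Choosing $M=\sqrt{d}$ balances these so each block contributes $O(1/d^{2})$, and averaging $K=1/\epsilon^{2}$ independent blocks pushes the error in $\bar{p}$ down to $O(\epsilon/d)$, which translates via the Step~1 identity into precision $\epsilon$ in $\mathfrak{u}(\mathcal{E})$. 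The delicate step is the $O(1/d^{2})$ inter-block bound: it needs a careful fourth-moment calculation combining $4$-design averaging over $|\psi\rangle$ with $2$-design averaging over $U$, and this is where the optimal $\sqrt{d}$ scaling (matching the paper's lower bound) emerges.
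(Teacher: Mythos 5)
Your decomposition is correct and your algorithms would work, but you take a genuinely (if mildly) different route from the paper. The paper's Theorem~\ref{thm-1217036} writes $\mathfrak{u}(\mathcal{E})=\mathfrak{p}(\mathcal{E})-(1-\tfrac1d)\mathfrak{o}(\mathcal{E})$, i.e.\ average output purity minus an \emph{orthogonality-preservation} term $\mathbb{E}\,\tr\bigl(\mathcal{E}(\ketbra{\phi}{\phi})\mathcal{E}(\ketbra{\psi}{\psi})\bigr)$ over random orthonormal pairs, proved by a Schur's-lemma twirl of $\mathcal{E}^\dag\mathcal{E}$; it then estimates both terms as \emph{two-state} inner products, via the (extended) SWAP test coherently and via the distributed inner product estimation of Anshu et al.\ incoherently, with exactly your parameter scaling ($\Theta(\epsilon^{-2})$ input settings, $\Theta(\sqrt d)$ shots per setting) and the same law-of-total-variance split into intra- and inter-setting errors. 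Your identity $\mathfrak{u}(\mathcal{E})=\tfrac{d+1}{d}\mathbb{E}_\psi\tr(\mathcal{E}(\ketbra{\psi}{\psi})^2)-\tr(\mathcal{E}(I/d)^2)$ is an equivalent consequence of the same Schur--Weyl structure (one checks it reproduces the paper's $\alpha,\beta$ decomposition), and it trades the orthogonal-pair term for the purity of $\mathcal{E}(I/d)$, so both subroutines become \emph{self}-purity estimations (SWAP test between two fresh outputs; within-block collision counts of randomized measurements). This is arguably slightly simpler conceptually; the paper's version has the advantage that the same two-state DQIPE subroutine, already extended to partial density operators, handles the non-trace-preserving case the paper also claims, whereas your unbiasedness formula $(1+\mathbb{E}_\psi\tr(\mathcal{E}(\psi)^2))/(d+1)$ uses $\tr(\mathcal{E}(\psi))=1$ and would need an extra estimate of $\tr(\mathcal{E}(\rho))$ (as in the paper's Algorithm~\ref{alg_einnerproduct}) to cover that case.

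Two small technical slips to fix. First, a unitary $2$-design for the measurement bases $U_k$ only certifies the unbiasedness identity; your intra-block bound involves $\sum_b p_b^3$ and your inter-block bound involves fourth moments of the diagonal entries $\langle b|U\rho U^\dag|b\rangle$, so you need Haar-random (or at least $4$-design) $U_k$, which is what the paper uses; conversely the $4$-design on $\psi$ is unnecessary, since $\mathbb{E}_U[\,\cdot\mid\psi]$ lies in $[0,2/(d+1)]$ and its variance over $\psi$ is automatically $O(1/d^2)$. Second, in the twin run for $\tr(\mathcal{E}(I/d)^2)$ you must resample the computational-basis input fresh at every shot (so each shot's effective input is exactly $I/d$); if the basis state is fixed per block, the collision estimator targets $\mathbb{E}_j\tr(\mathcal{E}(\ketbra{j}{j})^2)$, which differs from $\tr(\mathcal{E}(I/d)^2)$. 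Both fixes are immediate and preserve the non-adaptive, ancilla-free, $O(\sqrt d\cdot\epsilon^{-2})$ guarantee.
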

In our theorem, we only require that the success probability be higher than a constant (e.g., \(2/3\)). To amplify the success probability to \(1-\delta\), we can use the ``median trick'', i.e., repeating the algorithm for \(O(\log(1/\delta))\) times and then taking the median of all the results. This only introduces an additional factor of \(\log(1/\delta)\). Our algorithms also apply to non-trace-preserving quantum channels. Note that in the case of coherent access, the approach of applying the SWAP test directly on the Jamio{\l}kowski states~\cite{montanaro2013survey} has the same query complexity \(O(\epsilon^{-2})\) but uses a \(\Theta(d^3)\)-dimensional ancilla system. By contrast, our algorithm only uses an \(O(d)\)-dimensional ancilla system.

We can further show that the $d$-dependence and $\epsilon$-dependence in our upper bounds are optimal by giving the following lower bounds.

\begin{theorem*}[Lower bounds, see Section~\ref{sec-12241421}]
Suppose our task is to estimate the unitarity of an unknown quantum channel \(\mathcal{E}\) acting on a \(d\)-dimensional quantum system to precision \(\epsilon\).
\begin{itemize}\item \textbf{Coherent access}.
Any algorithm for this task with coherent access must use at least \(\Omega(\epsilon^{-2})\) calls to \(\mathcal{E}\).
\item \textbf{Incoherent access}.
Any algorithm for this task with incoherent access must use at least \(\Omega(\sqrt{d}+\epsilon^{-2})\) calls to \(\mathcal{E}\), even if adaptive strategies and ancilla systems are allowed.
\end{itemize}
\end{theorem*}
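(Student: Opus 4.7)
Each of the two lower bounds reduces to a two-point hypothesis test between specific channels $\mathcal{E}_0, \mathcal{E}_1$ with $|\mathfrak{u}(\mathcal{E}_0) - \mathfrak{u}(\mathcal{E}_1)| \ge 2\epsilon$, and the argument shows that no $T$-query protocol with $T$ below the claimed bound can distinguish them with constant probability.

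\textbf{The $\Omega(\epsilon^{-2})$ bound.} Embed a single-qubit depolarizing family $\mathcal{E}_p(\sigma) = (1-p)\sigma + p\cdot I/2$ as acting on one qubit of the $d$-dimensional system, with identity on the rest. By multiplicativity of $\mathfrak{u}$ under tensor product and a direct Jamio{\l}kowski calculation, $\mathfrak{u}(\mathcal{E}_p \otimes \mathcal{I}) = 1 - (3/4)p(2-p)$, whose derivative at $p = 1/2$ is nonzero; choosing $p_0 = 1/2$ and $p_1 = 1/2 + \Theta(\epsilon)$ then gives a unitarity gap of $\Theta(\epsilon)$. Under both channels every POVM outcome $x$ has probability bounded below by $\Omega(\tr(E_x)/d)$, so a short computation yields a per-query squared Hellinger distance of $O(\epsilon^2)$ between the output laws. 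Since Hellinger divergence tensorizes through interleaved coherent operations and adaptive branchings (via the KL chain rule), the total squared Hellinger between $T$-query transcripts is $O(T\epsilon^2)$, and constant distinguishing advantage forces $T = \Omega(\epsilon^{-2})$ in both access models.

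\textbf{The $\Omega(\sqrt{d})$ bound.} I reduce to distinguishing the fully depolarizing channel $\mathcal{D}$ (of unitarity $1/d^2$) from a Haar-random unitary channel $\mathcal{U}_U$ (of unitarity $1$); any unitarity estimator of precision $<1/3$ solves this with constant advantage. Letting $P_U, Q$ be the transcript laws of a $T$-query incoherent protocol under $\mathcal{U}_U$ and under $\mathcal{D}$, the Bayesian discrimination bound gives
\begin{equation*}
\mathrm{TV}(\mathbb{E}_U P_U, Q)^2 \le \chi^2(\mathbb{E}_U P_U, Q) = \mathbb{E}_{U,U'}\Bigl[\sum_{\vec{x}} \frac{P_U(\vec{x})\,P_{U'}(\vec{x})}{Q(\vec{x})}\Bigr] - 1,
\end{equation*}
with $U, U'$ independent Haar copies. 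Using the second-moment Haar identity (projector onto the symmetric subspace), the per-query integrand $\sum_x p_U(x) p_{U'}(x)/q(x)$ has mean $1$ and variance $O(1/d)$ under the joint Haar law on $(U,U')$. Tensorizing across queries---multiplicatively in the non-adaptive case, and by a conditional-$\chi^2$ martingale argument in the adaptive case---gives a total $\chi^2$ of $O(T^2/d)$, so reliable distinguishing forces $T = \Omega(\sqrt{d})$.

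\textbf{Expected main obstacle.} The delicate step, and the source of the improvement from the previous $\Omega(d^{1/3})$ bound of Aharonov et al.\ and Chen et al., is pinning down the $O(1/d)$ per-query variance uniformly over POVMs (ancilla-assisted or not). The Haar second-moment expansion produces a cross-term $\tr(E_x E_y)^2/(q(x)q(y))$ which for high-rank or many-outcome POVMs is only controllable after exploiting both the POVM identity $\sum_x E_x = I$ (so that $\sum_{x,y}\tr(E_xE_y) = d$) and the operator bound $E_x \preceq I$; combining these carefully while also handling the correlations introduced by adaptive branching through conditioning on the observed history is the main technical effort.
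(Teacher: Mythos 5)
Both halves of your proposal contain genuine gaps at exactly the places where the paper needs its real machinery. For the $\Omega(\epsilon^{-2})$ bound you argue via a classical Hellinger/KL chain rule over ``transcripts,'' but in the coherent-access model there is no per-query classical transcript at all: the algorithm keeps quantum memory, interleaves arbitrary channels $\mathcal{Q}_i$, and produces a single classical outcome at the end, so the KL chain rule is not even a statement about that model. The quantum substitute you can get for free is a telescoping bound in trace (or purified) distance, which only accumulates $O(\epsilon)$ per use and yields $\Omega(\epsilon^{-1})$, not $\Omega(\epsilon^{-2})$. The paper closes this gap by choosing Weyl--Heisenberg mixture channels, which are teleportation-covariant, so every adaptive coherent protocol can be simulated by a parallel protocol acting on $\mathfrak{J}(\mathcal{E})^{\otimes T}\otimes\rho$; then multiplicativity of fidelity under tensor products gives $1-F^T\leq T\,O(\epsilon^2)$ and hence $T=\Omega(\epsilon^{-2})$. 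Your one-qubit depolarizing family is in fact a Pauli channel and hence also teleportation-covariant, so the construction could be repaired along the paper's lines, but the tensorization step you assert is not an argument. (Separately, the claim that every outcome has probability $\Omega(\tr(E_x)/d)$ is false once ancillas are allowed or when the channel only depolarizes one qubit; the correct per-round bound must come from the pointwise likelihood-ratio structure of the mixture, and in any case it only helps in the incoherent model.)

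For the $\Omega(\sqrt{d})$ bound, the reduction to depolarizing vs.\ Haar-random unitary matches the paper, but your $\chi^2$ computation assumes away the core difficulty. The same Haar-random $U$ appears in every round, so $\mathbb{E}_{U,U'}\bigl[\sum_{\vec{x}} P_U P_{U'}/Q\bigr]$ does not factor across queries ``multiplicatively''; expanding it requires moments of $U$ of degree $2T$ (Weingarten calculus), and since the first Haar moment reproduces $\mathcal{D}$ exactly (so each single-round term has mean precisely matching $Q$), all of the signal sits in these cross-round correlations --- which is precisely where the prior analyses stalled at $d^{1/3}$. A conditional-$\chi^2$ martingale argument also does not apply in the adaptive case, because conditioning on the history does not decouple rounds that share the unknown $U$; and mixture-vs-null $\chi^2$ against Haar unitaries is prone to blowing up, which is why the paper instead proves a \emph{pointwise one-sided} bound $\mathbb{E}_U[p^{\mathcal{U}}(\ell)]/p^{\mathcal{D}}(\ell)\geq d^T/\bigl(d(d+1)\cdots(d+T-1)\bigr)\geq 1-T^2/d$ for every leaf of the tree representation and invokes Lemma~\ref{lem:onesided}. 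The essential new ingredient there is Lemma~\ref{lemma_1}, a Loewner-order inequality $\sum_k A_k\otimes A_k^*\sqsupseteq \frac{1}{d\cdots(d+T-1)}\sum_\pi P_\pi\otimes P_\pi^*$, proved by analyzing the Gram (Weingarten) matrix of the permutation operators and the finite-dimensional $C^*$-algebra structure of its eigenspaces. Your proposal names the right obstacle but does not supply an argument that overcomes it, so as written neither lower bound is established.
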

In the case of coherence access, our lower bound matches the previous upper bound \(O(\epsilon^{-2})\)~\cite{montanaro2013survey}; and in the case of incoherent access, compared to the upper bound \(O(\sqrt{d}\cdot\epsilon^{-2})\), our lower bound is tight for \(d\) and \(\epsilon\) separately. It is also worth noting that, as part of this result, we provide a lower bound \(\Omega(\sqrt{d})\) for distinguishing between depolarizing and unitary channels, improving the prior best lower bound \(\Omega(\sqrt[3]{d})\) obtained in \cite{aharonov2022quantum,chen2022exponential}, and matching the upper bound \(O(\sqrt{d})\) obtained in \cite{chen2022exponential}.
These results are summarized in Table~\ref{tab:main_results}.

{\vskip 5pt}

\begin{table}[ht]
\centering
\renewcommand{\arraystretch}{1.5}
\setlength{\tabcolsep}{3mm}{
\begin{tabular}{ccc}
\hline
& Coherent access & Incoherent access \\
\hline
Upper bound & \(O(\epsilon^{-2})\)\,\,\(^*\) &   \(O(\sqrt{d}\cdot\epsilon^{-2})\)  \\
Lower bound & \(\Omega(\epsilon^{-2})\) &  \(\Omega(\sqrt{d}+\epsilon^{-2})\)\,\,\(^\dag\) \\
\hline
\end{tabular}
}
\caption{Our results on the query complexity of unitarity estimation. \({}^*\) Prior upper bound \(O(\epsilon^{-2})\) by \cite{montanaro2013survey} uses a \(\Theta(d^3)\)-dimensional ancilla system; by contrast, our upper bound only uses an \(O(d)\)-dimensional ancilla system. \({}^\dag\) The prior best lower bound is \(\Omega(\sqrt[3]{d})\) by \cite{aharonov2022quantum} and \cite{chen2022exponential} (for the case of constant \(\epsilon\)).}
\label{tab:main_results}
\end{table}

{\vskip 0pt}

\textbf{Applications}. To conclude this subsection, let us briefly discuss how to apply our results on the unitarity in benchmarking quantum processes~\cite{eisert2020quantum}. Specifically, the unitarity is able to characterize how well the channel \(\mathcal{E}\) can be approximated by a unitary channel, in the closeness measure of  \textit{gate fidelity}~\cite{nielsen2002simple} (details are shown in Theorem~\ref{theorem-11252053}). This result is directly applicable in the certification task of distinguishing whether \(\mathcal{E}\) is a unitary channel or is \(\epsilon\)-far from any unitary channel~\cite{montanaro2013survey}. Furthermore, when \(\mathcal{E}\) is a noise process, the unitarity provides good estimations for the best achievable gate fidelity of \(\mathcal{E}\), in the presence of perfect unitary control~\cite{wallman2015estimating}. This information shows how well the noise can be reduced if we can ``recalibrate'' \(\mathcal{E}\) using unitary operations.

\subsection{Overview of the techniques}
Now let us give a brief overview for the main techniques used in this paper to obtain the upper/lower bounds for unitarity estimation.
\vspace{1mm}

\textbf{Upper bound}. The first of our key ideas comes from an observation that the absolute unitarity of a channel is equivalent to its \textit{purity-preservation} and \textit{orthogonality-preservation} properties. As a quantitative generalization of this observation, we show that the unitarity can be reformulated in terms of the measures of these two properties (see Theorem~\ref{thm-1217036}), which admit efficient estimations with either coherent/incoherent access. This is because such purity/orthogonality-preservation measures only involve a quantum system of much smaller dimension, compared to those methods based on Jamio{\l}kowski states~\cite{montanaro2013survey}. For coherent access, we use the SWAP test as a subroutine, obtaining the upper bound \(O(\epsilon^{-2})\), only with an \(O(d)\)-dimensional ancilla system. For incoherent access, our algorithm requires no ancilla system, and adopts an extended version of the distributed quantum inner product estimation (DQIPE)~\cite{anshu2022distributed} as a subroutine, where the extension is to handle non-trace-preserving quantum channels. The upper bound \(O(\sqrt{d}\cdot\epsilon^{-2})\) is then obtained through analyzing the errors from the DQIPE and the extra randomness of our estimators simultaneously.
\vspace{1mm}

\textbf{Lower bound}. For incoherent access, we first give an \(\Omega(\sqrt{d})\) lower bound for the \textit{depolarizing vs unitary channel problem}~\cite{aharonov2022quantum,chen2022exponential}, which can be reduced to unitarity estimation with constant precision. Our proof is based on the tree representation~\cite{chen2022exponential} of learning algorithms, but with substantial improvements on the analysis for the ensemble of Haar-random unitary matrices. First, we exploit a symmetric form for the associated probability in the tree representation via the relation between tensor and outer products. Then, we identify the \(C^*\)-algebra characterization for the eigenspaces of the Weingarten matrix~\cite{collins2022weingarten} of permutation operators, so that the structure theorem of finite-dimensional \(C^*\)-algebra~\cite{davidson1996c} can be applied, and the main technical result (see Lemma~\ref{lemma_1}) of this proof follows. To obtain the lower bound for coherent access, we consider another distinguishing problem between two quantum channels constructed by Weyl-Heisenberg operators. These channels are teleportation-covariant~\cite{pirandola2017fundamental}, thus can be simulated by teleporting the input state over their Jamio{\l}kowski states. Therefore, we can bound the difference between the outputs of any learning algorithm on the two candidate channels, by the fidelity of their Jamio{\l}kowski states. We then make use of the multiplicativity of fidelity under tensor products to obtain the \(\Omega(\epsilon^{-2})\) lower bound. Furthermore, we use this result to strengthen the lower bound for incoherent access to \(\Omega(\sqrt{d}+\epsilon^{-2})\).

\textbf{Discussion}. For incoherent access, it is possible to give a stronger lower bound by combining \(d\) with the \(\epsilon\) in a more sophisticated way. One idea is to consider the distinguishing problem for random unitary and \(\epsilon\)-depolarizing channels. However, this approach does not seem to improve the lower bound beyond \(\Omega(\sqrt{d})\), because the technique we used in this paper cannot effectively upper bound the denominator of the one-sided bound (see Eq.~\eqref{eq-318215}) for this distinguishing problem. We suspect that new techniques are needed to fully address this problem.
Another direction is to find possible improvement on the upper bound \(O(\sqrt{d}\cdot \epsilon^{-2})\). For example, can it be improved to \(O(\sqrt{d}\cdot \epsilon^{-\alpha}+\epsilon^{-2})\) for some \(\alpha<2\)? We conjecture that this is possible if a more careful and refined error analysis for Algorithm~\ref{alg-ortho} can be established.

\section{Preliminaries}
We use \(\mathbb{M}_d\) to denote the set of all \(d\times d\) complex-valued matrices. \(\mathbb{M}_d\) can be formed as a \(d^2\)-dimensional vector space \(\mathbb{C}^{d^2}\) by simply flattening the matrices to vectors (row-major). For a \(d\times d\) matrix \(A\), we will use \(\kett{A}\) to denote the corresponding element in this vector space. For example,
\begin{equation}
\kett{\ketbra{\psi}{\phi}}=\ket{\psi}\ket{\phi^*}, \quad\quad\quad \kett{ABC^\dag}=A\otimes C^* \kett{B},
\end{equation}
where \(\ket{\phi^*},C^*\) are the entry-wise complex conjugate of \(\ket{\phi},C\) (w.r.t. the computational basis), respectively. The inner product of this vector space is thus defined by \(\bbrakett{A}{B}=\tr(A^\dag B)\).

\subsection{Quantum channels}
In this paper, we consider the general case that the quantum channel \(\mathcal{E}\) can be non-trace-preserving, which covers a wide range of quantum processes such as the qubit loss~\cite{wallman2015robust}, post-selection~\cite{knill2001scheme} and quantum programs~\cite{ying2016foundations}.
\vspace{1mm}

\textbf{Matrix representation}. A quantum channel \(\mathcal{E}: \rho\mapsto \sum_i E_i\rho E_i^\dag\) acting on a \(d\)-dimensional system can be formed as a linear map \(\mathbb{M}_d\rightarrow \mathbb{M}_d\) such that \(\mathcal{E}\kett{\rho}=\kett{\mathcal{E}(\rho)}\), where its matrix form can be written as:
\begin{equation}
\mathfrak{M}(\mathcal{E})=\sum_i E_i\otimes E_i^*
\end{equation}
with $E_i^*$ denoting the entry-wise complex conjugation of $E_i$. We call \(\mathfrak{M}(\mathcal{E})\) the matrix representation of quantum channel \(\mathcal{E}\). In a slight abuse of notation, we directly use \(\mathcal{E}\) to denote \(\mathfrak{M}(\mathcal{E})\), when its meaning is clear from the context. We can see that \(\tr(\mathcal{E})=\sum_i |\tr(E_i)|^2\) and \(\tr(\mathcal{E}^\dag\mathcal{E})=\sum_{i,j}|\tr(E_i^\dag E_j)|^2\).
We will use \(\mathbb{U}_d\) to denote the set of all \(d\times d\) unitary matrices. For an unitary \(U\), we will use the calligraphic letter \(\mathcal{U}\) to denote the corresponding quantum channel \(\rho \mapsto U\rho U^\dag\).
\vspace{1mm}

\textbf{Choi-Jamio{\l}kowski isomorphism}. The Jamio{\l}kowski state of a quantum channel \(\mathcal{E}:\rho\mapsto \sum_i E_i\rho E_i^\dag\) is defined by:
\begin{equation}
\mathfrak{J}(\mathcal{E}):=(\mathcal{E}\otimes\mathcal{I})(\ketbra{\Phi}{\Phi}),
\end{equation}
where \(\ket{\Phi}=\frac{1}{\sqrt{d}}\sum_i \ket{i}\ket{i}\) is the maximally entangled state. Note that we can also write the Jamio{\l}kowski state as \(\mathfrak{J}(\mathcal{E})=\frac{1}{d}\sum_i \kettbbra{E_i}{E_i}\).

\subsection{Unitarity}
The unitarity measures how much \(\mathcal{E}\) is unitary. There is an ingenious connection between the absolute purity of states and absolute unitarity of quantum channels via Choi-Jamio{\l}kowski isomorphism, i.e., the quantum channel \(\mathcal{E}\) is a unitary channel if and only if its Jamio{\l}kowski state \(\mathfrak{J}(\mathcal{E})\) is a pure state~\cite{montanaro2013survey}. Thus the unitarity can be defined as a generalization of purity for quantum channels.
\begin{definition}[Unitarity]\label{def-127114}
Let \(\mathcal{E}\) be a quantum channel (not necessarily trace-preserving), its unitarity is defined by,
\begin{equation}
\mathfrak{u}(\mathcal{E}):=\tr\left(\mathfrak{J}(\mathcal{E})^2\right),
\end{equation}
where \(\mathfrak{J}(\mathcal{E})\) is the Jamio{\l}kowski state of \(\mathcal{E}\).
\end{definition}
The following proposition shows several important properties. It is worth noting that, the third property provides a preferred form of unitarity using the matrix representation, which will be used frequently in our paper.
\begin{proposition}
For any quantum channel \(\mathcal{E}\),
\begin{enumerate}
\item \(\mathfrak{u}(\mathcal{E})\leq 1\) with equality if and only if \(\mathcal{E}\) is unitary,
\item the unitarity is invariant under unitary transformations, i.e., \(\mathfrak{u}(\mathcal{E})=\mathfrak{u}(\mathcal{U}\mathcal{E}\mathcal{V})\) for any unitaries \(U,V\),
\item the unitarity can be defined equivalently as \(\mathfrak{u}(\mathcal{E})=\frac{1}{d^2}\tr\left(\mathcal{E}^\dag\mathcal{E}\right)\).
\end{enumerate}
\end{proposition}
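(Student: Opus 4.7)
The plan is to establish the three properties in the logical order (3), (2), (1). Property~3 is obtained by a direct Kraus-operator computation, which then makes property~2 nearly automatic via the unitary invariance of $\tr(A^\dagger A)$; property~1 follows from the positive-semidefiniteness of the Jamio{\l}kowski state together with the trace-non-increasing property of $\mathcal{E}$.

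For property~3, I fix a Kraus representation $\mathcal{E}(\rho) = \sum_i E_i \rho E_i^\dagger$. Using $\mathfrak{M}(\mathcal{E}) = \sum_i E_i \otimes E_i^*$, the product $\mathfrak{M}(\mathcal{E})^\dagger \mathfrak{M}(\mathcal{E})$ expands into $\sum_{i,j}(E_i^\dagger E_j)\otimes (E_i^T E_j^*)$, whose trace factors as $\sum_{i,j}\tr(E_i^\dagger E_j)\,\overline{\tr(E_i^\dagger E_j)}=\sum_{i,j}|\tr(E_i^\dagger E_j)|^2$. On the other hand, using $\mathfrak{J}(\mathcal{E}) = \frac{1}{d}\sum_i \kettbbra{E_i}{E_i}$ together with $\bbrakett{A}{B} = \tr(A^\dagger B)$ yields $\tr(\mathfrak{J}(\mathcal{E})^2) = \frac{1}{d^2}\sum_{i,j}|\tr(E_i^\dagger E_j)|^2$, matching the previous expression up to the factor $1/d^2$. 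For property~2, I note that the matrix representation is multiplicative, so $\mathfrak{M}(\mathcal{U}\mathcal{E}\mathcal{V}) = (U\otimes U^*)\,\mathfrak{M}(\mathcal{E})\,(V\otimes V^*)$, where both $U\otimes U^*$ and $V\otimes V^*$ are unitaries on $\mathbb{C}^{d^2}$. Combining this with property~3 and cyclicity of trace then gives $\mathfrak{u}(\mathcal{U}\mathcal{E}\mathcal{V}) = \mathfrak{u}(\mathcal{E})$.

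For property~1, I use that $\mathfrak{J}(\mathcal{E})$ is positive semidefinite (since $\mathcal{E}\otimes\mathcal{I}$ is completely positive) and satisfies $\tr\mathfrak{J}(\mathcal{E})\leq 1$ (since $\mathcal{E}$ is trace-non-increasing, via $\tr\mathfrak{J}(\mathcal{E}) = \frac{1}{d}\tr\mathcal{E}(I)\leq 1$). Letting $\lambda_k\geq 0$ denote its eigenvalues, one obtains
$$\tr(\mathfrak{J}(\mathcal{E})^2) \;=\; \sum_k \lambda_k^2 \;\leq\; \Bigl(\sum_k \lambda_k\Bigr)^2 \;=\; (\tr\mathfrak{J}(\mathcal{E}))^2 \;\leq\; 1.$$
The main delicacy, which I expect to be the only mildly nontrivial step, is the equality analysis: saturation in both inequalities forces $\mathfrak{J}(\mathcal{E})$ to be simultaneously rank-one and trace-one, i.e., a pure state. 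Via the Choi-Jamio{\l}kowski correspondence this means $\mathcal{E}$ has a single effective Kraus operator $E$ with $\tr(E^\dagger E) = d$, and combining this with the trace-non-increasing constraint $E^\dagger E \leq I$ in $\mathbb{M}_d$ forces $E^\dagger E = I$, so $E$ is unitary and $\mathcal{E}$ is a unitary channel.
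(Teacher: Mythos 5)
Your proposal is correct and follows essentially the same route as the paper: property 3 via the same Kraus-operator computation identifying $\tr\left(\mathfrak{J}(\mathcal{E})^2\right)$ with $\frac{1}{d^2}\sum_{i,j}\left|\tr\left(E_i^\dag E_j\right)\right|^2=\frac{1}{d^2}\tr\left(\mathcal{E}^\dag\mathcal{E}\right)$, and property 2 as an immediate consequence of the unitary invariance of this trace. Your equality analysis for property 1 (purity at most one, with saturation forcing a rank-one, trace-one Jamio{\l}kowski state and hence a single Kraus operator $E$ with $E^\dag E=I$) simply supplies the details behind what the paper labels ``obvious'', so there is no substantive difference.
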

\begin{proof}
The first property is obvious. The third property can be obtained by:
\begin{equation}
\begin{split}
\tr(\mathfrak{J}(\mathcal{E})^2)&=\frac{1}{d^2}\tr\left[\sum_i \kettbbra{E_i}{E_i}\sum_j \kettbbra{E_j}{E_j}\right]=\frac{1}{d^2}\sum_{i,j} \bbrakett{E_i}{E_j}\bbrakett{E_j}{E_i}\\
&=\frac{1}{d^2}\sum_{i,j}\tr\left(E_i^\dag E_j\right)\tr\left(E_j^\dag E_i\right)=\frac{1}{d^2}\sum_{i,j}\tr\left[\left(E_i\otimes E_i^*\right)^\dag \left(E_j\otimes E_j^*\right)\right]\\
&=\frac{1}{d^2}\tr\left(\mathcal{E}^\dag\mathcal{E}\right).
\end{split}
\end{equation}
Then, the second property follows immediately.
\end{proof}

The unitarity \(\mathfrak{u}(\mathcal{E})\) gives good characterizations for the maximal fidelity of \(\mathcal{E}\) to a unitary channel. Therefore, it has important applications in quantum process certification and benchmarking, which will be discussed in Section~\ref{sec-127110}. Moreover, there is an alternative definition~\cite{wallman2015estimating} of unitarity which has been used in characterizing the coherence of noise channels, and is closely related to the unitarity given in Definition~\ref{def-127114}. We will show in the Appendix~\ref{sec-ver} that, the same lower and upper bounds also apply to this alternative definition.

\section{Upper bounds}\label{sec-12241419}
\subsection{A unified and efficient framework for unitarity estimation}
We start by considering two simple but  important properties of unitary channels: purity-preservation and orthogonality-preservation. That is, any unitary channel acting on a pure state still outputs a pure state, and any unitary channel acting on a pair of orthogonal states separately still outputs a pair of orthogonal states. We will show that, the unitarity can be efficiently characterized by the quantifications of these two properties.
\begin{definition}[Purity-preservation index, \(\mathfrak{p}(\mathcal{E})\)]
The purity-preservation index \(\mathfrak{p}(\mathcal{E})\) is defined to measure how much purity the channel \(\mathcal{E}\) preserves when acting on a random pure state \(\ket{\psi}=U\ket{0}\), in which \(U\) is a Haar random unitary matrix. That is,
\begin{equation}
\mathfrak{p}(\mathcal{E})
:=\mathbb{E}_{\psi}\left[\tr\!\left(\mathcal{E}(\ketbra{\psi}{\psi})^2\right)\right].
\end{equation}
\end{definition}
\begin{definition}[Orthogonality-preservation index, \(\mathfrak{o}(\mathcal{E})\)]
The orthogonality-preservation index \(\mathfrak{o}(\mathcal{E})\) is defined to measure how much orthogonality the channel \(\mathcal{E}\) preserves when acting on a pair of random orthogonal pure states \((\ket{\psi},\ket{\phi})=(U\ket{0},U\ket{1})\), in which \(U\) is a Haar random unitary matrix. That is,
\begin{equation}
\mathfrak{o}(\mathcal{E})
:=\mathbb{E}_{\psi,\phi}\left[\tr\!\left(\mathcal{E}(\ketbra{\phi}{\phi})\mathcal{E}(\ketbra{\psi}{\psi})\right)\right].
\end{equation}
\end{definition}
Note that \(\mathfrak{p}(\mathcal{E})\leq 1\) with equality if and only if \(\mathcal{E}\) preserves the purity for all pure states. By contrast, \(\mathfrak{o}(\mathcal{E})\geq 0\) with equality if and only if \(\mathcal{E}\) preserves the orthogonality for all pairs of orthogonal pure states. Then we have an elegant connection between the unitarity $\mathfrak{u}(\mathcal{E})$ and the purity/orthogonality-preservation indices $\mathfrak{p}(\mathcal{E})$ and $\mathfrak{o}(\mathcal{E})$:

\begin{theorem}[Unitarity as purity-preservation and orthogonality-preservation]\label{thm-1217036}
\begin{equation}
\mathfrak{u}(\mathcal{E})=\mathfrak{p}(\mathcal{E})-\left(1-\frac{1}{d}\right)\mathfrak{o}(\mathcal{E}).
\end{equation}
\end{theorem}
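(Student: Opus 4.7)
The plan is to reduce both $\mathfrak{p}(\mathcal{E})$ and $\mathfrak{o}(\mathcal{E})$ to the same pair of elementary scalars via the SWAP trick together with standard Haar second-moment formulas, and then to observe that the particular linear combination $\mathfrak{p}-(1-1/d)\mathfrak{o}$ forces one of those scalars to cancel, leaving exactly $\mathfrak{u}(\mathcal{E})=\frac{1}{d^2}\tr(\mathcal{E}^\dag\mathcal{E})$.

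First I would evaluate the two Haar integrals $\rho_{\mathrm{sym}}:=\mathbb{E}_U[\ketbra{\psi}{\psi}^{\otimes 2}]$ and $\rho_{\perp}:=\mathbb{E}_U[\ketbra{\phi}{\phi}\otimes\ketbra{\psi}{\psi}]$ with $\ket{\psi}=U\ket{0}$ and $\ket{\phi}=U\ket{1}$. Both commute with $V\otimes V$ for every $V\in\mathbb{U}_d$, so by Schur--Weyl duality each lies in $\mathrm{span}\{I,F\}$, where $F$ denotes the swap on $\mathbb{C}^d\otimes\mathbb{C}^d$. Matching the traces $\tr(\cdot)$ and $\tr(F\cdot)$ against the ansatz $aI+bF$ (and using $\braket{0}{1}=0$ for the $\tr(F\rho_\perp)$ equation) pins down the well-known
\[
\rho_{\mathrm{sym}}=\frac{I+F}{d(d+1)}, \qquad \rho_{\perp}=\frac{dI-F}{d(d^2-1)}.
\]

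Next, using the SWAP identity $\tr(AB)=\tr[F(A\otimes B)]$, I would rewrite $\mathfrak{p}(\mathcal{E})=\tr[F\cdot(\mathcal{E}\otimes\mathcal{E})(\rho_{\mathrm{sym}})]$ and $\mathfrak{o}(\mathcal{E})=\tr[F\cdot(\mathcal{E}\otimes\mathcal{E})(\rho_{\perp})]$. After substituting Step~1, both become explicit rational combinations of just two scalars, $X:=\tr[F\cdot(\mathcal{E}\otimes\mathcal{E})(I)]=\tr(\mathcal{E}(I)^2)$ and $Y:=\tr[F\cdot(\mathcal{E}\otimes\mathcal{E})(F)]$. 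The one nontrivial calculation is to identify $Y$ with $\tr(\mathcal{E}^\dag\mathcal{E})=d^2\mathfrak{u}(\mathcal{E})$: expanding $(\mathcal{E}\otimes\mathcal{E})(F)=\sum_{k,\ell}(E_k\otimes E_\ell)F(E_k^\dag\otimes E_\ell^\dag)$ in any Kraus representation and repeatedly applying $F(A\otimes B)F=B\otimes A$ collapses the trace to $\sum_{k,\ell}|\tr(E_k^\dag E_\ell)|^2$, which is exactly the formula for $\tr(\mathcal{E}^\dag\mathcal{E})$ already recorded in the Preliminaries.

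Putting the pieces together,
\[
\mathfrak{p}(\mathcal{E})-\Bigl(1-\tfrac{1}{d}\Bigr)\mathfrak{o}(\mathcal{E})=\frac{X+Y}{d(d+1)}-\frac{dX-Y}{d^2(d+1)}=\frac{(d+1)Y}{d^2(d+1)}=\frac{Y}{d^2}=\mathfrak{u}(\mathcal{E}),
\]
with the $X$-contribution cancelling exactly. The hard part is really just the Kraus-level identification of $Y$ (keeping track of transposes and the commutation $F(A\otimes B)F=B\otimes A$); everything else is forced by representation theory. Conceptually, the theorem says that $\mathfrak{u}$ depends only on the ``diagonal'' scalar $Y$, whereas $\mathfrak{p}$ and $\mathfrak{o}$ each additionally pick up the ``off-diagonal'' scalar $X=\tr(\mathcal{E}(I)^2)$; the weight $1-1/d$ is the unique choice that kills $X$, which is precisely why the identity takes this form.
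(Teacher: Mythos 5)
Your proof is correct, but it takes a genuinely different route from the paper's. The paper works entirely in the vectorized (Liouville) picture: it twirls the superoperator $\mathcal{E}^\dag\mathcal{E}$ over the representation $\{U\otimes U^*\}$, invokes Schur's lemma to write $\overline{\mathcal{E}^\dag\mathcal{E}}=\alpha(\mathcal{I}-\Pi)+\beta\,\Pi$ with $\Pi=\kettbbra{I/\sqrt{d}}{I/\sqrt{d}}$, reads off $\alpha,\beta$ from $\mathfrak{p}(\mathcal{E})$ and $\mathfrak{o}(\mathcal{E})$, and finishes with $\mathfrak{u}(\mathcal{E})=\frac{1}{d^2}\tr\bigl(\overline{\mathcal{E}^\dag\mathcal{E}}\bigr)$. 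You instead twirl the two-copy \emph{input states} on $\mathbb{C}^d\otimes\mathbb{C}^d$ under $\{V\otimes V\}$, using the second-moment formulas $\mathbb{E}_U[\ketbra{\psi}{\psi}^{\otimes 2}]=(I+F)/(d(d+1))$ and $\mathbb{E}_U[\ketbra{\phi}{\phi}\otimes\ketbra{\psi}{\psi}]=(dI-F)/(d(d^2-1))$ (both of which you derive correctly from the commutant ansatz $aI+bF$), together with the swap trick, to express $\mathfrak{p}$ and $\mathfrak{o}$ through the two scalars $X=\tr(\mathcal{E}(I)^2)$ and $Y=\tr[F\,(\mathcal{E}\otimes\mathcal{E})(F)]$; your Kraus-level identification $Y=\sum_{k,\ell}|\tr(E_k^\dag E_\ell)|^2=\tr(\mathcal{E}^\dag\mathcal{E})$ and the final arithmetic (in which $X$ cancels exactly) both check out, including for non-trace-preserving $\mathcal{E}$. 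The two parametrizations carry the same information --- indeed $\beta=X/d$ and $(d^2-1)\alpha+\beta=Y$ --- so this is the same second-moment content organized dually. What your version buys is the explicit closed forms $\mathfrak{p}=(X+Y)/(d(d+1))$ and $\mathfrak{o}=(dX-Y)/(d(d^2-1))$, which make transparent why the weight $1-1/d$ is the unique one killing the ``non-unitary'' scalar $X$; the paper's version avoids any Kraus manipulation and leans directly on the identity $\mathfrak{u}(\mathcal{E})=\frac{1}{d^2}\tr(\mathcal{E}^\dag\mathcal{E})$ from its Proposition 1.
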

\begin{proof}
First, we can see that
\begin{equation}
\mathfrak{p}(\mathcal{E})=\mathbb{E}_{\psi}\left[\bbra{\psi} \mathcal{E}^\dag \mathcal{E}\kett{\psi}\right]
=\mathbb{E}_{U}\left[\bbra{\rho_0} \mathcal{U}^\dag\mathcal{E}^\dag \mathcal{E}\mathcal{U}\kett{\rho_0}\right]
=\int \textup{d}U \,\, \bbra{\rho_0} \mathcal{U}^\dag \mathcal{E}^\dag \mathcal{E} \mathcal{U}\kett{\rho_0},
\end{equation}
and
\begin{equation}
\mathfrak{o}(\mathcal{E})=\mathbb{E}_{\psi,\phi}\left[\bbra{\phi} \mathcal{E}^\dag \mathcal{E}\kett{\psi}\right]
=\mathbb{E}_{U}\left[\bbra{\rho_1} \mathcal{U}^\dag \mathcal{E}^\dag \mathcal{E}\mathcal{U}\kett{\rho_0}\right]
=\int \textup{d}U \,\, \bbra{\rho_1} \mathcal{U}^\dag \mathcal{E}^\dag \mathcal{E} \mathcal{U}\kett{\rho_0},
\end{equation}
where \(\rho_0=\ketbra{0}{0}\) and \(\rho_1=\ketbra{1}{1}\).
Then, note that \(\{\mathcal{U}=U\otimes U^*\}_{U\in\mathbb{U}_d}\) is a representation on \(\mathbb{C}^{d^2}\) for the group \(\mathbb{U}_d\) of \(d\)-dimensional unitaries, which has two irreducible invariant subspaces:
\begin{equation}
\left\{\kett{cI}\,|\,c\in\mathbb{C}\right\},\quad\quad \left\{\kett{X}\,|\,\tr(X)=0, X\in \mathbb{M}_d\right\}.
\end{equation}
Define \(\overline{\mathcal{E}^\dag\mathcal{E}}:=\int \textup{d}U\,\, \mathcal{U}^\dag \mathcal{E}^\dag \mathcal{E} \mathcal{U}\).
Then \(\overline{\mathcal{E}^\dag\mathcal{E}}\) commutes with all \(\mathcal{U}\). By Schur's lemma, \(\overline{\mathcal{E}^\dag \mathcal{E}}\) is a linear combination of the projectors of these two irreducible subspaces:
\begin{equation}
\overline{\mathcal{E}^\dag\mathcal{E}}=\alpha\left(\mathcal{I}-\kettbbra{\frac{I}{\sqrt{d}}}{\frac{I}{\sqrt{d}}}\right)+ \beta\kettbbra{\frac{I}{\sqrt{d}}}{\frac{I}{\sqrt{d}}}.
\end{equation}
Thus we have
\begin{equation}
\mathfrak{p}(\mathcal{E})=\bbra{\rho_0}\,\overline{\mathcal{E}^\dag\mathcal{E}}\,\kett{\rho_0}=\left(1-\frac{1}{d}\right)\alpha+\frac{1}{d}\beta,
\end{equation}
and
\begin{equation}
\mathfrak{o}(\mathcal{E})=\bbra{\rho_1} \,\overline{\mathcal{E}^\dag\mathcal{E}}\, \kett{\rho_0}=-\frac{1}{d}\alpha+\frac{1}{d}\beta,
\end{equation}
which imply
\begin{equation}
\alpha= \mathfrak{p}(\mathcal{E})-\mathfrak{o}(\mathcal{E}),\quad\quad \beta=(d-1)\mathfrak{o}(\mathcal{E})+\mathfrak{p}(\mathcal{E}).
\end{equation}
Thus,
\begin{equation}\label{eq-1130038}
\begin{split}
\mathfrak{u}(\mathcal{E})&=\frac{1}{d^2}\tr\left(\mathcal{E}^\dag\mathcal{E}\right)=\frac{1}{d^2}\tr\left(\overline{\mathcal{E}^\dag\mathcal{E}}\right)=\left(1-\frac{1}{d^2}\right)\alpha+\frac{1}{d^2}\beta\\
&= \mathfrak{p}(\mathcal{E})-\left(1-\frac{1}{d}\right)\mathfrak{o}(\mathcal{E}).
\end{split}
\end{equation}
\end{proof}
It is worth noting that, Theorem~\ref{thm-1217036} admits efficient estimation for the unitarity \(\mathfrak{u}(\mathcal{E})\) by separately estimating both $\mathfrak{p}(\mathcal{E})$ and $\mathfrak{o}(\mathcal{E})$. This is because the purity/orthogonality-preservation indexes only involve a quantum system of smaller dimension, compared to the method~\cite{montanaro2013survey} based on Jamio{\l}kowski states.

\subsection{Coherent access}
For simplicity, we only consider the orthogonality-preservation index $\mathfrak{o}(\mathcal{E})$, and the purity-preservation index \(\mathfrak{p}(\mathcal{E})\) can be treated through a similar manner. The overall idea is simple, i.e., 1) pick a Haar random unitary \(U\) and define the input states \(\ket{\psi}=U\ket{0},\ket{\phi}=U\ket{1}\); 2) apply \(\mathcal{E}^{\otimes 2}\) on \(\ketbra{\psi}{\psi}\otimes \ketbra{\phi}{\phi}\) to obtain \(\rho\otimes \sigma\) where \(\rho=\mathcal{E}(\ketbra{\psi}{\psi})\), \(\sigma=\mathcal{E}(\ketbra{\phi}{\phi})\); and then 3) estimate \(\tr(\rho\sigma)\) using SWAP test.
It is worth noting that, the output states \(\rho,\sigma\) can be partial density operators (i.e., \(\tr(\rho),\tr(\sigma)\leq 1\)), since we do not assume the trace-preservation of \(\mathcal{E}\). However, this is not a problem since the SWAP test can be easily extended to handle this case (see Appendix~\ref{sec-01190123}).
Therefore, we can estimate \(\mathfrak{o}(\mathcal{E})\) using Algorithm~\ref{alg-ortho-co}, and its performance is guaranteed by Theorem~\ref{thm-01182227}.

\begin{algorithm}[ht]
    \caption{\raggedright Estimating \(\mathfrak{o}(\mathcal{E})\) with coherent access}\label{alg-ortho-co}
    \begin{algorithmic}[1]
    \Require number of input settings \(M\), coherent access to the quantum channel \(\mathcal{E}\) acting on a \(d\)-dimensional system
    \Ensure an estimate of $\mathfrak{o}(\mathcal{E})$
    \For {$i=1\cdots M$}
        \State sample a random unitary matrix $U_i\sim\mathbb{U}(d)$
        \State let \(\ket{\psi_i}=U_i\ket{0}\) and \(\ket{\phi_i}=U_i\ket{1}\)
        \State let \(\rho_i=\mathcal{E}(\ketbra{\psi_i}{\psi_i})\) and \(\sigma_i=\mathcal{E}(\ketbra{\phi_i}{\phi_i})\)
        \State estimate the value of $\tr\left( \rho_i \sigma_i \right)$ using the extended SWAP test on \(\rho_i\) and \(\sigma_i\), denote by \(z_i\)
    \EndFor
    \State\textbf{Return} $z:=\frac{1}{M}\sum_{i=1}^M  z_i$
    \end{algorithmic}
\end{algorithm}

\begin{theorem}[Upper bound, coherent access]\label{thm-01182227}
Set \(M=\Theta(\epsilon^{-2})\). Algorithm~\ref{alg-ortho-co} returns an \(\epsilon\)-close estimate for \(\mathfrak{o}(\mathcal{E})\) with success probability at least \(2/3\), using \(O(\epsilon^{-2})\) calls to \(\mathcal{E}\) and an \(O(d)\)-dimensional ancilla system.
\end{theorem}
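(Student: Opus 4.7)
The plan is to show that Algorithm~\ref{alg-ortho-co} produces an unbiased estimator of $\mathfrak{o}(\mathcal{E})$ whose per-sample variance is a bounded constant, and then to conclude via Chebyshev's inequality. First, I would fix $U_i$ and condition on it; the extended SWAP test (proved in the appendix to extend cleanly to partial density operators) returns an estimator $z_i$ satisfying
\begin{equation}
\mathbb{E}\bigl[z_i \,\big|\, U_i\bigr] = \tr\!\bigl(\rho_i \sigma_i\bigr),
\end{equation}
with $|z_i| \le 1$. Taking expectation over the Haar-random $U_i$ and using the definition of $\mathfrak{o}(\mathcal{E})$ gives $\mathbb{E}[z_i] = \mathfrak{o}(\mathcal{E})$, so the overall estimator $z = \frac{1}{M}\sum_{i=1}^M z_i$ is unbiased.

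Next, I would control the variance. Because $|z_i|\le 1$, we have $\Var(z_i)\le 1$, and since the $(U_i, z_i)$ across different iterations are independent, $\Var(z) \le 1/M$. Chebyshev then yields
\begin{equation}
\Pr\bigl[\, |z - \mathfrak{o}(\mathcal{E})| > \epsilon \,\bigr] \;\le\; \frac{\Var(z)}{\epsilon^2} \;\le\; \frac{1}{M\epsilon^2},
\end{equation}
so choosing $M = \Theta(\epsilon^{-2})$ with a suitable constant drives the failure probability below $1/3$.

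Finally, I would tally resources. Each iteration makes exactly two calls to $\mathcal{E}$ (one to produce $\rho_i$, one to produce $\sigma_i$), so the total query cost is $2M = O(\epsilon^{-2})$. For the ancilla count, I would describe the circuit explicitly: prepare $|\psi_i\rangle$ on the main system, apply $\mathcal{E}$, swap the result into a $d$-dimensional ancilla register; then prepare $|\phi_i\rangle$ on the main system and apply $\mathcal{E}$ again; then run the SWAP test between the main system and the ancilla register using one extra qubit of ancilla. Storing one intermediate $d$-dimensional state plus a control qubit gives an $O(d)$-dimensional ancilla, as claimed.

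I do not expect any sharp technical obstacle here; the only delicate point is making sure the extended SWAP test truly delivers an unbiased bounded estimator of $\tr(\rho\sigma)$ when $\rho,\sigma$ are partial density operators (since $\mathcal{E}$ need not be trace-preserving). This is exactly what the appendix construction is designed to handle, so once that lemma is cited the remainder is a routine Chebyshev argument. A secondary cosmetic remark: the same analysis applies verbatim to $\mathfrak{p}(\mathcal{E})$ by taking $|\phi_i\rangle = |\psi_i\rangle$ (or running the SWAP test on two independent copies of $\rho_i$), so combining the two estimates via Theorem~\ref{thm-1217036} costs only an additional constant factor, giving the overall $O(\epsilon^{-2})$ bound for unitarity estimation with an $O(d)$-dimensional ancilla.
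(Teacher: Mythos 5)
Your proposal is correct and follows essentially the same route as the paper: an unbiased, bounded estimator $z_i$ obtained from the extended SWAP test conditioned on the Haar-random $U_i$, averaged over $M=\Theta(\epsilon^{-2})$ independent settings, with the same accounting of two queries per iteration and a single stored $d$-dimensional output (plus a control qubit) for the $O(d)$ ancilla. The only difference is that you invoke Chebyshev via $\Var(z_i)\le 1$ where the paper uses Hoeffding via $|z_i|\le 1$; both give the claimed constant success probability at $M=\Theta(\epsilon^{-2})$.
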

\begin{proof}
In line 5 of Algorithm~\ref{alg-ortho-co}, we obtain a value \(z_i\) such that \(\mathbb{E}[z_i|U_i]=\tr(\rho_i\sigma_i)\). Thus
\begin{equation}\mathbb{E}[z_i]=\mathbb{E}_{\psi_i,\phi_i}[\tr(\mathcal{E}(\ketbra{\psi_i}{\psi_i})\mathcal{E}(\ketbra{\phi_i}{\phi_i}))]=\mathfrak{o}(\mathcal{E}).
\end{equation}
Since \(|z_i|\leq 1\), by Hoeffding's inequality, we have
\begin{equation}
\Pr \left[ \left|  z - \mathfrak{o}(\mathcal{E}) \right| < \epsilon \right] \geq 1 - 2 \exp\left( 
-M\epsilon^2/2 \right).
\end{equation}
We take \(M=\Theta(\epsilon^{-2})\), then \(1-2\exp(-M\epsilon^{2}/2)=\Omega(1)\). This means we can estimate \(\mathfrak{o}(\mathcal{E})\) to precision \(\epsilon\) using \(O(\epsilon^{-2})\) calls to \(\mathcal{E}\). We also note that, the algorithm only needs an \(O(d)\)-dimensional ancilla system to additionally store a single-copy of the output of \(\mathcal{E}\) for the SWAP test.
\end{proof}
\(\mathfrak{p}(\mathcal{E})\) can be estimated through a similar manner. Therefore, we can estimate the value of \(\mathfrak{u}(\mathcal{E})\), using \(O(\epsilon^{-2})\) calls to \(\mathcal{E}\) and an \(O(d)\)-dimensional ancilla system. By contrast, prior upper bound \(O(\epsilon^{-2})\) obtained in  \cite{montanaro2013survey} needs an \(\Omega(d^3)\)-dimensional ancilla system, since they apply SWAP test on the Jamio{\l}kowski states, which are of much larger dimension.

\subsection{Incoherent access}
For simplicity, we only consider the orthogonality-preservation index $\mathfrak{o}(\mathcal{E})$, and the purity-preservation index \(\mathfrak{p}(\mathcal{E})\) can be treated through a similar manner. The overall idea is straightforward, i.e., replacing the swap test with the distributed quantum inner product estimation (DQIPE)~\cite{anshu2022distributed}. However, since the quantum channel \(\mathcal{E}\) can be non-trace-preserving, its output states are partial density operators, i.e. \(\tr(\mathcal{E}(\rho))\leq 1\). Direct application of the original DQIPE could fail since we may require arbitrarily many experiments to obtain enough valid samples for the collision estimator. Therefore, an extended version of DQIPE (see Algorithm~\ref{alg_einnerproduct}) is adopted to accommodate the partial density operators, and we show that its output is unbiased and has the same error bounds as those of original DQIPE (the details are presented in Appendix~\ref{sec-1130109}). Then, the performance of our unitarity estimation algorithm can be guaranteed by analyzing different types of errors separately. More details are shown in Algorithm~\ref{alg-ortho} and Theorem~\ref{thm-12111608}.

\begin{algorithm}[ht]
    \caption{\raggedright Estimating \(\mathfrak{o}(\mathcal{E})\) with incoherent access}\label{alg-ortho}
    \begin{algorithmic}[1]
    \Require number of input settings \(M\), parameters $N,m$ for inner product estimation algorithm, incoherent access to the quantum channel \(\mathcal{E}\) acting on a \(d\)-dimensional system
    \Ensure an estimate of $\mathfrak{o}(\mathcal{E})$
    \For {$i=1\cdots M$}
        \State sample a random unitary matrix $U_i\sim\mathbb{U}(d)$
        \State let \(\ket{\psi_i}=U_i\ket{0}\) and \(\ket{\phi_i}=U_i\ket{1}\)
        \State let \(\rho_i=\mathcal{E}(\ketbra{\psi_i}{\psi_i})\) and \(\sigma_i=\mathcal{E}(\ketbra{\phi_i}{\phi_i})\)
        \State estimate the value of $\tr\left( \rho_i \sigma_i \right)$ by Algorithm~\ref{alg_einnerproduct} using \(2Nm\) samples of \(\rho_i\) and \(\sigma_i\), denote by \(z_i\)
    \EndFor
    \State\textbf{Return} $z:=\frac{1}{M}\sum_{i=1}^M z_i$
    \end{algorithmic}
\end{algorithm}
\begin{theorem}[Upper bound, incoherent access]\label{thm-12111608}
Set $M = \Theta(\epsilon^{-2})$, $N = \Theta(1)$ and $m = \Theta(\sqrt{d})$. Algorithm \ref{alg-ortho} returns an \(\epsilon\)-close estimate for \(\mathfrak{o}(\mathcal{E})\) with success probability at least 2/3, using \(O(\sqrt{d}\cdot\epsilon^{-2})\) calls to \(\mathcal{E}\) and no ancilla system.
\end{theorem}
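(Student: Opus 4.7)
The overall plan is to mirror the structure of the coherent-access proof (Theorem~\ref{thm-01182227}), but with the SWAP test replaced by the extended DQIPE subroutine of Algorithm~\ref{alg_einnerproduct}, and with a more careful two-source variance decomposition to account for the fact that the DQIPE estimator, unlike a SWAP-test bit, is not a bounded random variable with variance $O(1/M)$ after averaging $M$ independent runs.

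First, I would handle the query count and the unbiasedness. Each of the $M$ outer loops uses $2Nm$ calls to $\mathcal{E}$, giving a total of $2MNm = \Theta(\sqrt{d}\cdot\epsilon^{-2})$ calls; no ancilla is needed because each $\rho_i$ and $\sigma_i$ is prepared afresh and fed directly into the single-copy measurements of DQIPE. For unbiasedness, the key fact (shown in Appendix~\ref{sec-1130109} for the non-trace-preserving extension of DQIPE) is that $\mathbb{E}[z_i\mid U_i]=\tr(\rho_i\sigma_i)$, which upon taking expectation over the Haar measure yields $\mathbb{E}[z_i]=\mathfrak{o}(\mathcal{E})$ by the definition of the orthogonality-preservation index.

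The heart of the argument is the variance bound, which I would obtain via the law of total variance applied to each $z_i$:
\begin{equation}
\Var(z_i)=\mathbb{E}_{U_i}\bigl[\Var(z_i\mid U_i)\bigr]+\Var_{U_i}\bigl(\tr(\rho_i\sigma_i)\bigr).
\end{equation}
For the inner term, I would invoke the variance guarantee for the extended DQIPE from Appendix~\ref{sec-1130109}, which for two (possibly sub-normalized) states with purities at most $1$ and with parameters $N=\Theta(1)$, $m=\Theta(\sqrt d)$ gives $\Var(z_i\mid U_i)=O\!\left(\tfrac{1}{Nm}+\tfrac{d}{Nm^2}\right)=O(1)$. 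For the outer term, the crude bound $|\tr(\rho_i\sigma_i)|\le 1$ already yields $\Var_{U_i}(\tr(\rho_i\sigma_i))\le 1$. Combining, $\Var(z_i)=O(1)$, and because the $z_i$ are i.i.d., $\Var(z)=\Var(z_1)/M = O(\epsilon^{2})$. Chebyshev's inequality then gives $\Pr[|z-\mathfrak{o}(\mathcal E)|\ge\epsilon]\le O(1)$, which is made at most $1/3$ by choosing the hidden constant in $M=\Theta(\epsilon^{-2})$ large enough.

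The main obstacle I anticipate is invoking the DQIPE variance bound in the non-trace-preserving regime with the correct dependence on $d$, $N$, and $m$: in the standard DQIPE analysis one uses that each measured sample comes from a genuine density operator, whereas here a call to $\mathcal{E}$ may yield no valid outcome with positive probability. I would sidestep this by citing the extended-DQIPE lemma from Appendix~\ref{sec-1130109} (whose correctness is handled separately), and then verifying that plugging in $\tr(\rho_i^2),\tr(\sigma_i^2)\le 1$ and the chosen $N,m$ indeed produces the $O(1)$ bound above. A subsidiary check is to confirm that the Haar-variance term is genuinely $O(1)$ uniformly in $\mathcal E$, which is immediate from the operator-norm bound $\|\rho_i\|,\|\sigma_i\|\le 1$; no finer control (e.g.\ a $1/d$ improvement) is needed to hit the target total complexity.
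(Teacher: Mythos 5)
Your proposal is correct, and it reaches the stated complexity with a slightly different (and somewhat more streamlined) concentration argument than the paper. The paper decomposes the output as $z=E+e$ with $E_i=\mathbb{E}[z_i\mid U_i]=\tr(\rho_i\sigma_i)$ and $e_i=z_i-E_i$, then controls the two error sources with different tools: Hoeffding's inequality for the bounded i.i.d.\ terms $E_i$ (the Haar-sampling fluctuation) and the law of total variance plus Chebyshev for the DQIPE noise $e$, finally combining the two events to get a $2\epsilon$-accurate estimate with constant probability. You instead fold both sources into a single application of the law of total variance, bounding $\Var(z_i)=\mathbb{E}[\Var(z_i\mid U_i)]+\Var_{U_i}(\tr(\rho_i\sigma_i))=O(1)$ (the second term trivially since $0\le\tr(\rho_i\sigma_i)\le 1$), so that $\Var(z)=O(1/M)$ and one Chebyshev step finishes the proof; this avoids the union bound and the $2\epsilon\to\epsilon$ rescaling, at the cost of giving up the exponential-in-$M$ concentration that Hoeffding provides for the Haar-sampling part (irrelevant here, since the DQIPE part is Chebyshev-limited anyway, and constant success probability is all that is claimed). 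Both arguments rest on the same key external ingredient, the unbiasedness and variance guarantee of the extended DQIPE (Proposition~\ref{tm-1128230}), instantiated at $N=\Theta(1)$, $m=\Theta(\sqrt d)$; your dropping of the dominated $1/(Nd)$ term and your query count $\Theta(MNm)=\Theta(\sqrt d\cdot\epsilon^{-2})$ are both fine.
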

\begin{proof}
In line 5 of Algorithm \ref{alg-ortho}, by Proposition \ref{tm-1128230}, we obtain a value $z_i$ such that $\mathbb{E}[z_i|U_i] = \tr\left( \rho_i \sigma_i \right)$ and
\begin{equation}
\Var\left( z_i | U_i \right) = O \left( \frac{1}{Nd} + \frac{d}{Nm^2} + \frac{1}{Nm} \right),
\end{equation}
using $2Nm$ samples of $\rho_i$ and $\sigma_i$.
Define $E_i := \mathbb{E}[ z_i|U_i]$ and $e_i := z_i - E_i$. Then the algorithm output \(z\) can be decomposed into two parts:
\begin{equation}
z = \frac 1 M \sum_{i=1}^M E_i + \frac 1 M \sum_{i=1}^M e_i = E + e,
\end{equation}
where \(E:=\frac{1}{M}\sum_{i=1}^M E_i\) and \(e:=\frac{1}{M}\sum_{i=1}^M e_i\).
We will then bound the error of \(E\) and \(e\) separately. 
\begin{enumerate}
\item Errors of $E$: note that \(E_i =\tr(\rho_i\sigma_i)\), thus $\mathbb{E}[E_i] = \mathfrak{o}(\mathcal{E})$, and $0 \leq E_i \leq 1$. By Hoeffding's inequality, we have
\begin{equation}
\Pr \left[ \left|  E - \mathfrak{o}(\mathcal{E}) \right| < \epsilon \right] \geq 1 - 2 \exp\left( 
-2M\epsilon^2 \right).
\end{equation}
\item Errors of $e$: note that $\mathbb{E} [e_i|U_i] = 0$, and $\Var\left( e_i | U_i \right) = \Var\left( z_i | U_i \right)$. By the law of total variance, we have
\begin{equation}
\begin{split}
\Var (e_i) 
& = \mathbb{E} [ \Var (e_i | U_i) ] + \Var ( \mathbb{E} [e_i|U_i] ) \\
& = \mathbb{E} [ \Var (z_i | U_i) ] + 0 \\
& \leq O \left( \frac{1}{Nd} + \frac{d}{Nm^2} + \frac{1}{Nm} \right).
\end{split}
\end{equation} Therefore, by the Chebyshev's inequality, 
\begin{equation}
\Pr \left[ \left|  e \right| < \epsilon \right] \geq 1 - \frac{\Var( e)}{\epsilon^2},
\end{equation}
where the variance of \(e\) can be bounded by
\begin{equation}
\Var ( e) = \Var \left( \frac{1}{M} \sum_{i=1}^M e_i \right) \leq O \left( \frac{1}{MNd} + \frac{d}{MNm^2} + \frac{1}{MNm} \right).
\end{equation}
\end{enumerate}
We take $M = \Theta(\epsilon^{-2})$, $N = \Theta(1)$ and $m = \Theta(\sqrt{d})$, then
\begin{equation}
\begin{split}
&\Pr \left[ \left|  E - \mathfrak{o}(\mathcal{E}) \right| < \epsilon \right] \geq 1 - 2 \exp\left(-2M\epsilon^2 \right) = \Omega(1),\\
&\Pr \left[ \left|  e \right| < \epsilon \right] \geq 1 - \Var( e)/\epsilon^2 = \Omega(1),
\end{split}
\end{equation}
which means
\begin{equation}
\Pr \left[|z-\mathfrak{o}(\mathcal{E})|\leq 2\epsilon\right]\geq \Omega(1).
\end{equation}
Therefore, the output \(z\) is an \(\epsilon\)-close estimate for \(\mathfrak{o}(\mathcal{E})\) with high probability, and the algorithm uses \(MNm = O(\sqrt{d} \cdot \epsilon^{-2})\) oracle calls.
\end{proof}
The purity-preservation index $\mathfrak{p}(\mathcal{E})$ can be estimated to precision \(\epsilon\) through a similar manner. Therefore, we can estimate the value of $\mathfrak{u}(\mathcal{E})$ using $O(\sqrt{d} \cdot \epsilon^{-2})$ oracle calls to \(\mathcal{E}\).

\section{Lower bounds}\label{sec-12241421}
\subsection{Prerequisites}
A basic tool of our hardness results is the tree representation for learning quantum channels introduced in \cite{chen2022exponential}. The tree representation targets on the learning algorithms that are provided only incoherent access/measurement to the unknown quantum channels/states, where the state of the learning algorithm is classically represented by a node. Another key point is that the tree representation well characterizes the adaptivity of the learning algorithms, since the choice of the experiment that is performed at each node can depend on all prior experimental outcomes. Therefore, the ``tree representation'' framework is typically used for the \textit{adaptive} and \textit{incoherent} setting, which is one of the settings considered in this paper. However, it is worth noting that for different learning tasks~\cite{chen2022exponential,huang2022quantum,chen2022tight,chen2022tight2}, the techniques used to analyze the tree can vary a lot.
\begin{definition}[Tree representation]
Consider a fixed quantum channel $\mathcal{E}$ acting on a $d$-dimensional subsystem of a Hilbert space $\Hmain \otimes \Hanc$ where $\Hmain$ is the ``main system'' of dimension \(d\) and $\Hanc$ is an ``ancilla system'' of dimension \(d'\). A learning algorithm with incoherent access to \(\mathcal{E}\) can be represented as a rooted tree $\mathcal{T}$ of depth $T$ such that each node encodes all measurement outcomes the algorithm has received thus far.  The tree has the following properties:
\begin{itemize}
    \item Each node $u$ has an associated probability $p^{\mathcal{E}}(u)$.
    \item The root of the tree $r$ has an associated probability $p^{\mathcal{E}}(r) = 1$.
    \item At each non-leaf node $u$, we prepare a state $|\phi_u\rangle$ on $\Hmain \otimes \Hanc$, apply the channel $\mathcal{E}$ onto \(\Hmain\), and measure a rank-1 POVM $\{w_u^v d d'\,|\psi_u^v\rangle \langle \psi_u^v|\}_v$ (which can depend on $u$) on the entire system to obtain a classical outcome $v$, where \(\sum_v w_u^v=1\) by normalization condition. Each POVM outcome $v$ corresponds to a child node $v$ of the node $u$, which is connected by the edge $e_{u, v}$. We refer to the set of child node of the node $u$ as $\mathrm{child}(u)$.
    \item If $v$ is a child node of $u$ connected by \(e_{u,v}\), then
    \begin{equation} \label{eq:prob-node}
    p^{\mathcal{E}}(v) = p^{\mathcal{E}}(u) \, w_u^v d d' \, \bra{\psi_u^v}\,(\mathcal{E} \otimes \mathcal{I}_{\text{\rm anc}})[\ketbra{\phi_u}{\phi_u}] \,\ket{\psi_u^v}\,.
    \end{equation}
    \item Each root-to-leaf path is of length $T$.  For a leaf of corresponding to node $\ell$, $p^{\mathcal{E}}(\ell)$ is the probability that the classical memory is in state $\ell$ after the learning procedure.
\end{itemize}
\end{definition}
Note that in the tree representation, pure input states and rank-\(1\) POVMs are used instead of the mixed states and general POVMs such as those shown in Fig.~\ref{fig-1282341}. However, this is without loss of generality, since any mixed state can be simulated by a pure state with additional ancilla system, and any POVM can be simulated by a rank-\(1\) POVM (see \cite{chen2022exponential} for more details). Then, we consider the following distinguishing task:
\begin{definition}[Two-hypothesis distinguishing task]
The following two events happen with equal probability:
\begin{itemize}
    \item The channel $\mathcal{E}$ is sampled from a probability distribution $D_A$ over channels.
    \item The channel $\mathcal{E}$ is sampled from a probability distribution $D_B$ over channels.
\end{itemize}
The goal is to distinguish whether $\mathcal{E}$ is sampled from $D_A$ or $D_B$.
\end{definition}
The following lemma is also needed.
\begin{lemma}[One-sided bound, see Lemma 5.4 in \cite{chen2022exponential}] \label{lem:onesided}
Consider a learning algorithm with incoherent access that is described by a rooted tree $\mathcal{T}$.
If we have
\begin{equation}\label{eq-318215}
\frac{ \mathbb{E}_{\mathcal{E}\sim D_A}[p^{\mathcal{\mathcal{E}}}(\ell)]}{\mathbb{E}_{\mathcal{E}\sim D_B}[p^{\mathcal{E}}(\ell)]} \geq 1 - \delta, \quad \forall \ell\in \mathrm{leaf}(\mathcal{T}),
\end{equation}
then the probability that the learning algorithm solves the two-hypothesis distinguishing task is upper bounded by $\delta$.
\end{lemma}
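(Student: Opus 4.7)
The plan is to turn the pointwise ratio hypothesis into a total variation distance bound between the two induced leaf distributions, and then apply the standard Le Cam / Neyman--Pearson argument for two-hypothesis testing.

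First, I would verify that the expected leaf probabilities form honest probability distributions. Define $\mu_A(\ell) := \mathbb{E}_{\mathcal{E}\sim D_A}[p^{\mathcal{E}}(\ell)]$ and $\mu_B(\ell) := \mathbb{E}_{\mathcal{E}\sim D_B}[p^{\mathcal{E}}(\ell)]$ for $\ell\in\mathrm{leaf}(\mathcal{T})$. For any fixed trace-preserving $\mathcal{E}$, induction on depth using Eq.~\eqref{eq:prob-node} combined with the rank-$1$ POVM completeness relation $\sum_v w_u^v dd'\,\ketbra{\psi_u^v}{\psi_u^v}=I$ gives
\[
\sum_{v\in\mathrm{child}(u)}p^{\mathcal{E}}(v) \;=\; p^{\mathcal{E}}(u)\cdot\tr\!\bigl[(\mathcal{E}\otimes\mathcal{I}_{\text{anc}})(\ketbra{\phi_u}{\phi_u})\bigr] \;=\; p^{\mathcal{E}}(u),
\]
so $\sum_{\ell}p^{\mathcal{E}}(\ell)=p^{\mathcal{E}}(r)=1$; linearity of expectation transfers this to $\mu_A$ and $\mu_B$.

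Next, I would convert the hypothesis into a total variation distance bound. Rewriting $\mu_A(\ell)\ge(1-\delta)\mu_B(\ell)$ as $\mu_B(\ell)-\mu_A(\ell)\le \delta\,\mu_B(\ell)$ and summing over the set of leaves on which $\mu_B>\mu_A$ yields
\[
d_{\mathrm{TV}}(\mu_A,\mu_B) \;=\; \sum_{\ell}\max\!\bigl(\mu_B(\ell)-\mu_A(\ell),\,0\bigr) \;\le\; \delta\sum_{\ell}\mu_B(\ell) \;=\; \delta.
\]

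Finally, any learning algorithm for the distinguishing task reduces to a (deterministic, without loss of generality) decision rule $g:\mathrm{leaf}(\mathcal{T})\to\{A,B\}$, so the distinguishing advantage is
\[
\bigl|\mu_A(g^{-1}(A))-\mu_B(g^{-1}(A))\bigr| \;\le\; d_{\mathrm{TV}}(\mu_A,\mu_B) \;\le\; \delta,
\]
which is the claimed upper bound on the probability of solving the two-hypothesis task (interpreting ``probability of solving'' as the distinguishing advantage over random guessing, as is standard in the tree-representation framework). The main obstacle I anticipate is purely bookkeeping: in the non-trace-preserving setting considered elsewhere in the paper, the leaf ``probabilities'' are sub-probabilities with $\sum_\ell p^\mathcal{E}(\ell)\le 1$, and one must verify that the TVD step still goes through with $\sum_\ell\mu_B(\ell)\le 1$, which preserves the $\le\delta$ bound. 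Beyond this, the argument is the clean passage from a one-sided pointwise ratio to a one-sided measure-theoretic bound on the induced leaf distributions, and no deeper ingredient is required.
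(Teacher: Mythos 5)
Your argument is correct in substance, and it is the standard one: note that the paper itself gives no proof of this lemma at all, importing it as Lemma~5.4 of \cite{chen2022exponential}, so your Le Cam-style passage from the pointwise one-sided ratio to a bound on the induced leaf distributions is exactly the expected reconstruction. Your normalization step (POVM completeness plus trace preservation giving $\sum_{\ell}p^{\mathcal{E}}(\ell)=1$, hence $\mu_A,\mu_B$ are genuine distributions) is the right justification, and in the only place the lemma is invoked (Theorem~\ref{theorem-11232021}, depolarizing vs.\ unitary) the channels are trace preserving, so your sub-probability caveat does not actually arise. The step $\sum_{\ell}\max(\mu_B(\ell)-\mu_A(\ell),0)\le\delta\sum_{\ell}\mu_B(\ell)\le\delta$ and the reduction to a deterministic decision rule are both fine; one could equally run the computation directly on the acceptance set of the rule without naming total variation, which is the more common presentation, but the content is identical.

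The one point worth making explicit is \emph{which} quantity your proof bounds by $\delta$. What you actually establish is $d_{\mathrm{TV}}(\mu_A,\mu_B)\le\delta$, equivalently a bias $\bigl|\mu_A(g^{-1}(A))-\mu_B(g^{-1}(A))\bigr|\le\delta$, which translates into a Bayes success probability of at most $\tfrac12+\tfrac{\delta}{2}$ under the half--half prior. The literal wording of the lemma, ``the probability that the learning algorithm solves the task is upper bounded by $\delta$,'' cannot be read as a bound on that success probability when $\delta<\tfrac12$ (random guessing already attains $\tfrac12$), so your reinterpretation as the distinguishing advantage is the only tenable reading; this looseness is inherited from the cited statement, not introduced by you. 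Either reading supports the downstream use: requiring success $2/3$ forces $\delta=T^2/d\ge\Omega(1)$, hence $T\ge\Omega(\sqrt d)$, so there is no gap in the argument, only a phrasing mismatch with the lemma as stated.
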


\subsection{Depolarizing vs unitary channel}
To prove the hardness of unitarity estimation, we first consider the following problem:
\begin{problem}[Depolarizing vs unitary channel]\label{depolar-unitary}
Suppose that a quantum channel \(\mathcal{E}\) acting on a \(d\)-dimensional system is one of the following with equal probability:
\begin{enumerate}
    \item \(\mathcal{E}\) is the completely depolarizing channel \(\mathcal{D}\),
    \item \(\mathcal{E}\) is the unitary channel \(\mathcal{U}:\rho\mapsto U\rho U^\dag\) for \(U\) a fixed, Haar-random unitary.
\end{enumerate}
The task is to distinguish between the above two cases.
\end{problem}
Here, the completely depolarizing channel \(\mathcal{D}\) is defined by \(\mathcal{D}(\rho):=\tr(\rho)I/d\).
Note that \(\mathfrak{u}(\mathcal{D})=\frac{1}{d^2}\) and \(\mathfrak{u}(\mathcal{U})=1\). Thus any unitarity estimation algorithm with constant precision can distinguish between the completely depolarizing channel and the random unitary channel.
This problem has been previously studied in several literatures~\cite{brandao2021models,aharonov2022quantum,chen2022exponential}. Specifically, a lower bound of \(\Omega(\sqrt[3]{d})\) is presented in \cite{aharonov2022quantum} and then strengthened by \cite{chen2022exponential} to allow for arbitrary-size ancilla system. In this paper, we further improve this lower bound to \(\Omega(\sqrt{d})\), which matches the upper bound \(O(\sqrt{d})\) by \cite{chen2022exponential}.

\begin{theorem}[Lower bound for depolarizing vs unitary channel, incoherent access]~\label{theorem-11232021}
Any learning algorithm with incoherent access requires
\begin{equation}
T\geq \Omega(\sqrt{d})
\end{equation}
oracle calls to \(\mathcal{E}: \mathbb{M}_d\rightarrow \mathbb{M}_d\) to distinguish between whether \(\mathcal{E}\) is a unitary or a maximally depolarizing channel with probability at least \(2/3\). 
\end{theorem}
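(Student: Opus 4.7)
The plan is to invoke the one-sided bound (Lemma~\ref{lem:onesided}) with $D_A$ the Haar ensemble of unitary channels and $D_B$ the point-mass on $\mathcal{D}$. Concretely, I will show that for every leaf $\ell$ of any incoherent-access tree $\mathcal{T}$ of depth $T$,
\begin{equation}
\frac{\mathbb{E}_{U\sim \mathrm{Haar}}\!\left[p^{\mathcal{U}}(\ell)\right]}{p^{\mathcal{D}}(\ell)} \;\geq\; 1 - O\!\left(\frac{T^2}{d}\right).
\end{equation}
Given such a bound, the one-sided inequality forces any tree that distinguishes the two cases with constant advantage to satisfy $T = \Omega(\sqrt{d})$, as claimed.

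First I would fix a root-to-leaf path $r = u_0, u_1, \dots, u_T = \ell$ and expand $p^{\mathcal{U}}(\ell)$ and $p^{\mathcal{D}}(\ell)$ as telescoping products along the edges via Eq.~\eqref{eq:prob-node}. Using the vectorization $\kett{\cdot}$, each edge factor for the unitary case can be put in the symmetric form $\bbra{X_t}\,U\otimes U^*\kett{Y_t}$ where $X_t, Y_t \in \mathbb{M}_d$ are the partial-trace reshapings of $\ket{\phi_{u_t}}$ and $\ket{\psi_{u_t}^{u_{t+1}}}$ on the main register. Multiplying across $t=0,\dots,T-1$ turns the path contribution into a single matrix element of $(U\otimes U^*)^{\otimes T}$ between two explicit vectors in $(\mathbb{C}^{d^2})^{\otimes T}$. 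The same manipulation on the depolarizing side collapses $U$ into its orbit average and leaves a scalar that equals the coefficient of the identity-identity pair of permutations in what comes next; this is precisely the ``tensor-versus-outer-product'' reformulation advertised in the overview.

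The core computation is then the Haar integral $\int dU\,(U\otimes U^*)^{\otimes T}$, which by Weingarten calculus equals $\sum_{\sigma,\tau\in S_T} \mathrm{Wg}(\sigma^{-1}\tau,d)\,P_\sigma\otimes P_\tau$ on the two legs. Substituting this in, $\mathbb{E}_U[p^{\mathcal{U}}(\ell)]$ becomes a bilinear form in the leaf data against the Weingarten matrix $W \in \mathbb{C}^{S_T\times S_T}$ with entries $\mathrm{Wg}(\sigma^{-1}\tau,d)$. The dominant $(\sigma,\tau) = (e,e)$ contribution reproduces $p^{\mathcal{D}}(\ell)$ up to the $d$-asymptotic normalization $\mathrm{Wg}(e,d) = d^{-T}(1+O(T^2/d))$, so what remains is to show that the sum of all other $(\sigma,\tau)$ terms is bounded, in absolute value, by $O(T^2/d)\cdot p^{\mathcal{D}}(\ell)$ uniformly over the leaf data. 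The key technical step (the target Lemma~\ref{lemma_1}) is to diagonalize $W$ on the group algebra $\mathbb{C}[S_T]$: viewing $\mathbb{C}[S_T]$ as a finite-dimensional $C^*$-algebra and decomposing it via the structure theorem into matrix blocks indexed by Young diagrams $\lambda \vdash T$, the Weingarten operator acts as a scalar on each isotypic block, with eigenvalue controlled by the Schur-Weyl dimension and hence by a power of $d$ that grows with the block's ``distance from the trivial representation.'' This $C^*$-algebraic diagonalization is what lets one bound the contribution of each non-identity permutation pair by a power of $1/d$ without paying a combinatorial blowup in $T!$.

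The main obstacle, as I see it, is precisely this spectral control: showing that after passing to the block decomposition, the leaf-data vector has small projection onto every non-trivial block, and that the corresponding Weingarten eigenvalues decay fast enough to absorb the $(T!)^2$ pairs of permutations. Once that is in hand, collecting the estimates gives the ratio bound above and therefore the advertised lower bound $T = \Omega(\sqrt{d})$ through Lemma~\ref{lem:onesided}.
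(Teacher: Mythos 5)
Your skeleton matches the paper's: Lemma~\ref{lem:onesided}, a per-leaf ratio bound of $1-O(T^2/d)$, reduction of a fixed root-to-leaf path to a matrix element of $\mathbb{E}_U\bigl[U^{\otimes T}\otimes U^{*\otimes T}\bigr]$ against leaf data, and the identification of the identity-pair Weingarten term with $p^{\mathcal{D}}(\ell)$ up to a $1+O(T^2/d)$ normalization. The gap is exactly the step you flag as the main obstacle, and it is not a technicality: the claim that the sum of all $(\sigma,\tau)\neq(e,e)$ contributions is at most $O(T^2/d)\cdot p^{\mathcal{D}}(\ell)$ \emph{in absolute value, uniformly over leaf data} is false. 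Take $d'=1$, input $\ket{0}$ at every step, and the computational-basis POVM, following the leaf where outcome $0$ occurs every time; then every edge operator $M_t=\sum_i\ketbra{\phi_{t,i}}{\psi_{t,i}}$ equals $\ketbra{0}{0}$, so every contraction of the leaf-data vector with a pair of permutation operators equals $1$, and the non-identity terms sum to $\sum_{(\sigma,\tau)\neq(e,e)}\mathrm{Wg}(\sigma^{-1}\tau,d)\approx (T!-1)\,d^{-T}$ while the $(e,e)$ term is $\approx d^{-T}$. Already at $T=2$ this exceeds your budget $O(T^2/d)\,d^{-T}$ by a factor of order $d$, and for larger $T$ by a factor near $T!$ (indeed here $\mathbb{E}_U[p^{\mathcal{U}}(\ell)]/p^{\mathcal{D}}(\ell)\approx T!$, so no two-sided closeness holds — only the one-sided lower bound does). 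The excess in this example happens to be positive, which is precisely the point: only positivity/cancellation can rescue the bound, and term-by-term absolute-value control of the Weingarten expansion — essentially the route of \cite{aharonov2022quantum,chen2022exponential} — is what caps that style of argument at $\Omega(\sqrt[3]{d})$. Your proposed remedy (small projection of the leaf-data vector onto nontrivial isotypic blocks plus Weingarten eigenvalue decay) is likewise not available uniformly: the leaf data are adversarial, and you give no argument for the projection bound.

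The paper never bounds off-diagonal terms in absolute value. After reducing (modulo the standing assumption $d^T/(d(d+1)\cdots(d+T-1))>\tfrac12$, which otherwise already gives $T=\Omega(\sqrt d)$) to a quadratic form evaluated at the \emph{same} vector on both sides, it proves the Loewner-order inequality of Lemma~\ref{lemma_1}: writing $\Pi_{\textup{triv}}=\sum_k\kettbbra{A_k}{A_k}$, one has $\sum_k A_k\otimes A_k^*\sqsupseteq \frac{1}{d(d+1)\cdots(d+T-1)}\sum_\pi P_\pi\otimes P_\pi^*$. The ingredients are elementary facts about the Gram matrix $G_{ij}=\bbrakett{P_{\pi_i}}{P_{\pi_j}}$ (diagonal dominance and the row-sum bound $\lambda_{\max}(G)\le d(d+1)\cdots(d+T-1)$) plus the genuinely new step: showing $\sum_{i,j}(G')_{ij}\,P_{\pi_i}\otimes P_{\pi_j}^*\sqsupseteq 0$ for $G'=G^{-1}-I/(d(d+1)\cdots(d+T-1))\sqsupseteq 0$, which does \emph{not} follow from $G'\sqsupseteq0$ alone because the permutations are paired by tensor product rather than outer product. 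This is where the finite-dimensional $C^*$-algebra structure theorem is used — applied to the eigenspaces $\gamma(E_\lambda)$ of $G$ to produce a matrix-unit orthonormal basis making $\sum_X X\otimes X^*$ manifestly positive semidefinite — not, as in your sketch, to estimate spectral decay of the Weingarten matrix, which is classical and insufficient for the uniformity you need. The proof then concludes with the positivity of $\sum_\pi\mathcal{P}_\pi$ on product vectors (Lemma 5.12 of \cite{chen2022exponential}) applied to $\bigotimes_t\kett{M_t^\dagger}$, recovering $p^{\mathcal{D}}(\ell)$ times $d^T/(d(d+1)\cdots(d+T-1))\ge 1-T^2/d$. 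To complete your proposal you would have to replace the absolute-value step by a one-sided operator inequality of this kind.
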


\begin{proof}
First of all, let us assume that
\begin{equation}\label{eq_8}
\frac{d^T}{d(d+1)\cdots (d+T-1)}> \frac{1}{2}.
\end{equation}
This is because if not, we have
\begin{equation}
\begin{split}
\frac{1}{2}&\geq\frac{d^T}{d(d+1)\cdots (d+T-1)}=\prod_{t=0}^{T-1} \left(1+\frac{t}{d}\right)^{-1}\\
&>\prod_{t=0}^{T-1} \left(1-\frac{t}{d}\right)\geq \left(1-\frac{T}{d}\right)^{T}>1-\frac{T^2}{d},
\end{split}
\end{equation}
implying \(T\geq \Omega(\sqrt{d})\). Then we have done.

Let $\mathcal{T}$ be the tree corresponding to any given learning algorithm for the depolarizing vs unitary channel problem. By Lemma~\ref{lem:onesided} it suffices to lower bound $\mathbb{E}_U[p^{\mathcal{U}}(\ell)]/p^{\mathcal{D}}(\ell)$ for all leaves $\ell$. 
Now we consider a fixed leaf \(\ell\). Recall from Eq.~\eqref{eq:prob-node} and following the root-to-leaf path $v_0 = r, v_1,...,v_{T-1}, v_T = \ell$, the probability of the leaf $\ell$ under the channel $\mathcal{E}$ is
\begin{equation}
p^{\mathcal{E}}(\ell)=\prod_{t=0}^{T-1} \Bigl(w_{v_t}^{v_{t+1}} dd' \, \bra{\psi_{v_t}^{v_{t+1}}} \, (\mathcal{E}\otimes \mathcal{I}_{\textup{anc}})\bigl[\ketbra{\phi_{v_t}}{\phi_{v_t}}\bigr]\,\ket{\psi_{v_t}^{v_{t+1}}}\Bigr).
\end{equation}
\noindent Since the leaf \(\ell\) is fixed (and so is the path), for simplicity, we define \(\ket{\psi_t}=\ket{\psi_{v_{t}}^{v_{t+1}}}\), \(w_t=w_{v_t}^{v_{t+1}}\). Then it can be written as:
\begin{equation}
\prod_{t=0}^{T-1} \Bigl( w_{t} dd' \, \bra{\psi_t} \, (\mathcal{E}\otimes \mathcal{I}_{\textup{anc}})\bigl[\ketbra{\phi_t}{\phi_t}\bigr]\,\ket{\psi_t}\Bigr).
\end{equation}
Let \(\ket{\phi_t}=\sum_{i=0}^{d'-1} \ket{\phi_{t,i}}\ket{i}\) and \(\ket{\psi_{t}}=\sum_{i=0}^{d'-1} \ket{\psi_{t,i}}\ket{i}\) where the first and second registers correspond to the main and ancilla systems, respectively (note that \(\ket{\phi_{t,i}}\) and \(\ket{\psi_{t,i}}\) need not to be unit). Then,
\begin{equation}\label{eq_2}
\begin{split}
p^{\mathcal{E}}(\ell)&=\prod_{t=0}^{T-1} \Bigl(w_t d d' \, \sum_{i,j} \bra{\psi_{t,i}}\, \mathcal{E}\bigl[\ketbra{\phi_{t,i}}{\phi_{t,j}}\bigr] \, \ket{\psi_{t,j}}\Bigr) \\
&=\sum_{\mathbf{i},\mathbf{j}} \prod_{t=0}^{T-1}\Bigl(w_t d d' \, \bra{\psi_{t,i_t}}\, \mathcal{E}\bigl[\ketbra{\phi_{t,i_t}}{\phi_{t,j_t}}\bigr] \, \ket{\psi_{t,j_t}}\Bigr)\\
&=\sum_{\mathbf{i},\mathbf{j}} \prod_{t=0}^{T-1}\Bigl(w_t d d' \, \bra{\psi_{t,\mathbf{i}}}\, \mathcal{E}\bigl[\ketbra{\phi_{t,\mathbf{i}}}{\phi_{t,\mathbf{j}}}\bigr] \, \ket{\psi_{t,\mathbf{j}}}\Bigr),
\end{split}
\end{equation}
where \(\mathbf{i}=(i_0,\ldots,i_{T-1})\), \(\mathbf{j}=(j_0,\ldots,j_{T-1})\) go over all length-\(T\) sequences of elements in \(\{0,1,\ldots,d'-1\}\) and we define \(\ket{\psi_{t,\mathbf{i}}}=\ket{\psi_{t,i_t}}\), \(\ket{\phi_{t,\mathbf{i}}}=\ket{\phi_{t,i_t}}\). Now we consider the cases where \(\mathcal{E}\) is a Haar-random unitary or a completely depolarizing channel.
\\

\noindent\textbf{Haar-random unitary}. 
\vspace{1mm}

\noindent For fixed \(\mathbf{i}, \mathbf{j}\), we have
\begin{align}
&\mathbb{E}_U\left[\,\prod_{t=0}^{T-1} \bra{\psi_{t,\mathbf{i}}}\, \mathcal{U}\bigl[\ketbra{\phi_{t,\mathbf{i}}}{\phi_{t,\mathbf{j}}}\bigr]\, \ket{\psi_{t,\mathbf{j}}}\right]
=\mathbb{E}_U\left[\,\prod_{t=0}^{T-1} \bra{\psi_{t,\mathbf{i}}} U\ketbra{\phi_{t,\mathbf{i}}}{\phi_{t,\mathbf{j}}} U^\dag \ket{\psi_{t,\mathbf{j}}}\right]
\nonumber\\
=&\mathbb{E}_U\left[\,\prod_{t=0}^{T-1} \bra{\psi_{t,\mathbf{i}}} U\ket{\phi_{t,\mathbf{i}}}\bra{\psi^*_{t,\mathbf{j}}} U^* \ket{\phi^*_{t,\mathbf{j}}}\right]
=\mathbb{E}_U\left[\,\prod_{t=0}^{T-1} \bra{\psi_{t,\mathbf{i}}} U\ket{\phi_{t,\mathbf{i}}} \prod_{t=0}^{T-1} \bra{\psi^*_{t,\mathbf{j}}} U^*  \ket{\phi^*_{t,\mathbf{j}}}\right]
\label{eq_1}\\
=&\mathbb{E}_U\left[\,\begin{matrix}\vdots \\ \bra{\psi_{t,\mathbf{i}}}\\ \vdots\end{matrix}
\,U^{\otimes T}\,
\begin{matrix}\vdots\\ \ket{\phi_{t,\mathbf{i}}}\\ \vdots \end{matrix}\,\,
\begin{matrix} \vdots\\ \bra{\psi^*_{t,\mathbf{j}}} \\ \vdots \end{matrix} 
\,U^{*\otimes T}\,
\begin{matrix}\vdots\\\ket{\phi^*_{t,\mathbf{j}}} \\ \vdots \end{matrix}\,\right]
=\begin{matrix}\vdots \\ \bra{\psi_{t,\mathbf{i}}}\\ \vdots\\ \bra{\psi^*_{t,\mathbf{j}}} \\ \vdots \end{matrix} 
\,\mathbb{E}_U\Bigl[ U^{\otimes T}\otimes U^{*\otimes T}\Bigr]\,
\begin{matrix}\vdots\\ \ket{\phi_{t,\mathbf{i}}}\\ \vdots \\\ket{\phi^*_{t,\mathbf{j}}} \\ \vdots \end{matrix},\nonumber
\end{align}
where \(\ket{\psi^*},U^*\) denote the complex conjugates of \(\ket{\psi},U\) (w.r.t. the computational basis), and for readability, we use the vertical tensor form \(\,\begin{matrix}A \\ B \end{matrix}\,\) to denote \(A\otimes B\), and thus \(\begin{matrix}\vdots\\A_t\\\vdots \end{matrix}\) denotes \(\bigotimes_{t} A_t\). By the conclusions in representation theory~\cite{fulton2013representation}, \(\mathbb{E}_U[U^{\otimes T} \otimes U^{*\otimes T}]\) is the orthogonal projector \(\Pi_{\textup{triv}}\) onto the trivial sub-representations of the unitary representation \(\{U^{\otimes T} \otimes U^{*\otimes T}\}_U\) of the group \(\{U\}_U\). Roughly speaking, we have used a fact similar to that \(|\mathcal{G}|^{-1} \sum_{g\in\mathcal{G}} \textup{REP}(g)\) is the orthogonal projector onto the trivial sub-representations of the unitary representation REP of group \(\mathcal{G}\) (also see the proof of Proposition 1 in \cite{harrow2013church}). Due to the equivalence of the following fixed point problems,
\begin{equation}\label{eq11202347}
U^{\otimes T} \otimes U^{*\otimes T} \kett{X}=\kett{X} \quad\Longleftrightarrow\quad U^{\otimes T} X U^{\dag \otimes T} = X,
\end{equation}
we can consider the RHS of Eq.~\eqref{eq11202347}. By the Schur-Weyl duality~\cite{fulton2013representation}, we know that any solution \(X\) belongs to the linear span of \(\{P_{\pi}\, |\, \pi\in\mathbb{S}_T\}\), where \(\mathbb{S}_T\) is the symmetric group on \(T\) letters,  and \(P_\pi: \Hmain^{\otimes T}\rightarrow \Hmain^{\otimes T}\), \(\ket{x_1,\ldots,x_T}\mapsto\ket{x_{\pi^{-1}(1)},\ldots,x_{\pi^{-1}(T)}}\). Thus the support of \(\Pi_{\textup{triv}}\) is spanned by \(\kett{P_{\pi}}\). However, note that these \(\kett{P_{\pi}}\) are not orthogonal with each other. We will later give a lemma that relates the orthogonal basis of \(\Pi_{\textup{triv}}\) and \(\kett{P_{\pi}}\). But for now, let us suppose the spectral decomposition of \(\Pi_{\textup{triv}}\) is \(\sum_k \kettbbra{A_k}{A_k}\). Then, 
\begin{equation}
\begin{split}
(\ref{eq_1})
&= \begin{matrix}\vdots \\ \bra{\psi_{t,\mathbf{i}}}\\ \vdots\\ \bra{\psi^*_{t,\mathbf{j}}} \\ \vdots \end{matrix} 
\,\sum_k \kettbbra{A_k}{A_k}\,
\begin{matrix}\vdots\\ \ket{\phi_{t,\mathbf{i}}}\\ \vdots \\\ket{\phi^*_{t,\mathbf{j}}} \\ \vdots \end{matrix}
=\biggbbra{\begin{matrix}\vdots\\ \ketbra{\psi_{t,\mathbf{i}}}{\psi_{t,\mathbf{j}}} \\\vdots\end{matrix}} \,\sum_k \kettbbra{A_k}{A_k}\,\,
\biggkett{\begin{matrix}\vdots\\ \ketbra{\phi_{t,\mathbf{i}}}{\phi_{t,\mathbf{j}}} \\\vdots\end{matrix}}\\
&=\sum_k \tr\left[\begin{matrix}\vdots\\\ketbra{\psi_{t,\mathbf{j}}}{\psi_{t,\mathbf{i}}}\\\vdots\end{matrix}\,A_k\right]\,\, 
\tr\left[A_k^\dag \,\begin{matrix}\vdots\\\ketbra{\phi_{t,\mathbf{i}}}{\phi_{t,\mathbf{j}}}\\\vdots\end{matrix}\right]
=\sum_k \begin{matrix} \vdots\\\bra{\psi_{t,\mathbf{i}}}\\\vdots\end{matrix} A_k \begin{matrix}\vdots\\\ket{\psi_{t,\mathbf{j}}}\\\vdots\end{matrix}\,
\begin{matrix} \vdots\\\bra{\phi^*_{t,\mathbf{i}}}\\\vdots\end{matrix} A^*_k \begin{matrix}\vdots\\\ket{\phi^*_{t,\mathbf{j}}}\\\vdots\end{matrix}\\
&=\begin{matrix}\vdots\\\bra{\psi_{t,\mathbf{i}}}\\\vdots\\\bra{\phi^*_{t,\mathbf{i}}}\\\vdots\end{matrix}
\,\sum_k A_k\otimes A_k^*\,
\begin{matrix}\vdots\\\ket{\psi_{t,\mathbf{j}}}\\\vdots\\\ket{\phi^*_{t,\mathbf{j}}}\\\vdots\end{matrix}.
\end{split}
\end{equation}
Then, recall from Eq.~\eqref{eq_2}, we have
\begin{equation}\label{eq_3}
\mathbb{E}_U\left[p^{\mathcal{U}}(\ell)\right]
=\left(\prod_{t=0}^{T-1}w_t dd'\right) \,
\sum_{\mathbf{i}} \begin{matrix}\vdots\\\bra{\psi_{t,\mathbf{i}}}\\\vdots\\\bra{\phi^*_{t,\mathbf{i}}}\\\vdots\end{matrix}
\, \left(\sum_k A_k\otimes A_k^*\right)\,
\sum_{\mathbf{j}} \begin{matrix}\vdots\\\ket{\psi_{t,\mathbf{j}}}\\\vdots\\\ket{\phi^*_{t,\mathbf{j}}}\\\vdots\end{matrix}.
\end{equation}
Note that the LHS and RHS of \(\sum_k A_k\otimes A_k^*\) are the same vector, and we have the following lemma.
\begin{lemma}\label{lemma_1}
If \(\frac{d^T}{d(d+1)\cdots(d+T-1)}> \frac{1}{2}\), then
\begin{equation}
\sum_k A_k\otimes A_k^* \sqsupseteq \frac{1}{d(d+1)\cdots (d+T-1)}\sum_{\pi\in \mathbb{S}_T} P_{\pi}\otimes P^*_{\pi},
\end{equation}
where \(\sqsupseteq\) is the Loewner order.
\end{lemma}
The proof of the above lemma is deferred to Section~\ref{sec_1}. This lemma yields
\begin{equation}\label{eq_4}
\begin{split}
\eqref{eq_3} 
&\geq\frac{\prod_{t=0}^{T-1}w_t dd'}{d(d+1)\cdots(d+T-1)} \,\,
\sum_{\mathbf{i}} \begin{matrix}\vdots\\\bra{\psi_{t,\mathbf{i}}}\\\vdots\\\bra{\phi^*_{t,\mathbf{i}}}\\\vdots\end{matrix}
\, \left(\sum_{\pi\in\mathbb{S}_T} P_\pi\otimes P_\pi^*\right)\,
\sum_{\mathbf{j}} \begin{matrix}\vdots\\\ket{\psi_{t,\mathbf{j}}}\\\vdots\\\ket{\phi^*_{t,\mathbf{j}}}\\\vdots\end{matrix}\\
&=\frac{\prod_{t=0}^{T-1}w_t dd'}{d(d+1)\cdots(d+T-1)} \,\,
\sum_{\mathbf{i}} \begin{matrix}\vdots\\\bigbbra{\,\ketbra{\psi_{t,\mathbf{i}}}{\phi_{t,\mathbf{i}}}\,}\\\vdots\end{matrix}
\, \left(\sum_{\pi\in\mathbb{S}_T} \mathcal{P}_\pi \right)\,
\sum_{\mathbf{j}} \begin{matrix}\vdots\\\bigkett{\,\ketbra{\psi_{t,\mathbf{j}}}{\phi_{t,\mathbf{j}}}\,}\\\vdots\end{matrix},
\end{split}
\end{equation}
where \(\mathcal{P}_\pi: \mathbb{M}_d^{\otimes T}\rightarrow \mathbb{M}_d^{\otimes T}\), \(\kett{X_1}\cdots\kett{X_T}\mapsto\kett{X_{\pi^{-1}(1)}}\cdots\kett{X_{\pi^{-1}(T)}}\). \(\mathcal{P}_\pi\) can be seen as the permutation operator ``lifted'' from \(P_\pi\). Then,
\begin{equation}\label{eq_5}
\eqref{eq_4}=\frac{\prod_{t=0}^{T-1}w_t dd'}{d(d+1)\cdots(d+T-1)} \,
\begin{matrix}\vdots\\\bigbbra{\,\sum_{i=0}^{d'-1}\ketbra{\psi_{t,i}}{\phi_{t,i}}\,}\\\vdots\end{matrix}
\, \left(\sum_{\pi\in\mathbb{S}_T} \mathcal{P}_\pi \right)\,
\begin{matrix}\vdots\\\bigkett{\,\sum_{j=0}^{d'-1}\ketbra{\psi_{t,j}}{\phi_{t,j}}\,}\\\vdots\end{matrix}.
\end{equation}
Note that the LHS and RHS of \(\sum_{\pi\in\mathbb{S}_T} \mathcal{P}_\pi\) are tensor product vectors. Then, we use the following lemma about tensor product vector and permutation operators:
\begin{lemma}[Tensor product vector and permutation operators, see Lemma 5.12 in \cite{chen2022exponential}]
    For any collection of vectors $\{\ket{x_t}\}_{t=1}^T \subset \mathbb{C}^m$, \begin{equation}
        \begin{matrix}\vdots\\ \bra{x_t}\\\vdots\end{matrix} \, \sum_{\pi\in\mathbb{S}_T} P_\pi \, \begin{matrix}\vdots\\ \ket{x_t}\\\vdots\end{matrix}  \geq \begin{matrix}\vdots\\ \bra{x_t}\\\vdots\end{matrix} \, \begin{matrix}\vdots\\\ket{x_t}\\\vdots\end{matrix},
    \end{equation}
    where in this lemma, \(P_\pi\) act as permutation operators on \((\mathbb{C}^{m})^{\otimes T}\).
\end{lemma}
Thus,
\begin{equation}\label{eq_6}
\begin{split}
\eqref{eq_5}
&\geq \frac{\prod_{t=0}^{T-1}w_t dd'}{d(d+1)\cdots(d+T-1)} \,
\begin{matrix}\vdots\\\bigbbra{\,\sum_{i=0}^{d'-1}\ketbra{\psi_{t,i}}{\phi_{t,i}}\,}\\\vdots\end{matrix} \,\,
\begin{matrix}\vdots\\\bigkett{\,\sum_{j=0}^{d'-1}\ketbra{\psi_{t,j}}{\phi_{t,j}}\,}\\\vdots\end{matrix}\\
&=\frac{\prod_{t=0}^{T-1}w_t dd'}{d(d+1)\cdots(d+T-1)} \,
\sum_{\mathbf{i}} \begin{matrix}\vdots\\\bigbbra{\,\ketbra{\psi_{t,\mathbf{i}}}{\phi_{t,\mathbf{i}}}\,}\\\vdots\end{matrix} \,\,
\sum_{\mathbf{j}}\begin{matrix}\vdots\\\bigkett{\,\ketbra{\psi_{t,\mathbf{j}}}{\phi_{t,\mathbf{j}}}\,}\\\vdots\end{matrix}\\
&=\frac{\prod_{t=0}^{T-1}w_t dd'}{d(d+1)\cdots(d+T-1)} \,
\sum_{\mathbf{i},\mathbf{j}} \prod_{t=0}^{T-1} \braket{\psi_{t,\mathbf{i}}}{\psi_{t,\mathbf{j}}} \braket{\phi_{t,\mathbf{j}}}{\phi_{t,\mathbf{i}}}.
\end{split}
\end{equation}
\\

\noindent\textbf{Completely depolarizing channel}.
\vspace{1mm}

\noindent If \(\mathcal{E}\) is the completely depolarizing channel \(\mathcal{D}\), then,
\begin{equation}\label{eq_7}
\begin{split}
p^{\mathcal{D}}(\ell)
&=\sum_{\mathbf{i},\mathbf{j}} \prod_{t=0}^{T-1}\Bigl(w_t d d' \, \bra{\psi_{t,\mathbf{i}}}\, \mathcal{D}\bigl[\ketbra{\phi_{t,\mathbf{i}}}{\phi_{t,\mathbf{j}}}\bigr] \, \ket{\psi_{t,\mathbf{j}}}\Bigr)
=\sum_{\mathbf{i},\mathbf{j}}\prod_{t=0}^{T-1} \Bigl(w_t d d'\, \braket{\psi_{t,\mathbf{i}}}{\psi_{t,\mathbf{j}}} \, \braket{\phi_{t,\mathbf{j}}}{\phi_{t,\mathbf{i}}}/d \Bigr)\\
&=\left(\prod_{t=0}^{T-1} w_t d'\right)\, \sum_{\mathbf{i},\mathbf{j}}\prod_{t=0}^{T-1} \braket{\psi_{t,\mathbf{i}}}{\psi_{t,\mathbf{j}}} \, \braket{\phi_{t,\mathbf{j}}}{\phi_{t,\mathbf{i}}}.
\end{split}
\end{equation}
\\

\noindent\textbf{Final step}.
\vspace{1mm}

\noindent Combining Eq.~\eqref{eq_6} and Eq.~\eqref{eq_7}, we see that
\begin{equation}
\begin{split}
\frac{\mathbb{E}_U \bigl[p^{\mathcal{U}}(\ell)\bigr]}{p^\mathcal{D}(\ell)}&\geq \frac{d^T}{d(d+1)\cdots (d+T-1)}=\prod_{t=0}^{T-1} \left(1+\frac{t}{d}\right)^{-1}\\
&\geq \prod_{t=0}^{T-1} \left(1-\frac{t}{d}\right)\geq \left(1-\frac{T}{d}\right)^{T}\geq 1-\frac{T^2}{d}.
\end{split}
\end{equation}
Using Lemma~\ref{lem:onesided}, we have the probability that the given learning algorithm successfully distinguishes the two settings is upper bounded by
    $1 - \left(1-\frac{T^2}{d}\right)$.
    Therefore, $2/3 \leq 1 - \left(1-\frac{T^2}{d}\right)$ implying that $T \geq \Omega(\sqrt{d})$.
\end{proof}

\subsubsection{Proof of Lemma~\ref{lemma_1}}~\label{sec_1}
Define \(r_\Pi=\textup{rank}(\Pi_{\textup{triv}})\). Note that \(\Pi_{\textup{triv}}=\sum_{k\leq r_\Pi} \kettbbra{A_k}{A_k}\) is the orthogonal projector on to the subspace spanned by \(\{\kett{P_{\pi}}\}_\pi\). Let \(M\) be the matrix composed of the column vectors \(\kett{P_{\pi}}\). Then,
\begin{equation}
\textup{supp}\left(\sum_k\kettbbra{A_k}{A_k}\right)=\textup{supp}\left(\sum_\pi\kettbbra{P_{\pi}}{P_{\pi}}\right)=\textup{supp}\left(M M^\dag\right),
\end{equation}
where \(\textup{supp}(H)\) denotes the support space of the Hermitian operator \(H\).
Thus there is a singular value decomposition: \(M=UD V^\dag\) such that \(\{\kett{A_k}\}_k\) are the first \(r_\Pi\) columns of \(U\), and exactly the first \(r_\Pi\) diagonal entries of \(D\) are non-zero. Therefore,
\begin{equation}
\begin{bmatrix}\ldots &, \kett{P_{\pi_i}},& \ldots \end{bmatrix} V = \begin{bmatrix}\ldots,\kett{A_k},\ldots,0,\ldots\end{bmatrix} D,
\end{equation}
\noindent where \(\pi_i\in\mathbb{S}_{T}\) is a permutation indexed by \(i\). Thus we can express \(A_k\) using \(P_{\pi_i}\) with the matrices \(V\) and \(D\). That is,
\begin{equation}
A_k=\frac{\sum_i V_{i,k}P_{\pi_i}}{D_{k}}.
\end{equation}
Thus 
\begin{equation}\label{eq_9}
\begin{split}
\sum_{k\leq r_\Pi} A_k\otimes A_k^*
&= \sum_{k\leq r_\Pi} \frac{\sum_{i,j}V_{i,k}V^*_{j,k} P_{\pi_i}\otimes P^*_{\pi_j}}{D_k^2}
=\sum_{i,j} P_{\pi_i}\otimes P^*_{\pi_j} \sum_{k\leq r_\Pi} \frac{V_{i,k}V^*_{j,k}}{D_k^2}\\
&=\sum_{i,j} P_{\pi_i}\otimes P_{\pi_j}^* \left(V(D^\dag D)^{-1} V^\dag\right)_{i,j},
\end{split}
\end{equation}
where \((\cdot)^{-1}\) here is the pseudo inverse (however we will later show that \(D^\dag D\) is full rank). This motivates us to consider the matrix \(G=M^\dag M=V (D^\dag D) V^\dag\), where \(G_{i,j}=\bbrakett{P_{\pi_i}}{P_{\pi_j}}\) (note that the inverse of \(G\) is so-called the Weingarten matrix~\cite{collins2022weingarten}). We then prove that \(G\) is full rank, and thus is positive definite.

For any row \(i\), we have
\begin{equation}
\begin{split}
\sum_{j}G_{i,j}&=\sum_j \bbrakett{P_{\pi_i}}{P_{\pi_j}}=\sum_j \tr\left(P^\dag_{\pi_i}P_{\pi_j}\right)=\sum_j\tr\left(P_{\pi_i^{-1}}P_{\pi_j}\right)
=\tr\left(\sum_j P_{\pi_j}\right)\\
&=T! \, \dim \left(\textup{Sym}^T(\mathbb{C}^{d})\right)
=d(d+1)\cdots (d+T-1),
\end{split}
\end{equation}
where \(\textup{Sym}^T(\mathbb{C}^d)\) is the symmetric subspace~\cite{harrow2013church} of \((\mathbb{C}^d)^{\otimes T}\).
Recall from Eq.~\eqref{eq_8}, we know that 
\begin{equation}
G_{i,i}-\sum_{j\neq i}G_{i,j}=2G_{i,i}-\sum_{j}G_{i,j}=2d^T-d(d+1)\cdots (d+T-1)>0,
\end{equation}
thus \(G\) is diagonally dominant, which means \(G\) is full rank.

Then we consider the upper bound of the eigenvalues of \(G\). Suppose \(\bm{x}=(\ldots,x_i,\ldots)\) is an eigenvector of \(G\), and \(x_M=\max_i(|x_i|)\) for some index $M$, then
\begin{equation}
\lambda |x_M|= |(G\bm{x})_M|=|\sum_j G_{M,j}x_{j}|\leq \sum_j G_{M,j}|x_M|=d(d+1)\cdots (d+T-1)|x_M|.
\end{equation}
Thus for any eigenvalue \(\lambda\) of \(G\), \(\lambda\leq d(d+1)\cdots (d+T-1)\). Recall from Eq.~\eqref{eq_9}, we have
\begin{equation}\label{eq11102345}
\begin{split}
&\sum_k A_k\otimes A_k^*- \frac{1}{d(d+1)\cdots (d+T-1)} \sum_i P_{\pi_i} \otimes P^*_{\pi_i}\\
=&\sum_{i,j}P_{\pi_i}\otimes P_{\pi_j}^* \left(G^{-1}\right)_{i,j} - \frac{1}{d(d+1)\cdots (d+T-1)} \sum_i P_{\pi_i} \otimes P^*_{\pi_i}\\
=&\sum_{i,j}P_{\pi_i}\otimes P_{\pi_j}^* \left(G^{-1}-\frac{I}{d(d+1)\cdots (d+T-1)}\right)_{i,j}\\
=&\sum_{i,j}P_{\pi_i}\otimes P_{\pi_j}^* \, \left(G'\right)_{i,j},
\end{split}
\end{equation}
where \(G'=G^{-1} - \frac{I}{d(d+1)\cdots (d+T-1)}\) is positive semidefinite as the smallest eigenvalue of \(G^{-1}\) is not smaller than \(\frac{1}{d\cdots (d+T-1)}\). 
Now, we can provide some intuition here: the form of \(\sum_{i,j}P_{\pi_i}\otimes P_{\pi_j}^* \, (G')_{i,j}\) is somewhat similar to
\begin{equation}
\begin{bmatrix}\ldots,\kett{P_{\pi_i}},\ldots\end{bmatrix} G' \begin{bmatrix} \ldots, \kett{P_{\pi_j}}, \ldots \end{bmatrix}^{\dag},
\end{equation}
except that \(P_{\pi_i},P_{\pi_j}\) are associated through tensor product (i.e., \(P_{\pi_i}\otimes P_{\pi_j}^*\)) but not outer product (i.e., \(\kettbbra{P_{\pi_i}}{P_{\pi_j}}\)). Thus we cannot directly conclude that Eq.~\eqref{eq11102345} is positive semidefinite.

We will solve this difficulty by characterizing the eigenspaces of \(G\) (thus is also the eigenspaces of \(G'\)). 
First, let us consider the matrix space \(\mathbb{P}:=\textup{span}(\{P_{\pi_i}\}_i)\). Since we have proved that \(G\) is full rank, these elements \(P_{\pi_i}\) are linear independent. Thus there is an bijection \(\gamma: \mathbb{C}^{T!}\rightarrow \mathbb{P}\), where 
\begin{equation}
\gamma\colon \bm{x}=\left[x_1,\ldots,x_{T!}\right]^{\textup{T}} \mapsto \sum_i x_i P_{\pi_i}.
\end{equation}
Then, for an eigenvalue \(\lambda\) of \(G\), the corresponding eigenspace \(E_\lambda\) has the following properties:
\begin{lemma}\label{lemma_221111109}
Suppose \(X,Y \in \mathbb{P}\).
\begin{enumerate}
    \item If \(X,Y\in \gamma(E_\lambda)\), then \(aX+bY \in \gamma(E_\lambda)\), where \(a,b\) are complex numbers.
    \item If \(X,Y\in \gamma(E_\lambda)\), then \(XY\in \gamma(E_\lambda)\).
    \item If \(X\in \gamma(E_\lambda)\), then \(X^\dag\in \gamma(E_\lambda)\).
    \item If \(X,Y\in \gamma(E_\lambda)\), then \(\bbrakett{X}{Y}:=\tr(X^\dag Y)=\lambda \,  \braket{\gamma^{-1}(X)}{\gamma^{-1}(Y)}\).
\end{enumerate}
\end{lemma}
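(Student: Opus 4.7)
The plan is to exploit symmetries of the Gram matrix entries $G_{i,j}=\tr(P_{\pi_i^{-1}}P_{\pi_j})$, which depend only on $\pi_i^{-1}\pi_j$. Items~1 and~4 fall out immediately: $E_\lambda$ is a linear subspace and $\gamma$ is linear (item~1); and the Gram-matrix identity $\bbrakett{\gamma(\bm{x})}{\gamma(\bm{y})}=\bm{x}^\dag G\bm{y}$ combined with $G\bm{y}=\lambda\bm{y}$ gives $\bbrakett{X}{Y}=\lambda\braket{\gamma^{-1}(X)}{\gamma^{-1}(Y)}$ (item~4).

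For items~2 and~3, I will verify three commutation relations: $G$ commutes with the left regular representation $L_\pi$, defined by $(L_\pi\bm{x})_\sigma=x_{\pi^{-1}\sigma}$; with the right regular representation $R_\pi$, defined by $(R_\pi\bm{x})_\sigma=x_{\sigma\pi}$; and with the antilinear inversion $J$, defined by $(J\bm{x})_\sigma=\overline{x_{\sigma^{-1}}}$. Since $G_{\sigma,\tau}$ depends only on $\sigma^{-1}\tau$, the identity $G_{\pi\sigma,\pi\tau}=G_{\sigma,\tau}$ is immediate, and $G_{\sigma\pi,\tau\pi}=\tr(P_{\pi^{-1}\sigma^{-1}\tau\pi})=\tr(P_{\sigma^{-1}\tau})=G_{\sigma,\tau}$ follows from cyclicity of trace. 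For $J$, I use that $G$ is real symmetric together with $\tr(P_\mu)=\tr(P_{\mu^{-1}})$: expanding $(GJ\bm{x})_\tau=\sum_\sigma G_{\tau,\sigma}\,\overline{x_{\sigma^{-1}}}$ and substituting $\mu=\sigma^{-1}$ reduces the claim $GJ=JG$ to the conjugated eigenvalue equation $G\overline{\bm{x}}=\lambda\overline{\bm{x}}$, which holds since $G$ is real.

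Now $\gamma$ intertwines $L_\pi$ with left multiplication by $P_\pi$ on $\mathbb{P}$, $R_\pi$ with right multiplication by $P_\pi$, and $J$ with the $\dag$-operation. So if $\bm{x}\in E_\lambda$, then $P_\pi\gamma(\bm{x})$, $\gamma(\bm{x})P_\pi$, and $\gamma(\bm{x})^\dag$ all lie in $\gamma(E_\lambda)$; and since $\{P_\pi\}_{\pi\in\mathbb{S}_T}$ spans $\mathbb{P}$, this upgrades to the statement that $\gamma(E_\lambda)$ is a two-sided $*$-ideal of $\mathbb{P}$. In particular, it is closed under products (item~2) and under $\dag$ (item~3). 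The main piece requiring bookkeeping care is the antilinearity in item~3, since the substitution $\mu=\sigma^{-1}$ and the identities $\tr(P_{\tau^{-1}\mu^{-1}})=\tr(P_{\mu\tau})=\tr(P_{\tau\mu})$ must be stitched together cleanly to match up with $J\bm{x}$ on the right-hand side. Once the three symmetries are in place, items~2 and~3 follow by routine translation through $\gamma$, giving the $C^*$-algebra structure of $\gamma(E_\lambda)$ that the rest of the proof of Lemma~\ref{lemma_1} will exploit.
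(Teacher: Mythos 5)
Your proof is correct, and it in fact proves slightly more than the lemma asks. Items 1 and 4 coincide with the paper's argument. For items 2 and 3, the paper verifies the eigenvalue equation entrywise: it expands $\gamma^{-1}(XY)$ and $\gamma^{-1}(X^\dag)$ and checks $G\,\gamma^{-1}(XY)=\lambda\,\gamma^{-1}(XY)$ and $G\,\gamma^{-1}(X^\dag)=\lambda\,\gamma^{-1}(X^\dag)$ using exactly the invariances you isolate --- the paper's step $\tr(P_{\pi_i}^\dag P_{\pi_k})=\tr\bigl(P^\dag_{\pi_{ij^{-1}}}P_{\pi_{kj^{-1}}}\bigr)$ is your commutation of $G$ with the right regular representation, and its $\dag$ computation is your $GJ=JG$. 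So the underlying mechanism is identical; what your packaging buys is (i) a clean separation of the group-theoretic invariance of the Gram entries ($G_{\sigma,\tau}$ depends only on the conjugacy class of $\sigma^{-1}\tau$, and is inversion-symmetric) from the eigenvalue manipulation, and (ii) a strictly stronger conclusion: from $[G,L_\pi]=[G,R_\pi]=0$ each eigenspace image $\gamma(E_\lambda)$ is invariant under left and right multiplication by all of $\mathbb{P}$, i.e.\ it is a $*$-closed two-sided ideal, whereas item 2 only needs closure under products of two eigenspace elements (and only the right-translation invariance is used for that in the paper). Two bookkeeping points to tidy when writing it out: with the convention $P_\pi P_\sigma=P_{\pi\sigma}$, $\gamma$ intertwines $R_\pi$ with right multiplication by $P_{\pi^{-1}}$ rather than $P_\pi$ (harmless, since $\{P_\pi\}_{\pi\in\mathbb{S}_T}$ is inverse-closed), and the antilinear step needs $\lambda\in\mathbb{R}$ --- immediate because $G$ is a real symmetric Gram matrix --- so that $J$ maps $E_\lambda$ into $E_{\bar\lambda}=E_\lambda$; your phrase ``reduces the claim $GJ=JG$ to the conjugated eigenvalue equation'' conflates the operator identity with the invariance of $E_\lambda$, but the computation you sketch does establish what is needed.
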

\begin{proof}
\quad

\begin{enumerate}
\item The first property is obvious by the linearity of \(\gamma\).

\item Suppose \(\gamma^{-1}(X)=\left[\ldots,x_i,\ldots\right]^{\textup{T}}\) and \(\gamma^{-1}(Y)=\left[\ldots,y_i,\ldots\right]^{\textup{T}}\). Then we have 
\begin{equation}
\lambda x_i = \sum_j G_{i,j} x_j= \tr\left(P_{\pi_i}^\dag \sum_j P_{\pi_j}\right)x_j,\quad\quad 
\lambda y_i = \sum_j G_{i,j} y_j= \tr\left(P_{\pi_i}^\dag \sum_j P_{\pi_j}\right)y_j,
\end{equation}
and
\begin{equation}
XY=\sum_{i,j} x_iy_j P_{\pi_i} P_{\pi_j}=\sum_k P_{\pi_k} \sum_{j} x_{k j^{-1}}y_{j},
\end{equation}
where in a slight abuse of notation, we use \(kj^{-1}\) to denotes the index of the permutation \(\pi_k \pi_j^{-1}\). Thus we have
\begin{equation}
\begin{split}
\bigl(G\,\gamma^{-1}(XY)\bigr)_i
&=\sum_k G_{i,k} \sum_j x_{kj^{-1}}y_j
=\sum_j y_j \sum_k G_{i,k}x_{kj^{-1}}\\
&=\sum_j y_j \sum_k \tr\left(P_{\pi_i}^\dag P_{\pi_k}\right)x_{kj^{-1}}
=\sum_j y_j \sum_k \tr\left(P^\dag_{\pi_i} P_{\pi_{kj^{-1}}} P_{\pi_j}\right) x_{kj^{-1}}\\
&=\sum_j y_j \sum_k \tr\left(\left(P_{\pi_i} P_{\pi_j}^\dag\right)^\dag P_{\pi_{kj^{-1}}}\right) x_{kj^{-1}}
\\
&=\sum_j y_j \sum_k \tr\left(P^\dag_{\pi_{ij^{-1}}}P_{\pi_{kj^{-1}}}\right)x_{kj^{-1}}\\
&=\lambda \sum_j y_j x_{ij^{-1}}=\lambda \left(\gamma^{-1}(XY)\right)_i.
\end{split}
\end{equation}
Therefore, \(G\,\gamma^{-1}(XY)=\lambda \gamma^{-1}(XY)\) and thus \(XY\in \gamma(E_\lambda)\).

\item Similarly, let \(\gamma^{-1}(X)=\left[\ldots,x_i,\ldots\right]^{\textup{T}}\), then
\begin{equation}
X^\dag= \sum_i x_i^* P_{\pi_i}^\dag=\sum_i x_i^* P_{\pi_{i^{-1}}}=\sum_j x_{j^{-1}}^*  P_{\pi_j},
\end{equation}
Then,
\begin{equation}
\begin{split}
\left(G\, \gamma^{-1}\left(X^\dag\right)\right)_i
&= \sum_j G_{i,j} x_{j^{-1}}^*=\sum_j \tr\left(P_{\pi_i}^\dag P_{\pi_j}\right)x_{j^{-1}}^*\\
&= \sum_j \tr\left(P_{\pi_i}^\dag P_{\pi_{j^{-1}}}\right) x_{j}^*=\sum_j \tr\left(P_{\pi_i}^\dag P_{\pi_j}^\dag\right) x_j^*\\
&=\left(\sum_j \tr\left(P_{\pi_i}P_{\pi_j}\right) x_j\right)^*=\left(\sum_j \tr\left(P^\dag_{\pi_{i^{-1}}}P_{\pi_j}\right) x_j\right)^*\\
&=\left(\lambda x_{i^{-1}}\right)^*=\lambda x_{i^{-1}}^*=\lambda \left(\gamma^{-1}\left(X^\dag\right)\right)_i.
\end{split}
\end{equation}
Therefore, \(G \, \gamma^{-1}(X^\dag)=\lambda \gamma^{-1}(X^\dag)\) and thus \(X^\dag \in \gamma(E_\lambda)\).

\item Let \(\gamma^{-1}(X)=\left[\ldots,x_i,\ldots\right]^{\textup{T}}\) and \(\gamma^{-1}(Y)=\left[\ldots,y_i,\ldots\right]^{\textup{T}}\). Then we have
\begin{equation}
\begin{split}
\tr\left(X^\dag Y\right)
&=\sum_{i,j} x_i^* y_j \tr\left(P_{\pi_i}^\dag P_{\pi_j}\right)=\sum_i x_i^* \sum_j \tr\left(P_{\pi_i}^\dag P_{\pi_j}\right) y_j\\
&=\lambda \sum_i x_i^* y_i=\lambda\,  \braket{\gamma^{-1}(X)}{\gamma^{-1}(Y)}.
\end{split}
\end{equation}
\end{enumerate}
\end{proof}
\noindent The first three properties ensure that \(\gamma(E_\lambda)\) forms a \(C^*\)-algebra (equipped with the matrix operator norm), and the fourth property ensures that the inner products in \(E_\lambda\) and \(\gamma(E_\lambda)\) coincide (up to a constant scalar \(\lambda\)).

Now let us go back to Eq.~\eqref{eq11102345}. Suppose \(\sigma(G)\) is the set of the eigenvalues of \(G\), and \(\perp\!\!(E_\lambda)\) is an orthonormal basis of \(E_\lambda\). We define \(f(\lambda)=\lambda^{-1}-\frac{1}{d(d+1)\cdots(d+T-1)}\), then for any \(\lambda\in\sigma(G)\), \(f(\lambda)\) is the corresponding non-negative eigenvalue of \(G'\). Thus,
\begin{equation}\label{1111141}
\begin{split}
\eqref{eq11102345}
&=\sum_{i,j}P_{\pi_i}\otimes P_{\pi_j}^* \left(\sum_{\lambda\in\sigma(G)} f(\lambda) \sum_{\bm{x}\in\perp(E_\lambda)} \bm{x}^\dag \bm{x}\right)_{ij}\\
&= \sum_{\lambda\in \sigma(G)} f(\lambda) \left[\sum_{\bm{x}\in \perp(E_\lambda)} \left(\sum_{i} x_i P_{\pi_i} \right)\otimes \left(\sum_{j}x^*_j P^*_{\pi_j}\right)\right]\\
&=\sum_{\lambda\in\sigma(G)} f(\lambda) \left[\sum_{\bm{x}\in \perp(E_\lambda)} \gamma(\bm{x}) \otimes \gamma(\bm{x})^*\right]\\
&=\sum_{\lambda\in\sigma(G)} f(\lambda) \left[\sum_{X\in \perp(\gamma(E_\lambda))} X\otimes X^*\right],
\end{split}
\end{equation}
where the third equality is due to the property 4 of Lemma~\ref{lemma_221111109}.

Then, the finite-dimensional \(C^*\)-algebra characterization of \(\gamma(E_\lambda)\) suffices to prove the positive semidefiniteness of \(\sum_{X\in \perp(\gamma(E_\lambda))} X\otimes X^*\). To show the intuition, let us consider a special case: if \(\gamma(E_\lambda)\) is 1-dimensional, then for a non-zero \(X\in\gamma(E_\lambda)\), we have \(X^\dag X\in \gamma(E_\lambda),\) and thus \(X^\dag X = cX\) where \(0\neq c\in\mathbb{C}\); therefore \(X\otimes X^*= \frac{1}{|c|^2} (X^\dag X) \otimes (X^\dag X)^*\sqsupseteq 0\). Now let us consider the general case that \(\dim(\gamma(E_\lambda))>1\). Note that \(\gamma(E_\lambda)\) (as a collection of matrices) is the identity representation of itself (as a \(C^*\)-algebra). Thus, the representation \(\gamma(E_\lambda)\) is unitarily equivalent to a direct sum of non-zero irreducible representations and zero representations~\cite{davidson1996c}, where those isomorphic irreducible representations are grouped. That is, there exists a unitary matrix \(U\) such that for any \(X\in \gamma(E_\lambda)\),
\begin{equation}\label{1111212}
X=U\left[\left(\bigoplus_i \rho_i(X)\otimes I_i\right)\oplus 0\right] U^\dag,
\end{equation}
where \(\rho_i\) is the \(i\)-th non-zero irreducible representation and \(0\) is a zero matrix. Then, by the density theorem (see, for example, \cite{etingof2011introduction}), each \(\rho_i\) forms a full matrix algebra. Thus, by the freedom of the choice of \(\perp\!(\gamma(E_\lambda))\), we can choose a simple orthonormal basis \(\{X_{i,j,k}\}_{i,j,k}\) such that,
\begin{equation}
X_{i,j,k}=\frac{1}{\sqrt{\tr(I_i)}}U \begin{bmatrix}
\ddots &   &  & & \\
  & 0 &  & &\\
  &   &  \ketbra{j}{k}\otimes I_i & &\\
  & & & 0&\\
  & &&&\ddots
\end{bmatrix}U^\dag.
\end{equation}
That is, for each basis element \(X_{i,j,k}\), its representation is only non-zero in exactly one summand in Eq.~\eqref{1111212} (say, the \(i\)-th summand), and in this summand, its representation is \(\ketbra{j}{k}\otimes I_i/\sqrt{\tr(I_i)}\). Thus
\begin{equation}\label{1111224}
\begin{split}
&\sum_{X\in \perp(\gamma(E_\lambda))} X\otimes X^*=\sum_{i,j,k} X_{i,j,k}\otimes X_{i,j,k}^*\\
=&\sum_{i}\frac{1}{\tr(I_i)}(U\otimes U^*)
\begin{bmatrix}
\ddots &   &  & & \\
  & 0 &  & &\\
  & &\sum_{j,k} \left(\ketbra{j}{k}\otimes I_i\right) \otimes \left(\ketbra{j}{k}\otimes I_i\right) & &\\
  & & & 0&\\
  & &&&\ddots
\end{bmatrix}
(U \otimes U^*)^\dag.
\end{split}
\end{equation}
It is easy to verify that 
\begin{equation}
\sum_{j,k} (\ketbra{j}{k}\otimes I_i)^{\otimes 2}=\left(\sum_j \ket{j}\otimes I_i\otimes \ket{j}\otimes I_i\right)\left(\sum_k \bra{k}\otimes I_i\otimes \bra{k}\otimes I_i\right)\sqsupseteq 0.
\end{equation}
Thus \(\sum_{X\in \perp(\gamma(E_\lambda))} X\otimes X^*\) is positive semidefinite. Since we have \(f(\lambda)\geq 0\) for \(\lambda \in \sigma(G)\), we can conclude that
\begin{equation}
\sum_k A_k\otimes A_k^*- \frac{1}{d(d+1)\cdots (d+T-1)} \sum_i P_{\pi_i} \otimes P^*_{\pi_i}=\eqref{1111141}\sqsupseteq 0.
\end{equation}

\subsection{Unitarity estimation}
Then, to further lower bound the unitarity estimation, we consider the following problem:
\begin{problem}\label{pro-11232006}
Suppose that a quantum channel \(\mathcal{E}\) is one of the following with equal probability:
\begin{enumerate}
    \item \(\mathcal{E}=\mathcal{E}_1\) where \(\mathcal{E}_1:=a\mathcal{I}+b\mathcal{X}\),
    \item \(\mathcal{E}=\mathcal{E}_2\) where \(\mathcal{E}_2:=(a+\epsilon)\mathcal{I}+(b-\epsilon)\mathcal{X}\),
\end{enumerate}
where \(a=2/3,b=1/3\), \(\mathcal{X}(\rho)=X\rho X^\dag\) and \(X:\ket{i}\mapsto \ket{i+1 \,(\textup{mod}\,\, d)}\) for \(i=0,\ldots,d-1\). The task is to distinguish between the above two cases.
\end{problem}
We have the following theorem:
\begin{theorem}[Lower bound for Problem~\ref{pro-11232006}, coherent access]\label{theorem-11232020}
Any learning algorithm, even with coherent access, requires
\begin{equation}
T\geq \Omega(\epsilon^{-2})
\end{equation}
oracle calls to the quantum channel \(\mathcal{E}: \mathbb{M}_d\rightarrow \mathbb{M}_d\) to solve problem \ref{pro-11232006} with probability at least \(2/3\). 
\end{theorem}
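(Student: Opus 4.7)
The channels $\mathcal{E}_1$ and $\mathcal{E}_2$ are mixtures of the Weyl-Heisenberg operators $I$ and $X$, and are therefore \emph{teleportation-covariant}~\cite{pirandola2017fundamental}: a call to $\mathcal{E}_i$ on any input can be simulated by consuming one copy of its Jamio{\l}kowski state $\mathfrak{J}(\mathcal{E}_i)$ via gate teleportation (a Bell measurement on the input together with one share of $\mathfrak{J}(\mathcal{E}_i)$, followed by a deterministic Pauli correction that can be absorbed into the next processing channel $\mathcal{Q}_k$ of Fig.~\ref{fig:co}). Consequently, any $T$-query learning protocol with coherent access to $\mathcal{E}_i$ can be reproduced by a single fixed quantum circuit consuming the product state $\mathfrak{J}(\mathcal{E}_i)^{\otimes T}$. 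By the data-processing inequality, the bias with which such a protocol solves Problem~\ref{pro-11232006} is at most $\tfrac{1}{2}\|\mathfrak{J}(\mathcal{E}_1)^{\otimes T}-\mathfrak{J}(\mathcal{E}_2)^{\otimes T}\|_1$, so the task reduces to bounding this tensor-product trace distance.

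\textbf{Computing the per-copy fidelity.} Let $\ket{\Phi_X}:=(X\otimes I)\ket{\Phi}$; because different Weyl-Heisenberg corrections produce orthogonal maximally entangled states, $\{\ket{\Phi},\ket{\Phi_X}\}$ is orthonormal and
\begin{equation}
\mathfrak{J}(\mathcal{E}_1) = a\ketbra{\Phi}{\Phi}+b\ketbra{\Phi_X}{\Phi_X}, \qquad \mathfrak{J}(\mathcal{E}_2) = (a+\epsilon)\ketbra{\Phi}{\Phi}+(b-\epsilon)\ketbra{\Phi_X}{\Phi_X}.
\end{equation}
These density operators commute, so their fidelity is the classical Bhattacharyya overlap
\begin{equation}
F\bigl(\mathfrak{J}(\mathcal{E}_1),\mathfrak{J}(\mathcal{E}_2)\bigr) = \Bigl(\sqrt{a(a+\epsilon)}+\sqrt{b(b-\epsilon)}\Bigr)^{2} = 1-\Theta(\epsilon^{2}),
\end{equation}
with the hidden constant depending only on $a=2/3$, $b=1/3$ and in particular independent of $d$.

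\textbf{Lifting to $T$ copies and concluding.} By the multiplicativity of the fidelity under tensor products, $F\bigl(\mathfrak{J}(\mathcal{E}_1)^{\otimes T},\mathfrak{J}(\mathcal{E}_2)^{\otimes T}\bigr)\ge (1-C\epsilon^{2})^{T}$ for some constant $C>0$. Combining with the Fuchs--van de Graaf inequality $\|\rho-\sigma\|_{1}\le 2\sqrt{1-F(\rho,\sigma)}$ yields
\begin{equation}
\tfrac{1}{2}\bigl\|\mathfrak{J}(\mathcal{E}_1)^{\otimes T}-\mathfrak{J}(\mathcal{E}_2)^{\otimes T}\bigr\|_{1} \le \sqrt{1-(1-C\epsilon^{2})^{T}} = O\bigl(\sqrt{T}\,\epsilon\bigr).
\end{equation}
Hence any algorithm that distinguishes the two cases with probability $\ge 2/3$ must satisfy $\sqrt{T}\,\epsilon = \Omega(1)$, i.e., $T=\Omega(\epsilon^{-2})$.

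\textbf{Main obstacle.} The genuinely delicate step is the teleportation-simulation reduction: one must check carefully that, for each of the $T$ calls to $\mathcal{E}_i$ inside the coherent protocol, the random Pauli byproduct of gate teleportation can be commuted through (or folded into) the interleaving channels $\mathcal{Q}_{k}$, so that the output measurement statistics depend on $\mathcal{E}_i$ only through $\mathfrak{J}(\mathcal{E}_i)^{\otimes T}$. This relies essentially on the fact that $\mathcal{E}_i$ commutes with conjugation by every Weyl-Heisenberg operator (up to a harmless phase). Once this structural reduction is established, the remaining calculations---diagonalizing the two Jamio{\l}kowski states in the Bell basis, writing the closed-form commuting fidelity, and invoking multiplicativity---are routine.
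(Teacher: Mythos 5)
Your proposal is correct and follows essentially the same route as the paper: teleportation-covariance of the Weyl--Heisenberg mixtures reduces the coherent $T$-query protocol to a fixed channel acting on $\mathfrak{J}(\mathcal{E}_i)^{\otimes T}$, the two commuting Jamio{\l}kowski states have per-copy fidelity $1-\Theta(\epsilon^2)$, and multiplicativity of fidelity plus the Fuchs--van de Graaf and Holevo--Helstrom bounds give $T=\Omega(\epsilon^{-2})$. The only differences are cosmetic (squared versus unsquared fidelity convention, and the paper folds the outcome-dependent correction into the universal simulator $\mathcal{S}$ rather than into the $\mathcal{Q}_k$).
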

\begin{proof}
Any learning algorithm with coherent access for an unknown quantum channel \(\mathcal{E}\) has the following form:
\begin{equation}\label{eq-124513}
\begin{quantikz}[row sep=0.2em, column sep=1.0em]
	 \lstick[2]{\(\rho\)}  &\qw & \gate{\mathcal{E}} & \gate[2]{\mathcal{Q}_1} & \gate{\mathcal{E}}  &\gate[2]{\mathcal{Q}_2}&\qw \midstick[2,brackets=none]{\(\cdots\)} &\gate{\mathcal{E}}& \gate[2]{\mathcal{Q}_{T-1}}& \gate{\mathcal{E}} &\qw&\gate[2,style={and gate US,draw,inner sep=-3pt}]{\textup{POVM}} \\
	   & \qw &\qw & \qw &\qw&\qw & \qw &\qw & \qw& \qw&\qw&\qw
\end{quantikz}.
\end{equation}
This algorithm acts on the quantum system \(\Hmain\otimes\Hanc\), where $\Hmain$ is the main system (the first line), $\Hanc$ is an ancilla system (the second line), and the dimension of the ancilla system is not limited. The algorithm is defined by the initial state \(\rho\) and transformations \(\mathcal{E}\otimes \mathcal{I},\mathcal{Q}_1,\mathcal{E}\otimes \mathcal{I},\mathcal{Q}_2,\ldots,\mathcal{E}\otimes\mathcal{I},\mathcal{Q}_{T-1},\mathcal{E}\otimes \mathcal{I}\). The transformations \(\mathcal{Q}_i\) are quantum channels. The initial state \(\rho\), transformation \(\mathcal{Q}_i\) and the POVM are independent of \(\mathcal{E}\). The algorithm consists of performing these transformations on the initial state \(\rho\) and then measuring the result. 

Note that \(I,X\) belong to the Weyl-Heisenberg group (or Pauli group if \(d=2\)), so that the channels \(\mathcal{E}_1\) and \(\mathcal{E}_2\) are both teleportation-covariant~\cite{pirandola2017fundamental}, which means they can be simulated by teleporting the input state over their Jamio{\l}kowski states. That is, there is a ``universal'' simulator \(\mathcal{S}\) based on quantum teleportation, such that for any state \(\sigma\) and any quantum channel \(\mathcal{E}=\sum_i E_i\circ E_i^\dag\) in which each \(E_i\) is proportional to a Weyl-Heisenberg operator, we have
\begin{equation}
\begin{quantikz}[row sep=0.5em, column sep=1.0em]
	 \lstick[2]{\(\mathfrak{J}(\mathcal{E})\)} \qw	&\qw &\gate[3]{\mathcal{S}} &\qw& \qw\rstick[wires=2]{trace out}\\
	   \qw &\qw &  & \qw & \qw\\
	   \lstick[2]{\(\sigma\)} \qw & \qw &\qw &\qw  & \qw\\
	   	  \qw & \qw & \qw& \qw & \qw\\
\end{quantikz}
\quad=\quad
\begin{quantikz}[row sep=0.8em, column sep=1.0em]
	   \lstick[2]{\(\sigma\)} \qw & \qw &\gate{\mathcal{E}} &\qw  & \qw\\
	   	  \qw & \qw & \qw& \qw & \qw\\
\end{quantikz}.
\end{equation}
Therefore, Eq.~\eqref{eq-124513} can be written as
\begin{equation}\label{eq-124545}{\small
\begin{quantikz}[row sep=0.7em, column sep=1.0em]
     & \lstick[2]{\(\mathfrak{J}(\mathcal{E})\)\!} & \gate[3]{\mathcal{S}} &&&\lstick[2]{\(\mathfrak{J}(\mathcal{E})\)\!} & \gate[3]{\mathcal{S}}&&&\midstick[4,brackets=none]{\(\cdots\)}& \lstick[2]{\(\mathfrak{J}(\mathcal{E})\)\!} & \gate[3]{\mathcal{S}} &&& \lstick[2]{\(\mathfrak{J}(\mathcal{E})\)\!} & \gate[3]{\mathcal{S}}\\
     &&&&&&&&&&&&&&&\\
	 \lstick[2]{\(\rho\)}&\qw & \qw &\qw& \gate[2]{\mathcal{Q}_1} &\qw &\qw& \qw&\gate[2]{\mathcal{Q}_2} &\qw & \qw&\qw& \qw&\gate[2]{\mathcal{Q}_{T-1}}& \qw & \qw & \qw&\gate[2,style={and gate US,draw,inner sep=-3pt}]{\textup{POVM}} \\
	 	&\qw &\qw&\qw&\qw & \qw &\qw & \qw & \qw &\qw&\qw &\qw &\qw&\qw&\qw&\qw&\qw&\qw
\end{quantikz}}.
\end{equation}
Let \(\mathcal{F}\) denote the overall quantum channel (excluding the POVM) in Eq.~\eqref{eq-124545}. Then, the output of this channel is of the form \(\mathcal{F}(\mathfrak{J}(\mathcal{E})^{\otimes T}\otimes \rho)\). Thus, the trace distance between the output of the learning algorithm on different channels \(\mathcal{E}_1\), \(\mathcal{E}_2\) can be bounded by,
\begin{equation}\label{eq-124600}
\begin{split}
\left\|\mathcal{F}\left(\mathfrak{J}(\mathcal{E}_1)^{\otimes T}\otimes \rho\right)-\mathcal{F}\left(\mathfrak{J}(\mathcal{E}_2)^{\otimes T}\otimes \rho\right)\right\|_1
&\leq\left\|\mathfrak{J}(\mathcal{E}_1)^{\otimes T}\otimes \rho -\mathfrak{J}(\mathcal{E}_2)^{\otimes T}\otimes\rho\right\|_1\\
&=\left\|\mathfrak{J}(\mathcal{E}_1)^{\otimes T}-\mathfrak{J}(\mathcal{E}_2)^{\otimes T}\right\|_1,
\end{split}
\end{equation}
where the inequality is due to the contractivity of trace norm under quantum operations.
This motivates us to consider the Jamio{\l}kowski states of \(\mathcal{E}_1,\mathcal{E}_2\):
\begin{equation}
\mathfrak{J}(\mathcal{E}_1)=a\kettbbra{\frac{I}{\sqrt{d}}}{\frac{I}{\sqrt{d}}}+b\kettbbra{\frac{X}{\sqrt{d}}}{\frac{X}{\sqrt{d}}},\quad\quad
\mathfrak{J}(\mathcal{E}_2)=(a+\epsilon)\kettbbra{\frac{I}{\sqrt{d}}}{\frac{I}{\sqrt{d}}}+(b-\epsilon)\kettbbra{\frac{X}{\sqrt{d}}}{\frac{X}{\sqrt{d}}}.
\end{equation}
Thus the fidelity of \(\mathfrak{J}(\mathcal{E}_1)\) and \(\mathfrak{J}(\mathcal{E}_2)\) is
\begin{equation}
\begin{split}
F\left(\mathfrak{J}(\mathcal{E}_1),\mathfrak{J}(\mathcal{E}_2)\right)&=\sqrt{a(a+\epsilon)}+\sqrt{b(b-\epsilon)}=1+\left(\sqrt{a(a+\epsilon)}-a\right)+\left(\sqrt{b(b-\epsilon)}-b\right)\\
&=1+\frac{a\epsilon}{\sqrt{a(a+\epsilon)}+a}-\frac{b\epsilon}{\sqrt{b(b-\epsilon)}+b}=1-\epsilon\left( \frac{1}{\sqrt{1-\epsilon/b}+1}-\frac{1}{\sqrt{1+\epsilon/a}+1}\right)\\
&=1-\epsilon \left(\sqrt{1+\epsilon/a}-\sqrt{1-\epsilon/b}\right) O(1)=1-\epsilon\left(\frac{\epsilon/a+\epsilon/b}{\sqrt{1+\epsilon/a}+\sqrt{1-\epsilon/b}}\right)O(1)\\
&=1-\epsilon^2 O(1)=1-O(\epsilon^2).
\end{split}
\end{equation}
Due to the Holevo–Helstrom theorem~\cite{helstrom1969quantum} (also see, for example, 
Theorem 3.4 in \cite{watrous2018theory}), if a learning algorithm can distinguish \(\mathcal{E}_1,\mathcal{E}_2\), then the trace distance between the output of the learning algorithm on \(\mathcal{E}_1\) and \(\mathcal{E}_2\) must be larger than a constant. Thus by Eq.~\eqref{eq-124600}, we have
\begin{equation}
\Omega(1)\leq\|\mathfrak{J}(\mathcal{E}_1)^{\otimes T}-\mathfrak{J}(\mathcal{E}_2)^{\otimes T}\|_1\leq \sqrt{1-F\left(\mathfrak{J}(\mathcal{E}_1)^{\otimes T},\mathfrak{J}(\mathcal{E}_2)^{\otimes T}\right)^2}.
\end{equation}
Therefore,
\begin{equation}
\begin{split}
1-\Omega(1)&\geq F\left(\mathfrak{J}(\mathcal{E}_1)^{\otimes T},\mathfrak{J} (\mathcal{E}_2)^{\otimes T}\right)\\
&=F\left(\mathfrak{J}(\mathcal{E}_1),\mathfrak{J}(\mathcal{E}_2)\right)^T=\left(1-O(\epsilon^2)\right)^T\\
&\geq 1-T O(\epsilon^2).
\end{split}
\end{equation}
Thus
\begin{equation}
T\geq \Omega(\epsilon^{-2}).
\end{equation}
\end{proof}
Then, we have the lower bounds for unitarity estimation.
\begin{theorem}[Lower bound for unitarity estimation, coherent access]\label{thm-12111722}
Any learning algorithm with coherent access requires
\begin{equation}
T\geq \Omega(\epsilon^{-2})
\end{equation}
oracle calls to \(\mathcal{E}: \mathbb{M}_d\rightarrow \mathbb{M}_d\) to estimate the unitarity of \(\mathcal{E}\) to precision \(\epsilon\) with probability at least \(2/3\). 
\end{theorem}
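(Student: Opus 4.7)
The plan is to obtain this bound by a direct reduction from Problem~\ref{pro-11232006}, whose lower bound is already established by Theorem~\ref{theorem-11232020}. The key observation is that the two candidate channels in that problem have unitarities differing by $\Theta(\epsilon)$, so any sufficiently precise unitarity estimator can be used to solve the distinguishing task.

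First, I would compute the unitarities of $\mathcal{E}_1$ and $\mathcal{E}_2$ using the matrix-representation formula $\mathfrak{u}(\mathcal{E}) = \frac{1}{d^2}\tr(\mathcal{E}^\dag\mathcal{E})$ from Proposition~1. Writing $\mathcal{E} = a\mathcal{I} + b\mathcal{X}$ in matrix form as $aI\otimes I + bX\otimes X^*$, one expands
\begin{equation}
\mathcal{E}^\dag\mathcal{E} = (a^2+b^2)\,I\otimes I + ab\bigl(X\otimes X^* + X^\dag\otimes X^T\bigr),
\end{equation}
where the simplification $X^\dag X = I$ has been used. Since $X$ is the cyclic shift, $\tr(X) = 0$ for $d\geq 2$, so the cross terms are traceless and $\mathfrak{u}(a\mathcal{I}+b\mathcal{X}) = a^2 + b^2$. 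Substituting the parameters of Problem~\ref{pro-11232006} gives
\begin{equation}
\mathfrak{u}(\mathcal{E}_1) = a^2+b^2 = \tfrac{5}{9}, \qquad \mathfrak{u}(\mathcal{E}_2) = (a+\epsilon)^2+(b-\epsilon)^2 = \tfrac{5}{9} + 2(a-b)\epsilon + 2\epsilon^2 = \tfrac{5}{9} + \tfrac{2}{3}\epsilon + 2\epsilon^2,
\end{equation}
so the two unitarities differ by $\tfrac{2}{3}\epsilon + O(\epsilon^2) = \Theta(\epsilon)$ for small $\epsilon$.

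Given this gap, suppose an algorithm $A$ estimates the unitarity of any unknown quantum channel to precision $\delta$ with success probability at least $2/3$ using $T(\delta)$ coherent queries. Choose a small constant $c>0$ such that $c\epsilon$ is strictly less than half the gap $\tfrac{2}{3}\epsilon + O(\epsilon^2)$ for all sufficiently small $\epsilon$. Running $A$ with precision $\delta = c\epsilon$ on the unknown channel of Problem~\ref{pro-11232006} and thresholding the estimate at the midpoint of $\mathfrak{u}(\mathcal{E}_1)$ and $\mathfrak{u}(\mathcal{E}_2)$ solves that distinguishing problem with success probability at least $2/3$. By Theorem~\ref{theorem-11232020}, this forces $T(c\epsilon)\geq \Omega(\epsilon^{-2})$, and after renaming $c\epsilon\to\epsilon$ (absorbing the constant into the $\Omega$), we conclude $T(\epsilon)\geq \Omega(\epsilon^{-2})$. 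The real technical work — the lower bound for the two-point distinguishing problem via teleportation-covariance and fidelity multiplicativity — is already done in Theorem~\ref{theorem-11232020}, so no further obstacle remains; the only thing to verify carefully is that the gap between the two unitarities is indeed linear in $\epsilon$ (not accidentally zero), which follows from $a\neq b$ in the chosen parameters.
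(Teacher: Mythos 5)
Your proposal is correct and follows essentially the same route as the paper: compute $\mathfrak{u}(\mathcal{E}_1)=a^2+b^2$ and $\mathfrak{u}(\mathcal{E}_2)=(a+\epsilon)^2+(b-\epsilon)^2$, note the gap $(2a-2b)\epsilon+2\epsilon^2=\Omega(\epsilon)$, and reduce the distinguishing task of Problem~\ref{pro-11232006} to unitarity estimation so that Theorem~\ref{theorem-11232020} gives the $\Omega(\epsilon^{-2})$ bound. Your extra details (the traceless cross terms justifying $\mathfrak{u}=a^2+b^2$, and the explicit thresholding/precision-rescaling argument) are just a more explicit writing of the same reduction.
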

\begin{proof}
Note that any unitarity estimation algorithm to precision \(O(\epsilon)\) can solve Problem~\ref{pro-11232006}.
This is because:
\begin{equation}
\begin{split}
\mathfrak{u}(\mathcal{E}_2)-\mathfrak{u}(\mathcal{E}_1)&=\tr(\mathcal{E}_2^\dag \mathcal{E}_2)/d^2-\tr(\mathcal{E}_1^\dag \mathcal{E}_1)/d^2=(a+\epsilon)^2+(b-\epsilon)^2-a^2-b^2\\
&=(2a-2b)\epsilon+2\epsilon^2=\Omega(\epsilon).
\end{split}
\end{equation}
Thus the lower bound for problem~\ref{pro-11232006} applies to unitarity estimation.
\end{proof}
\begin{theorem}[Lower bound for unitarity estimation, incoherent access]
Any learning algorithm with incoherent access requires
\begin{equation}
T\geq \Omega(\sqrt{d}+\epsilon^{-2})
\end{equation}
oracle calls to \(\mathcal{E}: \mathbb{M}_d\rightarrow \mathbb{M}_d\) to estimate the unitarity of \(\mathcal{E}\) to precision \(\epsilon\) with probability at least \(2/3\). 
\end{theorem}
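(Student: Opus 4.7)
The plan is to obtain the incoherent lower bound by combining the two lower bounds already established, exploiting that incoherent access is strictly weaker than coherent access.

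First, I would invoke Theorem~\ref{theorem-11232021}: any incoherent algorithm distinguishing whether $\mathcal{E}$ is the completely depolarizing channel $\mathcal{D}$ or a Haar-random unitary channel $\mathcal{U}$ with constant success probability requires $\Omega(\sqrt{d})$ calls. Since $\mathfrak{u}(\mathcal{D})=1/d^2$ while $\mathfrak{u}(\mathcal{U})=1$, any estimator achieving, say, precision $\epsilon=1/4$ on unitarity can be used to distinguish the two hypotheses for all $d\geq 2$. Thus any incoherent unitarity estimator with any precision $\epsilon\leq 1/4$ inherits the $\Omega(\sqrt{d})$ lower bound, yielding the $\sqrt{d}$ term.

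Second, I would invoke Theorem~\ref{thm-12111722}: any coherent (hence any incoherent) algorithm estimating unitarity to precision $\epsilon$ needs $\Omega(\epsilon^{-2})$ calls. The key point is that the framework of Fig.~\ref{fig:inco} is a restriction of Fig.~\ref{fig:co} --- every incoherent protocol can be trivially simulated by a coherent one with the same query complexity, so the coherent lower bound transfers. This gives the $\epsilon^{-2}$ term.

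Finally, since the number of oracle calls $T$ must be at least $\Omega(\sqrt{d})$ and at least $\Omega(\epsilon^{-2})$ simultaneously, we get $T\geq \Omega(\max\{\sqrt{d},\epsilon^{-2}\})=\Omega(\sqrt{d}+\epsilon^{-2})$, proving the theorem. There is no genuine obstacle here: both ingredients are in place, and the only thing to verify carefully is the elementary reduction from the two-hypothesis distinguishing task (Problem~\ref{depolar-unitary}) to unitarity estimation at constant precision, plus the observation that the incoherent model is a restriction of the coherent one. The proof is essentially two lines long once the prior theorems are cited.
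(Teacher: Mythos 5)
Your proposal is correct and matches the paper's own argument: the paper likewise proves this theorem by combining the $\Omega(\sqrt{d})$ bound of Theorem~\ref{theorem-11232021} (via the reduction from the depolarizing-vs-unitary problem, using $\mathfrak{u}(\mathcal{D})=1/d^2$ versus $\mathfrak{u}(\mathcal{U})=1$) with the $\Omega(\epsilon^{-2})$ bound of Theorem~\ref{thm-12111722}, which applies to incoherent access since it is a restriction of coherent access. Your extra care in spelling out the constant-precision reduction and the simulation of incoherent by coherent protocols only makes explicit what the paper leaves implicit.
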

\begin{proof}
We can combine the previous lower bound for incoherent access in Theorem~\ref{theorem-11232021} with the lower bound in Theorem~\ref{thm-12111722} to obtain the lower bound \(\Omega(\sqrt{d}+\epsilon^{-2})\) for unitarity estimation with incoherent access.
\end{proof}

\section{Benchmarking quantum processes with unitarity}\label{sec-127110}
Unitarity has shown various applications in benchmarking quantum processes. For example, it can be used in a specific certification task~\cite{montanaro2013survey}, which is to distinguish whether an unknown channel is a unitary channel or is \(\epsilon\)-far from any unitary channel; it also provides a lower bound for the best achievable gate infidelity for a noise channel, assuming perfect unitary control~\cite{wallman2015estimating}. Here, we introduce a holistic view of utilizing unitarity in benchmarking quantum process.

Consider the following problem: given a quantum channel \(\mathcal{E}\), how well it can be approximated by a unitary channel \(\mathcal{U}\). In this problem, a quantification of the closeness between quantum channels is needed. In the framework of quantum certification and benchmarking~\cite{emerson2005scalable,magesan2011scalable,eisert2020quantum}, the (average) gate fidelity \(F_\textup{a}\)~\cite{nielsen2002simple,emerson2005scalable} is widely used as a closeness measure of quantum channels, which is defined as
\begin{equation}
F_\textup{a}(\mathcal{U},\mathcal{E})=\int \textup{d}\psi\,\, \bra{\psi}U^\dag \mathcal{E}\left(\ketbra{\psi}{\psi}\right) U\ket{\psi}.
\end{equation}
Then, adopting this measure \(F_\textup{a}\), the \textit{unitary approximability} of \(\mathcal{E}\) can be written as:
\begin{equation}
\sup_{U\in\mathbb{U}_d} F_{\textup{a}}(\mathcal{U},\mathcal{E}).
\end{equation}
We show that the unitarity \(\mathfrak{u}(\mathcal{E})\) provides a good estimate for the unitary approximability of \(\mathcal{E}\), and the proof is given in Appendix~\ref{sec-01191541}.
\begin{theorem}\label{theorem-11252053}
Suppose \(\mathcal{E}\) is a quantum channel acting on a \(d\)-dimensional system. For convenience, let \(\mathfrak{u}\) stands for \(\mathfrak{u}(\mathcal{E})\). Then,
\begin{equation}\label{eq-12131554}
\frac{d}{d+1}\mathfrak{u}^2+\frac{1}{d+1}\sqrt{\mathfrak{u}}\leq \sup_{U\in \mathbb{U}_d} F_\textup{a}(\mathcal{U},\mathcal{E}) \leq \frac{d}{d+1}\sqrt{\mathfrak{u}}+\frac{1}{d+1}.
\end{equation}
Note that if \(\mathcal{E}\) is a unitary channel, then both the lower bound and upper bound are saturated.
\end{theorem}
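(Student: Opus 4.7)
The plan is to reduce both bounds in Theorem~\ref{theorem-11252053} to inequalities on the \emph{process fidelity} $F_{\textup{pro}}(\mathcal{U},\mathcal{E}) := \bra{\phi_U}\mathfrak{J}(\mathcal{E})\ket{\phi_U}$, where $\ket{\phi_U} := (U\otimes I)\ket{\Phi} = \kett{U}/\sqrt{d}$. The key first step is the identity
\begin{equation*}
F_\textup{a}(\mathcal{U},\mathcal{E}) \;=\; \frac{d\,F_{\textup{pro}}(\mathcal{U},\mathcal{E}) + \mathfrak{t}(\mathcal{E})}{d+1}, \qquad \mathfrak{t}(\mathcal{E}) := \tr\mathfrak{J}(\mathcal{E}),
\end{equation*}
which I would derive by expanding $\mathcal{E}$ in Kraus form, applying the standard integral $\int\!\textup{d}\psi\,|\bra{\psi}A\ket{\psi}|^2 = (|\tr A|^2 + \tr(A^\dagger A))/(d(d+1))$ with $A = U^\dagger E_i$, and summing over $i$. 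Diagonalising $\mathfrak{J}(\mathcal{E}) = \sum_k \lambda_k \ketbra{v_k}{v_k}$ with $\lambda_1\geq\lambda_2\geq\cdots\geq 0$ turns the rest into spectral questions.

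For the upper bound, $F_{\textup{pro}}(\mathcal{U},\mathcal{E}) \leq \lambda_1 \leq \sqrt{\sum_k\lambda_k^2} = \sqrt{\mathfrak{u}}$ together with $\mathfrak{t}\leq 1$ (trace non-increase) immediately gives $\tfrac{d\sqrt{\mathfrak{u}}+1}{d+1}$. For the lower bound I need two ingredients: the easy one is $\mathfrak{t}\geq\sqrt{\mathfrak{u}}$, from $\mathfrak{t}^2 = (\sum_k\lambda_k)^2 \geq \sum_k\lambda_k^2 = \mathfrak{u}$ by non-negativity of the $\lambda_k$; the substantive one is $\sup_U F_{\textup{pro}}(\mathcal{U},\mathcal{E}) \geq \mathfrak{u}^2$. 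For the latter I would write the top eigenvector as $\ket{v_1} = \kett{V_1}/\sqrt{d}$ with $\tr(V_1^\dagger V_1) = d$, take the polar decomposition $V_1 = U_1 P_1$ with $P_1\succeq 0$, and estimate
\begin{equation*}
F_{\textup{pro}}(\mathcal{U}_1,\mathcal{E}) \;\geq\; \lambda_1\,|\braket{\phi_{U_1}}{v_1}|^2 \;=\; \lambda_1\,\frac{(\tr P_1)^2}{d^2}.
\end{equation*}

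The main obstacle is lower bounding $(\tr P_1)^2$ with the correct $\lambda_1$-dependence. Here I would invoke trace non-increase $\sum_k\lambda_k V_k^\dagger V_k \preceq I$, which forces $\lambda_1 V_1^\dagger V_1 \preceq I$ and hence the singular values $p_i$ of $V_1$ obey $p_i \leq 1/\sqrt{\lambda_1}$ together with $\sum_i p_i^2 = d$. From $p_i^2 \leq p_i/\sqrt{\lambda_1}$ I then get $\tr P_1 \geq \sqrt{\lambda_1}\sum_i p_i^2 = d\sqrt{\lambda_1}$, while from $p_i\geq 0$ it is trivially $\tr P_1 \geq \sqrt{\sum_i p_i^2} = \sqrt{d}$. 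Combined with $\lambda_1 \geq \mathfrak{u}$ (since $\sum_k\lambda_k^2 \leq \lambda_1 \sum_k \lambda_k \leq \lambda_1$), the first bound yields $F_{\textup{pro}}(\mathcal{U}_1,\mathcal{E}) \geq \lambda_1^2 \geq \mathfrak{u}^2$ when $\lambda_1 \geq 1/d$, and the second yields $F_{\textup{pro}}(\mathcal{U}_1,\mathcal{E}) \geq \lambda_1/d \geq \mathfrak{u}/d \geq \mathfrak{u}^2$ when $\lambda_1 < 1/d$ (because then $\mathfrak{u}\leq\lambda_1<1/d$). Either way $\sup_U F_{\textup{pro}}\geq \mathfrak{u}^2$, and plugging this together with $\mathfrak{t}\geq\sqrt{\mathfrak{u}}$ into the identity for $F_\textup{a}$ closes the argument.
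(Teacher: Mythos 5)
Your proposal is correct, and it is essentially the paper's own proof rewritten in Choi-spectral language: the identity \(F_\textup{a}=(d\,F_{\textup{pro}}+\mathfrak{t})/(d+1)\), the bound \(\mathfrak{t}\geq\sqrt{\mathfrak{u}}\), the candidate unitary taken as the polar/SVD unitary factor of the dominant Kraus operator (your top Choi eigenvector \(V_1\)), and the use of trace non-increase \(\lambda_1 V_1^\dagger V_1\preceq I\) to obtain \(\tr P_1\geq d\sqrt{\lambda_1}\) (the paper's \(\tr\bigl(\sqrt{E_1^\dagger E_1}\bigr)\geq\tr\bigl(E_1^\dagger E_1\bigr)\)) all coincide step for step, with your \(\lambda_1\geq\mathfrak{u}\) playing the role of the paper's \(\tr\bigl(E_1^\dagger E_1\bigr)\geq d\,\mathfrak{u}/\mathfrak{t}\). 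The only cosmetic difference is your case split on whether \(\lambda_1\geq 1/d\), which is unnecessary since \(F_{\textup{pro}}(\mathcal{U}_1,\mathcal{E})\geq\lambda_1^2\geq\mathfrak{u}^2\) already holds unconditionally.
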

Theorem \ref{theorem-11252053} shows the following applications.
\begin{enumerate}
\item Consider the unitarity certification problem: we want to distinguish that whether \(\mathcal{E}\) is a unitary channel or is \(\epsilon\)-far from any unitary channel in gate infidelity. For the former case, the LHS of Eq.~\eqref{eq-12131554} is saturated (equal to \(1\)); and for the latter case, the LHS of Eq.~\eqref{eq-12131554} is \(\epsilon\)-far from \(1\) since \(\sup_{U\in \mathbb{U}_d} F_\textup{a}(\mathcal{U},\mathcal{E})\leq 1-\epsilon\). Then an estimate \(\hat{\mathfrak{u}}\) for \(\mathfrak{u}(\mathcal{E})\) to precision \(O(\epsilon)\) suffices for this task. This is because, if the estimated value \(\hat{\mathfrak{u}}\) is less than a constant, say \(1/2\), then we can directly claim the the LHS of Eq.~\eqref{eq-12131554} is \(\Omega(1)\)-far from \(1\); otherwise, \(\hat{\mathfrak{u}}\) implies an \(O(\epsilon)\)-precise estimate for the LHS of Eq.~\eqref{eq-12131554}, which is sufficient for distinguishing the two cases.

\item For noise process \(\mathcal{E}\), the RHS of Eq.~\eqref{eq-12131554} provides an upper bound for the best achievable gate fidelity of \(\mathcal{E}\) in the presence of perfect unitary control~\cite{wallman2015estimating}. To see this, we reinterpret the unitary approximability using the identity \(\sup_{U\in\mathbb{U}_d}F_\textup{a}(\mathcal{U},\mathcal{E})=\sup_{V,W\in\mathbb{U}_d}F_\textup{a}(\mathcal{I},\mathcal{V}\mathcal{E}\mathcal{W})\), where \(\mathcal{V}\) and \(\mathcal{W}\) can be seen as the ``recalibrating'' unitary channels for noise reduction.
Note that the upper bound by the RHS of Eq.~\eqref{eq-12131554} is incomparable to that given in~\cite{wallman2015estimating} (for example, consider the cases that \(\mathcal{E}=\mathcal{I}/2\) and \(\mathcal{E}\) is completely depolarizing, see Appendix~\ref{sec-ver} for more details).
\end{enumerate}
We believe further applications are also possible such as the average-to-worst-case error reduction~\cite{kueng2016comparing,wallman2015bounding} and the decomposition for the gate fidelity of composite channels~\cite{carignan2019bounding}.

\section*{Acknowledgment}

The authors thank Wang Fang for helpful discussions about writing. 

This work was supported in part by the National Key Research and Development Program of China under Grant 2018YFA0306701 and in part by the National Natural Science Foundation of China under Grant 61832015. 
Q. Wang was supported by the MEXT Quantum Leap Flagship Program (MEXT Q-LEAP) grants No. JPMXS0120319794.

\bibliographystyle{IEEEtran}
\bibliography{IEEEabrv,bib}

\appendix

\section{SWAP test for partial density operators}\label{sec-01190123}
Since we do not assume the trace-preservation of quantum channels, the output states \(\rho,\sigma\) can be partial density operators, i.e., \(\tr(\rho),\tr(\sigma)\leq 1\). Thus there can be a non-zero probability of obtaining an invalid result, or a failure \(\bot\) after performing the measurement. To tackle this issue, we simply set the current estimation to \(0\) if a failure \(\bot\) occurs. More specifically, suppose \(\rho,\sigma\) are partial density operators such that \(\tr(\rho),\tr(\sigma)\leq 1\), and let \(U_{\textup{ST}}\) denote the unitary corresponding to the circuit of SWAP test. Then, the extended SWAP test for partial density operators is as follows,
\begin{enumerate}
\item apply \(U_{\textup{ST}}\) on \(\ketbra{0}{0}\otimes\rho \otimes \sigma\),
\item measure with the POVM \(\left\{\ketbra{0}{0}\otimes I\otimes I,\ketbra{1}{1}\otimes I\otimes I\right\}\),
\item set \(w=1\) if the measurement outcome is \(0\), \(w=-1\) if the measurement outcome is \(1\), and \(w=0\) if the measurement outcome is a failure \(\bot\).
\end{enumerate}
It is easy to see that \(w\) is an unbiased estimator, i.e., \(\mathbb{E}[w]=\tr(\rho\sigma)\).

\section{Distributed quantum inner product estimation for partial density operators}\label{sec-1130109}
We extend the distributed quantum inner product estimation algorithm~\cite{anshu2022distributed} to accommodate the partial density operators (i.e., the positive operators with trace less than or equal to \(1\)) as input. If a density operator \(\rho\) has trace less than \(1\). Then any measurement on \(\rho\) has a non-zero probability \(1-\tr(\rho)\) to output an invalid result, or a failure \(\bot\). Our algorithm follows essentially the original single-copy measurement DQIPE algorithm~\cite{anshu2022distributed}, but with modified estimators to handle the partial density operators. 

We start by considering the partial collision estimator.
Suppose \(P,Q\in [0,1] \) are two unknown probabilities, and \(p',q'\) are two unknown distributions supported on \(\{0,\ldots,d-1\}\). We are given i.i.d samples \(x_1,\ldots,x_m\) and \(y_1,\ldots,y_m\), where with probability \(P\), \(x_i\sim p'\), and \(x_i=\bot\) otherwise; with probability \(Q\), \(y_i\sim q'\), and \(y_i=\bot\) otherwise. Our goal is to estimate the inner product 
\begin{equation}
g:=PQ\sum_{b=0}^{d-1}p'_b q'_b=\sum_{b=0}^{d-1}p_b q_b,
\end{equation}where \(p,q\) are two partial probability distributions defined as \(p_b:=P p'_b\) and \(q_b:= Q q'_b\).
\begin{definition}[Partial collision estimator]
Given samples \(x_1,\ldots,x_m\) and \(y_1,\ldots,y_m\), where for each \(x_i\), \(y_i\)
\begin{enumerate}
    \item with probability \(P\), \(x_i\) is sampled from the distribution \(p'\), and \(x_i=\bot\) otherwise;
    \item with probability \(Q\), \(y_i\) is sampled from the distribution \(q'\), and \(y_i=\bot\) otherwise.
\end{enumerate}
We define the partial collision estimator as
\begin{equation}\label{eq-11272324}
\tilde{g}=\frac{1}{m^2}\sum_{j,k=1}^m 1[x_j=y_k\neq \bot],
\end{equation}
where \(1[X]=1\) when \(X\) is true and \(0\) otherwise.
\end{definition}
It is easy to see that the partial collision estimator \(\tilde{g}\) is an unbiased estimator for \(g\), i.e., \(\mathbb{E}[\tilde{g}]=g\). Furthermore, its variance can also be bounded. Then, we make use of the partial collision estimator in the distributed quantum inner product estimation for partial density operators, as shown in Algorithm~\ref{alg_einnerproduct}.

\begin{algorithm}[ht]
    \caption{\raggedright Distributed quantum inner product estimation for partial density operators}\label{alg_einnerproduct}
    \begin{algorithmic}[1]
    \Require number of basis settings $N$, number of measurements for each basis $m$, $2Nm$ copies of unknown partial density operators $\rho,\sigma$ acting on $\mathbb{C}^{d}$
    \Ensure an estimate of $\tr(\rho\sigma)$
    \For {$i=1\cdots N$}
        \State sample a random unitary matrix $U_i\sim\mathbb{U}(d)$
        \State measure $m$ copies of $\rho$ in the basis $\{U_i^\dag\ketbra{b}{b}U_i\}_{b=0}^{d-1}$ and obtain $X_i=\{x_1,\dots,x_m\}$
        \State measure $m$ copies of $\sigma$ in the basis $\{U_i^\dag\ketbra{b}{b}U_i\}_{b=0}^{d-1}$ and obtain $Y_i=\{y_1,\dots,y_m\}$
        \State compute the partial collision estimator~\eqref{eq-11272324} using $X_i$ and $Y_i$, denote by $\tilde{g}_i$
        \State let $w_i=(d+1)\tilde{g}_i$
    \EndFor
    \State measure \(Nm\) copies of \(\rho\) with the trivial POVM \(\{I\}\) and obtain \(X=\{x_1,\ldots,x_{Nm}\}\)
    \State measure \(Nm\) copies of \(\sigma\) with the trivial POVM \(\{I\}\) and obtain \(Y=\{y_1,\ldots,y_{Nm}\}\)
    \State let \(t_1=\frac{1}{Nm}\,\#\{x_i\in X\,|\,x_i\neq \bot\}\)
    \State let \(t_2=\frac{1}{Nm}\,\#\{y_i\in Y\,|\,y_i\neq \bot\}\)
    \State\textbf{Return} $w:=\frac{1}{N}\sum_{i=1}^N w_i - t_1t_2$
    \end{algorithmic}
\end{algorithm}

\begin{lemma}\label{lm-1128217}
The output \(w\) of Algorithm~\ref{alg_einnerproduct} is an unbiased estimator for \(\tr(\rho\sigma)\), i.e.,
\begin{equation}
\mathbb{E} [w]=\tr(\rho\sigma).
\end{equation}
\end{lemma}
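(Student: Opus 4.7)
\textbf{Proof plan for Lemma~\ref{lm-1128217}.} The plan is to decompose $\mathbb{E}[w]$ into two pieces — the expectation of $\frac{1}{N}\sum_i w_i$ and the expectation of $t_1 t_2$ — and show that the first equals $\tr(\rho)\tr(\sigma) + \tr(\rho\sigma)$ while the second equals $\tr(\rho)\tr(\sigma)$, so that their difference is exactly $\tr(\rho\sigma)$.

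\textbf{Step 1: unbiasedness of the partial collision estimator.} For a fixed basis $U_i$, measuring $\rho$ in the basis $\{U_i^\dagger \ketbra{b}{b} U_i\}_b$ produces outcome $b$ with probability $p_b := \bra{b} U_i \rho U_i^\dagger \ket{b}$ and the failure symbol $\bot$ with probability $1-\tr(\rho)$; analogously for $\sigma$ with $q_b := \bra{b} U_i \sigma U_i^\dagger \ket{b}$. Fixing $U_i$ and conditioning on the $m$ samples, a direct calculation from the definition of $\tilde{g}_i$ in Eq.~\eqref{eq-11272324} (splitting the double sum into diagonal $j=k$ and off-diagonal $j\ne k$ contributions and using independence of $x_j,y_k$) gives
\begin{equation}
\mathbb{E}[\tilde{g}_i \mid U_i] = \sum_{b=0}^{d-1} p_b q_b = \tr\!\left(U_i \rho U_i^\dagger\, \ketbra{b}{b}\right)\!\text{-weighted collision},
\end{equation}
i.e.\ $\mathbb{E}[\tilde{g}_i \mid U_i] = \sum_b \bra{b} U_i \rho U_i^\dagger \ket{b}\, \bra{b} U_i \sigma U_i^\dagger \ket{b}$.

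\textbf{Step 2: Haar average.} Averaging over the Haar measure on $\mathbb{U}_d$ is the step where the key identity enters. Using the standard $2$-design formula
\begin{equation}
\int \mathrm{d}U\, U^{\otimes 2} (A\otimes B) (U^\dagger)^{\otimes 2} = \frac{\tr(A)\tr(B) I + \tr(AB)\, \mathrm{SWAP}}{d(d+1)} \cdot \text{(const.)},
\end{equation}
or more directly the consequence $\mathbb{E}_U\!\left[\sum_b \bra{b}U\rho U^\dagger\ket{b}\,\bra{b}U\sigma U^\dagger\ket{b}\right] = \frac{\tr(\rho)\tr(\sigma)+\tr(\rho\sigma)}{d+1}$, we obtain
\begin{equation}
\mathbb{E}[\tilde{g}_i] = \frac{\tr(\rho)\tr(\sigma) + \tr(\rho\sigma)}{d+1},
\end{equation}
so that $\mathbb{E}[w_i] = (d+1)\,\mathbb{E}[\tilde{g}_i] = \tr(\rho)\tr(\sigma) + \tr(\rho\sigma)$, and by linearity $\mathbb{E}\!\left[\frac{1}{N}\sum_{i=1}^N w_i\right] = \tr(\rho)\tr(\sigma) + \tr(\rho\sigma)$.

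\textbf{Step 3: the trace-correction term.} The measurements producing $X$ and $Y$ use fresh copies of $\rho$ and $\sigma$ and are therefore independent of each other and of the $w_i$. Each trivial-POVM measurement returns a valid outcome with probability $\tr(\rho)$ (respectively $\tr(\sigma)$), so $\mathbb{E}[t_1]=\tr(\rho)$ and $\mathbb{E}[t_2]=\tr(\sigma)$; by independence, $\mathbb{E}[t_1 t_2] = \tr(\rho)\tr(\sigma)$. Subtracting this from Step~2 yields $\mathbb{E}[w] = \tr(\rho\sigma)$, which is the claim.

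\textbf{Main obstacle.} The argument is essentially bookkeeping once the pieces are in place; nothing is deep. The one point that deserves care is ensuring that the partial-trace correction $t_1 t_2$ is independent of the collision counts $\tilde{g}_i$, which the algorithm guarantees by drawing separate copies of $\rho$ and $\sigma$ for lines 8--9. If one instead wanted to reuse the same $2Nm$ samples to compute $t_1, t_2$ (a natural efficiency improvement), the independence would fail and one would need to verify unbiasedness by a slightly more detailed combinatorial expansion of the joint distribution of $(\#\text{collisions}, \#\text{non-}\bot \text{ outcomes})$. With the algorithm as stated, independence is automatic and the proof is essentially immediate from Steps~1--3.
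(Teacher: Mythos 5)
Your proposal is correct and follows essentially the same route as the paper's proof: conditional unbiasedness of the partial collision estimator, the Haar average $\mathbb{E}_U\bigl[\sum_b \bra{b}U\rho U^\dag\ket{b}\bra{b}U\sigma U^\dag\ket{b}\bigr]=\frac{\tr(\rho)\tr(\sigma)+\tr(\rho\sigma)}{d+1}$, and subtraction of the independent correction term $t_1t_2$ with $\mathbb{E}[t_1t_2]=\tr(\rho)\tr(\sigma)$. The only blemishes are cosmetic (the garbled intermediate expression in Step~1 and the imprecise ``$\cdot$ (const.)'' in your displayed $2$-design formula), neither of which affects the argument.
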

\begin{proof}
In a single iteration in Algorithm~\ref{alg_einnerproduct}, \(\rho,\sigma\) is rotated by a random unitary \(U_i\), then the inner product of \(p(U_i),q(U_i)\) is estimated, where
\begin{equation}
p_b(U_i)=\bra{b}U_i\rho U_i^\dag\ket{b},\quad\quad q_b(U_i)=\bra{b} U_i\sigma U_i^\dag \ket{b}.
\end{equation}
Note that \(p(U_i),q(U_i)\) are sub-probability distributions (i.e. \(\sum_b p_b(U_i)\leq 1\) and \(\sum_b q_b(U_i)\leq 1\)). Their inner product is:
\begin{equation}
g(U_i)=\sum_{b=0}^{d-1}\bra{b}U_i\rho U_i^\dag\ket{b}\,\bra{b}U_i\sigma U_i^\dag\ket{b}.
\end{equation}
Let \(\tilde{g}_i\) denote the collision estimator (line 5 of Algorithm~\ref{alg_einnerproduct}), with unitary \(U_i\) and samples \(S_i=\{X_i,Y_i\}\). Its expectation is given by
\begin{equation}
\begin{split}
\mathbb{E}_{U_i,S_i}\,[\tilde{g}_i]&=\mathbb{E}_{U_i}\,[g(U_i)]=d \, \mathbb{E}_{\psi}\, \bra{\psi}\rho\ket{\psi}\,\bra{\psi}\sigma\ket{\psi}\\
&=\frac{1}{d+1}\tr\left((I\otimes I+\textup{SWAP})\rho\otimes \sigma\right)\\
&=\frac{\tr(\rho)\tr(\sigma)+\tr(\rho\sigma)}{d+1}.
\end{split}
\end{equation}
Thus \(\mathbb{E} [w_i] = \tr(\rho)\tr(\sigma)+\tr(\rho\sigma)\).
On the other hand, it is easy to see that \(\mathbb{E}[t_1]=\tr(\rho)\) and \(\mathbb{E}[t_2]=\tr(\sigma)\).
Since \(t_1, t_2\) are independent, we have \(\mathbb{E} [t_1t_2]=\tr(\rho)\tr(\sigma)\). Then,
\begin{equation}
\mathbb{E}[w]= \tr(\rho)\tr(\sigma)+\tr(\rho\sigma)-\tr(\rho)\tr(\sigma)=\tr(\rho\sigma).
\end{equation}
\end{proof}
For the variance of \(\tilde{g}_i\), we have the following lemma.
\begin{lemma}\label{lm-1128215}
The total variance of the partial collision estimator \(\tilde{g}_i\) (line 5 of Algorithm~\ref{alg_einnerproduct}) is upper bounded by
\begin{equation}
\Var\left(\tilde{g}_i\right)=O\left(\frac{1}{d^3}+\frac{1}{m^2d}+\frac{1}{md^2}\right).
\end{equation}
\end{lemma}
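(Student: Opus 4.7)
The plan is to use the law of total variance, conditioning on the random basis $U_i$:
\begin{equation*}
\Var(\tilde{g}_i) = \mathbb{E}_{U_i}\bigl[\Var(\tilde{g}_i \mid U_i)\bigr] + \Var_{U_i}\bigl(g(U_i)\bigr),
\end{equation*}
where $g(U_i) := \mathbb{E}[\tilde{g}_i \mid U_i] = \sum_b p_b(U_i)q_b(U_i)$ by the argument already used in the proof of Lemma~\ref{lm-1128217}. Each of the three target terms $1/d^3$, $1/(m^2 d)$, and $1/(m d^2)$ will be traced back to a specific piece of this decomposition.

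For the conditional part, I would expand
\begin{equation*}
\tilde{g}_i^2 = \frac{1}{m^4}\sum_{j,k,j',k'} \mathbf{1}[x_j = y_k \neq \bot]\cdot \mathbf{1}[x_{j'} = y_{k'} \neq \bot]
\end{equation*}
and split the quadruple sum into the four cases $(j{=}j', k{=}k')$, $(j{=}j', k{\neq} k')$, $(j{\neq} j', k{=}k')$, $(j{\neq} j', k{\neq} k')$, which have $m^2$, $m^2(m{-}1)$, $m^2(m{-}1)$, and $m^2(m{-}1)^2$ terms respectively. Samples indexed by distinct $j$ or $k$ are conditionally independent, so this yields
\begin{equation*}
\Var(\tilde{g}_i \mid U_i) = \tfrac{1}{m^2}\,g(U_i)(1-g(U_i)) + \tfrac{m-1}{m^2}\Bigl(\textstyle\sum_b p_b q_b^2 + \sum_b p_b^2 q_b - 2 g(U_i)^2\Bigr).
\end{equation*}
Taking the Haar average is then carried out via $|\psi\rangle = U^\dag \ket{b}$ and the identity $\mathbb{E}_U[\ketbra{\psi}{\psi}^{\otimes k}] = \binom{d+k-1}{k}^{-1}\Pi_{\mathrm{sym}}^{k}$, writing $\Pi_{\mathrm{sym}}^{k} = (k!)^{-1}\sum_{\pi\in\mathbb{S}_k} P_\pi$ and evaluating each $\tr[(\rho^{\otimes a}\otimes\sigma^{\otimes c})P_\pi]$ by its cycle structure. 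Summing over $b$ contributes a factor of $d$, and all resulting trace products $\tr(\rho)\tr(\sigma)$, $\tr(\rho\sigma)$, $\tr(\rho^2)$, $\tr(\rho\sigma^2)$, etc.\ are bounded by $1$. This gives $\mathbb{E}_{U_i}[g(U_i)] = O(1/d)$ and $\mathbb{E}_{U_i}[\sum_b p_b q_b^2] = O(1/d^2)$, so the two pieces of $\mathbb{E}_{U_i}[\Var(\tilde{g}_i\mid U_i)]$ contribute $O(1/(m^2 d))$ and $O(1/(m d^2))$.

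For the $U_i$-variance term, the quantity $g(U_i)^2 = \sum_{b,b'} p_b q_b p_{b'} q_{b'}$ is a degree-$4$ polynomial in the entries of $\ketbra{\psi}{\psi}$ (with $\ket{\psi} = U^\dag\ket{b}$ or $U^\dag\ket{b'}$), so Haar averaging produces denominators of order $d^4$ from $\binom{d+3}{4}^{-1}$. The main difficulty — and the hard step of the proof — is showing that $\Var_{U_i}(g(U_i)) = \mathbb{E}[g^2]-\mathbb{E}[g]^2 = O(1/d^3)$: naively both $\mathbb{E}[g^2]$ and $\mathbb{E}[g]^2$ are $\Theta(1/d^2)$, so I need to identify the permutation contributions that survive after cancellation of the leading $1/d^2$ terms rather than just bounding each cycle product by $1$. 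Assembling the diagonal/off-diagonal permutations carefully, combined with $\tr(\rho),\tr(\sigma),\tr(\rho\sigma)\le 1$, yields the advertised $O(1/d^3)$. Adding the three contributions gives the claimed bound $O(1/d^3 + 1/(m^2 d) + 1/(m d^2))$.
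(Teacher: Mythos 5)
Your decomposition is the same as the paper's (law of total variance conditioned on $U_i$, an exact formula for $\Var(\tilde g_i\mid U_i)$ that matches the paper's collision-estimator lemma, and Haar averages $\mathbb{E}_U[g(U)]=O(1/d)$, $\mathbb{E}_U[\sum_b p_bq_b^2]=O(1/d^2)$ giving the $1/(m^2d)$ and $1/(md^2)$ terms), and those parts are fine. The gap is that the term you yourself identify as the hard step, $\Var_U(g(U))=O(1/d^3)$, is asserted rather than proved: "assembling the diagonal/off-diagonal permutations carefully $\ldots$ yields the advertised $O(1/d^3)$" is precisely the content of the lemma's difficult half, and it is exactly where the paper spends its effort. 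Moreover, the mechanism you sketch for it is not quite right. For $b\neq b'$ the two vectors $U^\dag\ket{b}$ and $U^\dag\ket{b'}$ are distinct orthonormal columns of a single Haar unitary, so $g(U)^2$ is \emph{not} a degree-$4$ polynomial in a single random projector $\ketbra{\psi}{\psi}$, and the formula $\mathbb{E}_\psi[\ketbra{\psi}{\psi}^{\otimes 4}]=\binom{d+3}{4}^{-1}\Pi_{\mathrm{sym}}$ with denominator $\Theta(d^4)$ does not apply to these cross terms. One needs either full Weingarten calculus for $\mathbb{E}_U[U^{\otimes 4}\otimes U^{*\otimes 4}]$ acting on a non-symmetric input, or (as the paper does) the two-step integration $\mathbb{E}_{\psi\sim\mathbb{C}^d,\;\psi'\sim\mathbb{C}^{d-1}_{\psi_\perp}}$ together with the identity
\begin{equation*}
\mathbb{E}_{\psi'\sim\mathbb{C}^{d-1}_{\psi_\perp}}\ketbra{\psi'}{\psi'}^{\otimes 2}
=\frac{1}{d(d-1)}\Bigl((I-\ketbra{\psi}{\psi})^{\otimes 2}+(I-\ketbra{\psi}{\psi})^{\otimes 2}\,\textup{SWAP}\,(I-\ketbra{\psi}{\psi})^{\otimes 2}\Bigr).
\end{equation*}

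Why this matters: the cancellation is delicate at the level of exact constants, not just orders. The paper shows that the off-diagonal ($b_1\neq b_2$) contribution to $\mathbb{E}_U[g(U)^2]$ equals $\frac{(\tr(\rho)\tr(\sigma)+\tr(\rho\sigma))^2}{d(d+1)}+O(1/d^4)$, so that subtracting $(\mathbb{E}_U[g(U)])^2=\frac{(\tr(\rho)\tr(\sigma)+\tr(\rho\sigma))^2}{(d+1)^2}$ leaves $\frac{(\tr(\rho)\tr(\sigma)+\tr(\rho\sigma))^2}{d(d+1)^2}+O(1/d^4)=O(1/d^3)$; any error of relative size $\Theta(1/d)$ in the leading coefficient (which is all that a crude permutation-by-permutation bound guarantees) would leave a spurious $\Theta(1/d^2)$ remainder and break the claimed bound. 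The diagonal ($b_1=b_2$) part contributes $d\cdot O(1/d^4)$ and is easy, but the cross-term bookkeeping—expanding $(\tr(\rho)-\bra{\psi}\rho\ket{\psi})(\tr(\sigma)-\bra{\psi}\sigma\ket{\psi})+\tr\bigl(\rho(I-\ketbra{\psi}{\psi})\sigma(I-\ketbra{\psi}{\psi})\bigr)$, discarding only provably nonpositive pieces, and re-averaging over $\psi$—is the substance of the proof and is missing from your proposal. To close the gap you would need to carry out this computation (or an equivalent Weingarten computation tracking the exact leading coefficient), not merely bound each trace monomial by $1$.
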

Note that the proof is essentially the same as that in \cite{anshu2022distributed}, except all formulas are made homogeneous with respect to \(\rho\) and \(\sigma\). For completeness, we provide a proof in the Appendix. On the other hand, it is easy to see that the variance of \(t_1t_2\) is also bounded.
\begin{lemma}\label{lm-1128216}
The variance of \(t_1t_2\) is upper bounded by
\begin{equation}
\Var(t_1t_2)=O\left(\frac{1}{Nm}\right).
\end{equation}
\end{lemma}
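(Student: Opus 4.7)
The plan is to observe that both $t_1$ and $t_2$ are simply empirical rescaled Bernoulli counts, and then combine their variances using independence.

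First I would note that for a single copy of the partial density operator $\rho$, the trivial POVM $\{I\}$ produces a valid outcome (anything other than $\bot$) with probability exactly $\tr(\rho)$, so the indicator $\mathbf{1}[x_i \neq \bot]$ is Bernoulli with parameter $\tr(\rho)$. Summing $Nm$ independent copies, $Nm\cdot t_1$ is Binomial$(Nm,\tr(\rho))$, giving
\begin{equation}
\mathbb{E}[t_1]=\tr(\rho),\qquad \Var(t_1)=\frac{\tr(\rho)(1-\tr(\rho))}{Nm}\leq \frac{1}{4Nm}.
\end{equation}
The same argument yields $\mathbb{E}[t_2]=\tr(\sigma)\leq 1$ and $\Var(t_2)\leq 1/(4Nm)$.

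Next, since $t_1$ and $t_2$ are computed from disjoint sets of copies ($X$ from $\rho$-samples, $Y$ from $\sigma$-samples), they are independent random variables. For independent $X,Y$ the standard identity
\begin{equation}
\Var(XY)=\Var(X)\Var(Y)+\Var(X)\,\mathbb{E}[Y]^2+\Var(Y)\,\mathbb{E}[X]^2
\end{equation}
applies. Substituting the bounds above together with $\mathbb{E}[t_1],\mathbb{E}[t_2]\in[0,1]$ gives
\begin{equation}
\Var(t_1 t_2)\leq \frac{1}{16 N^2 m^2}+\frac{1}{4Nm}+\frac{1}{4Nm}=O\!\left(\frac{1}{Nm}\right),
\end{equation}
which is the claimed bound.

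There is essentially no obstacle here: the only point to be careful about is the independence of $t_1$ and $t_2$, which holds because Algorithm~\ref{alg_einnerproduct} measures separate batches of $\rho$-copies and $\sigma$-copies in lines 8--9. Everything else is the elementary Binomial variance bound.
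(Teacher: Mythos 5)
Your proof is correct: the Binomial variance bound for each of $t_1,t_2$, their independence (disjoint batches of $\rho$- and $\sigma$-copies in lines 8--9), and the identity $\Var(XY)=\Var(X)\Var(Y)+\Var(X)\,\mathbb{E}[Y]^2+\Var(Y)\,\mathbb{E}[X]^2$ for independent variables give exactly the claimed $O\!\left(\frac{1}{Nm}\right)$ bound. The paper omits the proof of this lemma as ``easy to see,'' and your argument is precisely the elementary verification it implicitly relies on.
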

Then, combining the previous results, we have the following proposition.
\begin{proposition}\label{tm-1128230}
The output \(w\) of Algorithm \ref{alg_einnerproduct} is an unbiased estimator for \(\tr(\rho\sigma)\), and its variance is upper bounded by
\begin{equation}
\Var(w)=O\left(\frac{1}{Nd}+\frac{d}{N m^2}+\frac{1}{Nm}\right).
\end{equation}
\end{proposition}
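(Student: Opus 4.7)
The unbiasedness claim is already established by Lemma~\ref{lm-1128217}, so the plan focuses entirely on assembling the variance bound from the pieces already in place. The key observation is that the estimator $w$ decomposes additively as
\begin{equation}
w = A - B, \qquad A := \frac{1}{N}\sum_{i=1}^N w_i, \qquad B := t_1 t_2,
\end{equation}
and crucially $A$ and $B$ are built from disjoint batches of samples: $A$ uses the copies of $\rho,\sigma$ measured in the rotated bases on lines 3--4, whereas $B$ uses the fresh copies measured with the trivial POVM on lines 8--9. Hence $A$ and $B$ are independent, and $\Var(w) = \Var(A) + \Var(B)$.

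For $\Var(A)$, I would first note that the index $i$ indexes independent repetitions (independent unitary $U_i$ and independent samples $X_i,Y_i$), so the $w_i$ are i.i.d.\ and
\begin{equation}
\Var(A) = \tfrac{1}{N}\Var(w_1) = \tfrac{(d+1)^2}{N}\Var(\tilde g_1).
\end{equation}
Plugging in Lemma~\ref{lm-1128215} and using $(d+1)^2 = O(d^2)$ gives
\begin{equation}
\Var(A) = O\!\left(\frac{d^2}{N}\left(\frac{1}{d^3}+\frac{1}{m^2 d}+\frac{1}{m d^2}\right)\right)
       = O\!\left(\frac{1}{Nd}+\frac{d}{Nm^2}+\frac{1}{Nm}\right).
\end{equation}
For $\Var(B)$, Lemma~\ref{lm-1128216} directly gives $\Var(B) = O(1/(Nm))$, which is absorbed into the last term above.

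Summing the two contributions yields the stated bound
\begin{equation}
\Var(w) = O\!\left(\frac{1}{Nd}+\frac{d}{Nm^2}+\frac{1}{Nm}\right).
\end{equation}
There is essentially no obstacle here: all the nontrivial estimation work sits inside Lemma~\ref{lm-1128215} (the collision-estimator variance analysis), and the proposition is a clean bookkeeping step. The only subtlety worth stating explicitly in the write-up is the independence of $A$ and $B$, which eliminates the cross term $\mathrm{Cov}(A,B)$ and lets us avoid a Cauchy--Schwarz argument that would otherwise produce the same bound up to constants.
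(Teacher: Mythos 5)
Your proposal is correct and matches the paper's proof essentially verbatim: both decompose $\Var(w)$ as $\frac{1}{N}\Var(w_i)+\Var(t_1t_2)$ using independence (which you justify slightly more explicitly via the disjoint sample batches), then plug in Lemma~\ref{lm-1128215} scaled by $(d+1)^2$ and Lemma~\ref{lm-1128216}. No gaps; this is the intended bookkeeping argument.
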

\begin{proof}
Since \(w_i\) and \(t_1t_2\) are independent random variables, we have
\begin{equation}
\begin{split}
\Var(w)&=\frac{1}{N}\Var(w_i)+\Var(t_1t_2)=\frac{(d+1)^2}{N}\Var(\tilde{g}_i)+\Var(t_1t_2)\\
&=O\left(\frac{1}{Nd}+\frac{d}{Nm^2}+\frac{1}{Nm}\right)+O\left(\frac{1}{Nm}\right)=O\left(\frac{1}{Nd}+\frac{d}{Nm^2}+\frac{1}{Nm}\right).
\end{split}
\end{equation}
\end{proof}

\subsection{Proof of Lemma~\ref{lm-1128215}}
For completeness, we provide a proof of Lemma~\ref{lm-1128215} in this section. The proof is essentially the same as that in \cite{anshu2022distributed}, except all formulas are made homogeneous with respect to \(\rho\) and \(\sigma\).
\begin{lemma}[See, for example, Lemma 22 in \cite{anshu2022distributed}]\label{lemma-1252357}
Let $A,B,C,D$ be Hermitian matrices. Then
\begin{equation}
    \mathbb{E}_{\psi\sim\mathbb{C}^d}\bra{\psi}A\ket{\psi}\bra{\psi}B\ket{\psi}=\frac{1}{d(d+1)}\left(\tr(A)\tr(B)+\tr(AB)\right),
\end{equation}
and
\begin{equation}
\begin{split}
    \mathbb{E}_{\psi\sim\mathbb{C}^d}\bra{\psi}A\ket{\psi}\bra{\psi}B\ket{\psi}\bra{\psi}C\ket{\psi}&=\frac{1}{d(d+1)(d+2)}\bigg(\tr(A)\tr(B)\tr(C)+\tr(AB)\tr(C)\\
    &+\tr(A)\tr(BC)+\tr(B)\tr(AC)+\tr(ABC)+\tr(ACB)\bigg).
\end{split}
\end{equation}
Similarly, we have:
\begin{equation}
\begin{split}
    &\mathbb{E}_{\psi\sim\mathbb{C}^d}\bra{\psi}A\ket{\psi}\bra{\psi}B\ket{\psi}\bra{\psi}C\ket{\psi}\bra{\psi}D\ket{\psi}\\
    &=\frac{1}{d(d+1)(d+2)(d+3)}\sum_{\pi\in S_4}\tr(A\otimes B\otimes C\otimes D\cdot  P_d(\pi)),
\end{split}
\end{equation}
\end{lemma}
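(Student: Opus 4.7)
The plan is to reduce everything to a single master identity for Haar integration of $k$-fold tensor powers of pure states, namely
\begin{equation}
\int \textup{d}\psi\; (\ketbra{\psi}{\psi})^{\otimes k}
\;=\; \frac{1}{d(d+1)\cdots(d+k-1)}\sum_{\pi\in\mathbb{S}_k} P_\pi,
\end{equation}
where $P_\pi$ is the permutation operator on $(\mathbb{C}^d)^{\otimes k}$. This identity follows from the facts that the Haar average of $(\ketbra{\psi}{\psi})^{\otimes k}$ commutes with $U^{\otimes k}$ for every $U\in\mathbb{U}_d$, is therefore (by Schur--Weyl) a scalar multiple of the projector onto the symmetric subspace $\textup{Sym}^k(\mathbb{C}^d)$, and the normalizing trace equals $1$ while $\dim\textup{Sym}^k(\mathbb{C}^d)=\binom{d+k-1}{k}$. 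Equivalently, using $\Pi_{\textup{sym}}=\frac{1}{k!}\sum_\pi P_\pi$ and $k!\binom{d+k-1}{k}=d(d+1)\cdots(d+k-1)$, the displayed equation drops out.

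Next, I would rewrite each Haar moment as a single trace against a tensor product. For Hermitian $A_1,\ldots,A_k$,
\begin{equation}
\prod_{i=1}^k \bra{\psi}A_i\ket{\psi}
\;=\; \tr\!\Bigl[(A_1\otimes\cdots\otimes A_k)\,(\ketbra{\psi}{\psi})^{\otimes k}\Bigr].
\end{equation}
Taking the expectation and applying the master identity,
\begin{equation}
\mathbb{E}_\psi \prod_{i=1}^k \bra{\psi}A_i\ket{\psi}
\;=\; \frac{1}{d(d+1)\cdots(d+k-1)}\sum_{\pi\in\mathbb{S}_k}\tr\!\Bigl[(A_1\otimes\cdots\otimes A_k)\,P_\pi\Bigr].
\end{equation}
The final ingredient is the standard combinatorial identity: for any permutation $\pi\in\mathbb{S}_k$ with cycle decomposition $\pi=C_1\cdots C_\ell$, the trace $\tr[(A_1\otimes\cdots\otimes A_k)P_\pi]$ factors as $\prod_{j=1}^\ell \tr\bigl(\prod_{i\in C_j} A_i\bigr)$, with the factors inside each cycle taken in cyclic order. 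This is a direct consequence of the definition of $P_\pi$ acting on basis tensors and can be verified by writing out indices.

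With the master identity and the cycle formula in hand, each of the three claims is obtained by enumerating $\mathbb{S}_k$ for $k=2,3,4$. For $k=2$, the two permutations contribute $\tr(A)\tr(B)$ (identity) and $\tr(AB)$ (the transposition), giving the first formula after dividing by $d(d+1)$. For $k=3$, the six permutations split as one identity contributing $\tr(A)\tr(B)\tr(C)$, three transpositions contributing $\tr(AB)\tr(C)$, $\tr(AC)\tr(B)$, $\tr(BC)\tr(A)$, and the two $3$-cycles contributing $\tr(ABC)$ and $\tr(ACB)$; dividing by $d(d+1)(d+2)$ matches the second formula. For $k=4$, the statement is packaged directly as $\sum_{\pi\in\mathbb{S}_4}\tr[(A\otimes B\otimes C\otimes D)P_d(\pi)]$ divided by $d(d+1)(d+2)(d+3)$, which is exactly what the master identity produces. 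The only genuinely non-routine step is justifying the master identity itself; everything thereafter is bookkeeping over $\mathbb{S}_k$, and no assumption on Hermiticity is actually used beyond making the bra-ket expressions real (the identity holds for arbitrary matrices).
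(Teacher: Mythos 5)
Your proof is correct. Note that the paper itself gives no proof of this lemma --- it is imported by citation (``see Lemma 22 in \cite{anshu2022distributed}'') --- so there is no in-paper argument to compare against; your derivation, via the master identity $\mathbb{E}_\psi\,(\ketbra{\psi}{\psi})^{\otimes k}=\frac{1}{d(d+1)\cdots(d+k-1)}\sum_{\pi\in\mathbb{S}_k}P_\pi$ followed by the cycle-decomposition trace formula and enumeration of $\mathbb{S}_2$ and $\mathbb{S}_3$, is exactly the standard route used in the cited reference, and it is also consistent with the symmetric-subspace bookkeeping the paper uses elsewhere (e.g., $T!\dim\mathrm{Sym}^T(\mathbb{C}^d)=d(d+1)\cdots(d+T-1)$ in the proof of Lemma~\ref{lemma_1}). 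One step deserves a one-line patch: commuting with every $U^{\otimes k}$ only places the Haar average in the commutant of $\{U^{\otimes k}\}$, which by Schur--Weyl is the span of \emph{all} permutation operators, not automatically a multiple of $\Pi_{\mathrm{sym}}$; to finish, observe that each $(\ketbra{\psi}{\psi})^{\otimes k}$ (hence the average) is supported on $\mathrm{Sym}^k(\mathbb{C}^d)$, on which the representation $U\mapsto U^{\otimes k}$ acts irreducibly, so Schur's lemma then forces a scalar multiple of the projector, and the trace normalization fixes the constant as you state. Your remaining points are sound: the possible $\pi$-versus-$\pi^{-1}$ ambiguity in the cycle formula is immaterial here because for $k=3$ both $3$-cycles appear in the sum (yielding both $\tr(ABC)$ and $\tr(ACB)$) and for $k=4$ the claim is packaged as the full sum over $\mathbb{S}_4$; and Hermiticity is indeed never used, since all three identities are multilinear in $A,B,C,D$ and hold for arbitrary matrices.
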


\begin{lemma}\label{lm-1252329}
The variance of the partial collision estimator \eqref{eq-11272324} is upper bounded by
\begin{equation}
\Var(\tilde{g})\leq \frac{g}{m^2}+\frac{1}{m}\left(\sum_{b=0}^{d-1}p_b q^2_b+\sum_{b=0}^{d-1}p^2_b q_b\right).
\end{equation}
\end{lemma}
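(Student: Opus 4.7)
\textbf{Proof proposal for Lemma~\ref{lm-1252329}.} The plan is a direct second-moment calculation, classifying the $m^4$ terms of $\Var(\tilde g)$ by how the index pairs coincide. First I would write
\begin{equation}
\Var(\tilde g) = \frac{1}{m^4}\sum_{j,k,j',k'=1}^m \operatorname{Cov}\!\Bigl(1[x_j=y_k\neq\bot],\,1[x_{j'}=y_{k'}\neq\bot]\Bigr),
\end{equation}
and split the summation into four cases according to whether $j=j'$ and whether $k=k'$. The key observation is that all $x_i$ are i.i.d.\ and all $y_i$ are i.i.d., and the $x$'s are independent of the $y$'s, so that whenever the two indicator events involve disjoint sets of samples their covariance vanishes.

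The four cases are then straightforward: (i) if $j=j'$ and $k=k'$ the covariance is the variance of a single indicator, equal to $g-g^2\le g$, contributing $m^2$ terms; (ii) if $j=j'$ but $k\ne k'$, conditioning on the shared $x_j=b$ forces $y_k=y_{k'}=b$, and by independence of $y_k,y_{k'}$ the cross expectation equals $\sum_b p_b q_b^2$, so the covariance is $\sum_b p_b q_b^2 - g^2$, contributing $m^2(m-1)$ terms; (iii) symmetrically, $j\ne j'$ with $k=k'$ gives covariance $\sum_b p_b^2 q_b - g^2$ with $m^2(m-1)$ terms; (iv) if $j\ne j'$ and $k\ne k'$, the two indicators depend on disjoint samples and the covariance is $0$.

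Collecting everything yields
\begin{equation}
\Var(\tilde g) = \frac{g-g^2}{m^2} + \frac{m-1}{m^2}\left(\sum_b p_b q_b^2 - g^2\right) + \frac{m-1}{m^2}\left(\sum_b p_b^2 q_b - g^2\right),
\end{equation}
and dropping the non-positive $-g^2$ and $-(m-1)g^2/m^2$ terms and using $(m-1)/m^2 \le 1/m$ gives exactly the claimed bound $\Var(\tilde g)\le g/m^2 + \frac{1}{m}\bigl(\sum_b p_b q_b^2 + \sum_b p_b^2 q_b\bigr)$.

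There is no real obstacle here beyond careful bookkeeping; the only subtlety is that the indicator $1[x_j=y_k\neq\bot]$ automatically handles the partial-distribution nature of $p$ and $q$, since $\bot$ events contribute $0$ and the probability mass $P,Q$ is absorbed into $p_b=Pp_b'$, $q_b=Qq_b'$. Thus the same combinatorial argument used in the full-distribution DQIPE variance bound of \cite{anshu2022distributed} goes through unchanged, once one notes that conditioning on a non-$\bot$ value does not disturb the independence structure used in the four case analysis.
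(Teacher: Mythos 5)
Your proposal is correct and follows essentially the same route as the paper: the paper also computes the second moment by splitting the $m^4$ index pairs into the four coincidence cases ($j=l,k=s$; $j=l,k\neq s$; $j\neq l,k=s$; $j\neq l,k\neq s$), obtaining the same contributions $g$, $\sum_b p_bq_b^2$, $\sum_b p_b^2q_b$, and $g^2$, and then drops the non-positive $g^2$ terms. Your phrasing via covariances (so the fully-disjoint case vanishes outright) is only a cosmetic rearrangement of the same bookkeeping.
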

\begin{proof}
\begin{equation}
\begin{split}
\Var(\tilde{g})&=\mathbb{E}\tilde{g}^2-g^2=\frac{1}{m^4}\sum_{j,k,l,s}\mathbb{E}\,1[x_j=y_k\neq\bot]\cdot 1[x_l=y_s\neq\bot]-g^2\\
&=\frac{1}{m^4}\left(m^2g+m^2(m-1)^2g^2+m^2(m-1)PQ^2\sum_{b=0}^{d-1}p'_bq'^2_b+m^2(m-1)P^2Q\sum_{b=0}^{d-1}p'^2_bq'_b\right)-g^2\\
&\leq \frac{g}{m^2}+\frac{1}{m}\left(PQ^2\sum_{b=0}^{d-1}p'_b q'^2_b+P^2Q\sum_{b=0}^{d-1}p'^2_b q'_b\right)
=\frac{g}{m^2}+\frac{1}{m}\left(\sum_{b=0}^{d-1}p_b q^2_b+\sum_{b=0}^{d-1}p^2_b q_b\right).
\end{split}
\end{equation}
In the third line, the tour terms in the bracket come from \(4\) different cases in the sum: \(j=l\wedge k=s\), \(j\neq l \wedge k\neq s\), \(j=l\wedge k\neq s\) and \(j\neq l\wedge k=s\), respectively.
\end{proof}

Now, we consider the variance of \(\tilde{g}_i\). For convenience, we omit the subscript \(i\). Then, by the law of total variance, we have
\begin{equation}\label{eq-126023}
\Var(\tilde{g})=\Var_{U}(\mathbb{E}[\tilde{g}|U])+\mathbb{E}_{U}[\Var(\tilde{g}|U)].
\end{equation}
For the second term of Eq.~\eqref{eq-126023}, using the expression for the variance of the partial collision estimator in Lemma~\ref{lm-1252329}, we have
\begin{equation}
\mathbb{E}_{U}[\Var(\tilde{g}|U)]\leq \mathbb{E}_{U}\left[\frac{g(U)}{m^2}+\frac{1}{m}\left(\sum_{b}p_b(U) q_b(U)^2+\sum_{b}p_b(U)^2q_b(U)\right)\right].
\end{equation}
It was shown in Lemma~\ref{lm-1128217} that \(\mathbb{E}_{U}[g(U)]=\frac{\tr(\rho)\tr(\sigma)+\tr(\rho\sigma)}{d+1}\). And we have
\begin{equation}
\begin{split}
\mathbb{E}_{U}\sum_{b=0}^{d-1} p_b(U)q_b(U)^2&=\mathbb{E}_{U}\sum_{b=0}^{d-1} \,\bra{b}U\rho U^\dag\ket{b} \, \bra{b}U \sigma U^\dag \ket{b}^2\\
&=d\,\mathbb{E}_{\psi\sim\mathbb{C}^d}\,\bra{\psi}\rho\ket{\psi}\,\bra{\psi}\sigma\ket{\psi}^2\\
&=O(1/d^2),
\end{split}
\end{equation}
where the third line follows from Lemma~\ref{lemma-1252357}. Thus we have
\begin{equation}
\mathbb{E}_{U}\left[\Var(\tilde{g}|U)\right]=O\left(\frac{1}{m^2 d}+\frac{1}{md^2}\right).
\end{equation}
Then, for the first term of Eq.~\eqref{eq-126023}, we have
\begin{equation}\label{eq-126237}
\begin{split}
\Var_{U}\left(\mathbb{E}[\tilde{g}|U]\right)&=\Var_{U}\left(g(U)\right)=\mathbb{E}_{U}[g(U)^2]-\frac{\left(\tr(\rho)\tr(\sigma)+\tr(\rho\sigma)\right)^2}{(d+1)^2}\\
&=\sum_{b_1,b_2=0}^{d-1}\mathbb{E}_{U}\left[\bra{b_1} U\rho U^\dag\ket{b_1}\bra{b_1}U\sigma U^\dag\ket{b_1}\bra{b_2} U\rho U^\dag\ket{b_2}\bra{b_2}U\sigma U^\dag\ket{b_2}\right]\\
&\quad\quad-\frac{\left(\tr(\rho)\tr(\sigma)+\tr(\rho\sigma)\right)^2}{(d+1)^2}\\
&=d\,\mathbb{E}_{\psi\sim \mathbb{C}^d}\left[\bra{\psi}\rho\ket{\psi}^2\,\bra{\psi}\sigma\ket{\psi}^2\right]
+d(d-1)\mathbb{E}_{\substack{\psi\sim \mathbb{C}^d\\\psi'\sim\mathbb{C}^{d-1}_{\psi_\perp}}}\bra{\psi}\rho\ket{\psi}\bra{\psi}\sigma\ket{\psi}\bra{\psi'}\rho\ket{\psi'}\bra{\psi'}\sigma\ket{\psi'}\\ &\quad\quad-\frac{\left(\tr(\rho)\tr(\sigma)+\tr(\rho\sigma)\right)^2}{(d+1)^2},
\end{split}
\end{equation}
where \(\psi'\sim\mathbb{C}^{d-1}_{\psi_\perp}\) means \(\ket{\psi'}\) is a uniformly random vector in the \(d-1\)-dimensional subspace that is perpendicular to \(\ket{\psi}\). Then, we have
\begin{equation}\label{eq-126218}
\begin{split}
&\mathbb{E}_{\psi\sim \mathbb{C}^d,\psi'\sim\mathbb{C}^{d-1}_{\psi_\perp}}\bra{\psi}\rho\ket{\psi}\bra{\psi}\sigma\ket{\psi}\bra{\psi'}\rho\ket{\psi'}\bra{\psi'}\sigma\ket{\psi'}\\
=&\mathbb{E}_{\psi\sim\mathbb{C}^d}\left[\bra{\psi}\rho\ket{\psi}\bra{\psi}\sigma\ket{\psi}\mathbb{E}_{\psi'\sim \mathbb{C}_{\psi_\perp}^{d-1}}\left[\bra{\psi'}\rho\ket{\psi'}\bra{\psi'}\sigma\ket{\psi'}\right]\right].
\end{split}
\end{equation}
Then, we have the following result (see, for example, Lemma 18 in \cite{anshu2022distributed}),
\begin{equation}
\mathbb{E}_{\psi'\sim\mathbb{C}_{\psi_\perp}^{d-1}}\ketbra{\psi'}{\psi'}^{\otimes 2}=\frac{1}{d(d-1)}\left((I-\ketbra{\psi}{\psi})^{\otimes 2}+(I-\ketbra{\psi}{\psi})^{\otimes 2}\textup{SWAP}(I-\ketbra{\psi}{\psi})^{\otimes 2}\right).
\end{equation}
Therefore, we have
\begin{equation}
\begin{split}
&d(d-1)\mathbb{E}_{\psi'\sim\mathbb{C}_{\psi_\perp}^{d-1}}[\bra{\psi'}\rho\ket{\psi'}\bra{\psi'}\sigma\ket{\psi'}]\\
&=(\tr(\rho)-\bra{\psi}\rho\ket{\psi})(\tr(\sigma)-\bra{\psi}\sigma\ket{\psi})+\tr(\rho(I-\ketbra{\psi}{\psi})\sigma(I-\ketbra{\psi}{\psi}))\\
&=(\tr(\rho)-\bra{\psi}\rho\ket{\psi})(\tr(\sigma)-\bra{\psi}\sigma\ket{\psi})+\tr(\rho\sigma)+\bra{\psi}\rho\ket{\psi}\bra{\psi}\sigma\ket{\psi}-\bra{\psi}(\rho\sigma+\sigma\rho)\ket{\psi}\\
&=\tr(\rho)\tr(\sigma)+\tr(\rho\sigma)+2\bra{\psi}\rho\ket{\psi}\bra{\psi}\sigma\ket{\psi}-\tr(\rho)\bra{\psi}\sigma\ket{\psi}-\tr(\sigma)\bra{\psi}\rho\ket{\psi}-\bra{\psi}(\rho\sigma+\sigma\rho)\ket{\psi}.
\end{split}
\end{equation}
Plugging this into Eq.~\eqref{eq-126218}, we have
\begin{equation}
\begin{split}
&d(d-1)\mathbb{E}_{\psi\sim \mathbb{C}^d,\psi'\sim\mathbb{C}^{d-1}_{\psi_\perp}}\bra{\psi}\rho\ket{\psi}\bra{\psi}\sigma\ket{\psi}\bra{\psi'}\rho\ket{\psi'}\bra{\psi'}\sigma\ket{\psi'}\\
&=\mathbb{E}_{\psi\sim\mathbb{C}^d}\Big[(\tr(\rho)\tr(\sigma)+\tr(\rho\sigma))\bra{\psi}\rho\ket{\psi}\bra{\psi}\sigma\ket{\psi}+2\bra{\psi}\rho\ket{\psi}^2\bra{\psi}\sigma\ket{\psi}^2\\
&-\tr(\rho)\bra{\psi}\rho\ket{\psi}\bra{\psi}\sigma\ket{\psi}^2- \tr(\sigma)\bra{\psi}\rho\ket{\psi}^2\bra{\psi}\sigma\ket{\psi}-\bra{\psi}\rho\ket{\psi}\bra{\psi}\sigma\ket{\psi}\bra{\psi}(\rho\sigma+\sigma\rho)\ket{\psi} \Big]\\
&=\frac{(\tr(\rho)\tr(\sigma)+\tr(\rho\sigma))^2}{d(d+1)}+O\left(\frac{1}{d^4}\right),
\end{split}
\end{equation}
where the the negative terms are ignored (the positivity of \(\bra{\psi}\rho\ket{\psi}\bra{\psi}\sigma\ket{\psi}\bra{\psi}(\rho\sigma+\sigma\rho)\ket{\psi}\) can be verified by Lemma~\ref{lemma-1252357}). Then Eq.~\eqref{eq-126237} becomes
\begin{equation}
\begin{split}
\Var_{U}\left(\mathbb{E}[\tilde{g}|U]\right)
&=d\cdot O\left(\frac{1}{d^4}\right)+\frac{(\tr(\rho)\tr(\sigma)+\tr(\rho\sigma))^2}{d(d+1)}+O\left(\frac{1}{d^4}\right)-\frac{\left(\tr(\rho)\tr(\sigma)+\tr(\rho\sigma)\right)^2}{(d+1)^2}\\
&=O\left(\frac{1}{d^3}\right),
\end{split}
\end{equation}
and the total variance is given by
\begin{equation}
\Var(\tilde{g})=\Var_U(\mathbb{E}[\tilde{g}|U])+\mathbb{E}_U[\Var(\tilde{g}|U)]=O\left(\frac{1}{d^3}+\frac{1}{m^2d}+\frac{1}{md^2}\right).
\end{equation}

\section{Benchmarking quantum processes with unitarity}\label{sec-01191541}
\begin{theorem}\label{theorem-11252053-2}
Suppose \(\mathcal{E}\) is a quantum channel acting on a \(d\)-dimensional system. For convenience, let \(\mathfrak{u}\) stands for \(\mathfrak{u}(\mathcal{E})\). Then,
\begin{equation}\label{eq-12131554-2}
\frac{d\mathfrak{u}^2+\sqrt{\mathfrak{u}}}{d+1}\leq \sup_{U\in \mathbb{U}_d} F_\textup{a}(\mathcal{U},\mathcal{E}) \leq \frac{d\sqrt{\mathfrak{u}}+1}{d+1}.
\end{equation}
Note that if \(\mathcal{E}\) is a unitary channel, then both the lower bound and upper bound are saturated.
\end{theorem}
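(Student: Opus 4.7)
The plan is to reduce both bounds to operator/spectral bounds on the Jamio{\l}kowski state $\mathfrak{J}(\mathcal{E})$ via the standard identity
\[
F_\textup{a}(\mathcal{U},\mathcal{E}) \;=\; \frac{d\, F_\textup{e}(\mathcal{U},\mathcal{E}) + \tr(\mathfrak{J}(\mathcal{E}))}{d+1}, \qquad F_\textup{e}(\mathcal{U},\mathcal{E}) \;:=\; \bra{\Phi_U}\mathfrak{J}(\mathcal{E})\ket{\Phi_U},
\]
with $\ket{\Phi_U} := (U \otimes I)\ket{\Phi}$. I would derive this by expanding $F_\textup{a}$ in Kraus form as $\sum_k \int \textup{d}\psi\,|\bra{\psi}U^\dag E_k\ket{\psi}|^2$ and evaluating the integral via Lemma~\ref{lemma-1252357}: the $|\tr(U^\dag E_k)|^2$ piece sums to $d^2 F_\textup{e}$, and the $\tr(E_k^\dag E_k)$ piece sums to $d\tr(\mathfrak{J}(\mathcal{E}))$. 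The upper bound then follows quickly, since $F_\textup{e}(\mathcal{U},\mathcal{E}) \le \lambda_{\max}(\mathfrak{J}(\mathcal{E})) \le \sqrt{\tr(\mathfrak{J}^2)} = \sqrt{\mathfrak{u}}$ (the top eigenvalue of a PSD matrix is bounded by its Frobenius norm) and $\tr(\mathfrak{J}) \le 1$ for any subnormalized channel, giving $\sup_U F_\textup{a} \le (d\sqrt{\mathfrak{u}}+1)/(d+1)$.

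For the lower bound I would work with the spectral decomposition $\mathfrak{J}(\mathcal{E}) = \sum_i \mu_i \kettbbra{V_i}{V_i}$, where the $V_i$ are Frobenius-orthonormal and $\mu_1 \ge \mu_2 \ge \cdots \ge 0$; the canonical Kraus operators of $\mathcal{E}$ are then $E_i = \sqrt{d\mu_i}\,V_i$. Take the test unitary $U^*$ from the polar decomposition $V_1 = U^*|V_1|$, so that $\tr(U^{*\dag}V_1) = \|V_1\|_1$, and dropping the $i \ge 2$ contributions yields $F_\textup{e}(\mathcal{U}^*,\mathcal{E}) \ge \mu_1 \|V_1\|_1^2/d$. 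The key step is then to lower bound $\|V_1\|_1$ using the sub-trace-preservation condition $\sum_i d\mu_i V_i^\dag V_i = \sum_i E_i^\dag E_i \preceq I$: since every summand is PSD, $\mu_1 V_1^\dag V_1 \preceq I/d$, so $\|V_1\|_\infty^2 \le 1/(d\mu_1)$, and combined with $\|V_1\|_2 = 1$ and the elementary inequality $\|V_1\|_1 \ge \|V_1\|_2^2/\|V_1\|_\infty$ this yields $\|V_1\|_1^2 \ge d\mu_1$, hence $F_\textup{e}(\mathcal{U}^*,\mathcal{E}) \ge \mu_1^2$. Finally, $\mu_1 \tr(\mathfrak{J}) \ge \sum_i \mu_i^2 = \mathfrak{u}$ together with $\tr(\mathfrak{J}) \le 1$ gives $\mu_1 \ge \mathfrak{u}$, and $\tr(\mathfrak{J}) = \sum_i \mu_i \ge \sqrt{\sum_i \mu_i^2} = \sqrt{\mathfrak{u}}$; plugging everything into the identity delivers
\[
\sup_U F_\textup{a}(\mathcal{U},\mathcal{E}) \;\ge\; \frac{d\mu_1^2 + \tr(\mathfrak{J})}{d+1} \;\ge\; \frac{d\mathfrak{u}^2 + \sqrt{\mathfrak{u}}}{d+1}.
\]

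The main technical obstacle is promoting $F_\textup{e}(\mathcal{U}^*,\mathcal{E}) \ge \mu_1/d$ — which is all one gets from the polar decomposition plus the naive $\|V_1\|_1 \ge \|V_1\|_2 = 1$ — up to $F_\textup{e}(\mathcal{U}^*,\mathcal{E}) \ge \mu_1^2$. The naive bound is hopelessly weak in the near-unitary regime $\mathfrak{u} \approx 1$: at $\mu_1 = \mathfrak{u} = 1$ it would give only $F_\textup{a} \ge 2/(d+1)$ rather than the correct value $F_\textup{a} = 1$. The critical insight is that sub-unitality forces the top eigenvector $V_1$ to have increasingly spread-out singular values as $\mu_1$ grows, via $\|V_1\|_\infty^2 \le 1/(d\mu_1)$; this is precisely the extra structural information needed to promote the factor $1$ in the numerator to $d\mu_1$ and make the bound tight at $\mathfrak{u} = 1$.
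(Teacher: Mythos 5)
Your proposal is correct and follows essentially the same route as the paper's Appendix proof, just phrased in Choi-state rather than Kraus/matrix-representation language: your identity $F_\textup{a}=\bigl(dF_\textup{e}+\tr\mathfrak{J}\bigr)/(d+1)$ is the relation the paper invokes, and your upper bound ($F_\textup{e}\le\lambda_{\max}(\mathfrak{J})\le\sqrt{\mathfrak{u}}$ together with $\tr\mathfrak{J}\le1$) is the paper's Cauchy--Schwarz estimate in disguise. Likewise your lower bound coincides with the paper's: your orthonormal eigen-operators $V_i$ are the paper's orthogonal Kraus operators $E_i=\sqrt{d\mu_i}\,V_i$, your polar-decomposition test unitary is the paper's $U_1V_1^\dag$ from the SVD of $E_1$, your step $\|V_1\|_1\ge\|V_1\|_2^2/\|V_1\|_\infty$ with $\|V_1\|_\infty^2\le1/(d\mu_1)$ is exactly the paper's $\tr\bigl(\sqrt{E_1^\dag E_1}\bigr)\ge\tr\bigl(E_1^\dag E_1\bigr)$ (both rest on $E_1^\dag E_1\sqsubseteq I$), and your closing inequalities $\mu_1\ge\mathfrak{u}/\tr\mathfrak{J}\ge\mathfrak{u}$ and $\tr\mathfrak{J}\ge\sqrt{\mathfrak{u}}$ mirror the paper's use of $\mathfrak{t}\le1$ and $\mathfrak{t}^2\ge\mathfrak{u}$.
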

\begin{proof}
Since the quantum channel \(\mathcal{E}\) can be non-trace-preserving, for convenience, we define the following trace-preservation index:
\begin{equation}\label{eq-11262004}
\mathfrak{t}(\mathcal{E}):=\tr\left(\mathcal{E}(I/d)\right),
\end{equation}
which is a quantification of the trace-preservation property of the quantum channel \(\mathcal{E}\).
\begin{lemma}\label{lm-12131618}
The trace-preservation index \(\mathfrak{t}(\mathcal{E})\) has the following properties:
\begin{enumerate}
\item \(\mathfrak{t}(\mathcal{E})\leq 1\) with equality if and only if \(\mathcal{E}\) is trace-preserving,
\item \(\mathfrak{t}(\mathcal{E})^2\geq \mathfrak{u}(\mathcal{E})\),
\item \(\mathfrak{t}(\mathcal{E})\) can be easily estimated to precision \(O(\epsilon)\) with \(O(\epsilon^{-2})\) incoherent calls to \(\mathcal{E}\), by simply inputting the maximally mixed state \(I/d\) and measuring with the trivial POVM \(\{I\}\).
\end{enumerate}
\end{lemma}
\begin{proof}
The first and third properties are obvious. For the second property, we have
\begin{equation}
\mathfrak{t}(\mathcal{E})^2=\frac{1}{d^2}\sum_{i,j}\tr(E_i^\dag E_i)\tr(E_j^\dag E_j)\geq \frac{1}{d^2}\sum_{i,j}|\tr(E_i^\dag E_j)|^2=\mathfrak{u}(\mathcal{E}),
\end{equation}
where \(E_i\) are the Kraus operators of quantum channel \(\mathcal{E}\).
\end{proof}
For convenience, we may directly use \(\mathfrak{t}\) to denote \(\mathfrak{t}(\mathcal{E})\) in this proof.
First, consider the right inequality in Eq.~\eqref{eq-12131554}. Note that the average fidelity can be written as
\begin{equation}
F_\textup{a}(\mathcal{U},\mathcal{E})=\frac{\tr\left(\mathcal{U}^\dag \mathcal{E}\right)+\tr\left(\mathcal{E}(I)\right)}{d(d+1)}.
\end{equation}
This is in fact the relation between average gate fidelity and entanglement fidelity (see, for example, Proposition 42 in \cite{kliesch2021theory}).
By Cauchy's inequality
\begin{equation}
\tr\left(\mathcal{U}^\dag\mathcal{E}\right)\leq \sqrt{\tr\left(\mathcal{U}^\dag\mathcal{U}\right)\tr\left(\mathcal{E}^\dag\mathcal{E}\right)}=\sqrt{d^2\,\,\tr\left(\mathcal{E}^\dag\mathcal{E}\right)}=d^2\sqrt{\mathfrak{u}}.
\end{equation}
Thus, for any unitary \(U\),
\begin{equation}
F_\textup{a}(\mathcal{U},\mathcal{E})\leq \frac{d^2\sqrt{\mathfrak{u}(\mathcal{E})}+\tr(\mathcal{E}(I))}{d(d+1)}=\frac{d}{d+1}\sqrt{\mathfrak{u}}+\frac{1}{d+1}\mathfrak{t}.
\end{equation}
Therefore,
\begin{equation}
\sup_{U\in\mathbb{U}_d} F_\textup{a}(\mathcal{U},\mathcal{E})\leq \frac{d}{d+1}\sqrt{\mathfrak{u}}+\frac{1}{d+1}\mathfrak{t}\leq \frac{d}{d+1}\sqrt{\mathfrak{u}}+\frac{1}{d+1}.
\end{equation}

Then, consider the left inequality in Eq.~\eqref{eq-12131554}. We choose a Kraus representation \(\sum_{i=1}^n E_i\circ E_i^\dag\) of \(\mathcal{E}\) such that \(\tr(E_i^\dag E_j)=0\) for all \(i\neq j\). This is always possible due to the unitary freedom in the Kraus representation (see, for example, Theorem 8.2 in \cite{nielsen2002quantum}). Thus,
\begin{equation}
\begin{split}
d^2\,\mathfrak{u}=\tr\left(\mathcal{E}^\dag\mathcal{E}\right)=\sum_{ij}\tr\left[E^\dag_i\otimes E_i^{\dag *} \cdot E_j \otimes E_j^*\right]
=\sum_{ij}\left|\tr\left(E_i^\dag E_j\right)\right|^2=\sum_{i}\tr\left(E_i^\dag E_i\right)^2,
\end{split}
\end{equation}
and
\begin{equation}
d\,\mathfrak{t}=\sum_i \tr\left(E_i^\dag E_i\right).
\end{equation}
Without loss of generality, we assume that \(\tr(E_1^\dag E_1)\geq \cdots\geq \tr(E_n^\dag E_n)\). Suppose the singular value decomposition of \(E_1\) is \(U_1 D_1 V_1^\dag\). Then
\begin{equation}\label{eq11252032}
\begin{split}
\sup_{U\in\mathbb{U}_d} F_\textup{a}(\mathcal{U},\mathcal{E})&\geq F_\textup{a}(\mathcal{U}_1\mathcal{V}_1^\dag,\mathcal{E})=\frac{\tr\left(\mathcal{V}_1 \mathcal{U}_1^\dag\mathcal{E}\right)+d\,\mathfrak{t}}{d(d+1)} = \frac{\sum_i \left|\tr\left(V_1U_1^\dag E_i\right)\right|^2+d\,\mathfrak{t}}{d(d+1)}\\
&\geq\frac{\left|\tr\left(V_1U^\dag_1E_1\right)\right|^2+d\,\mathfrak{t}}{d(d+1)}
=\frac{\tr\left(\sqrt{E_1^\dag E_1}\right)^2+d\, \mathfrak{t}}{d(d+1)}\\
&\geq\frac{\tr\left(E_1^\dag E_1\right)^2+d\, \mathfrak{t}}{d(d+1)},
\end{split}
\end{equation}
where the last inequality is because that \(E_1^\dag E_1\sqsubseteq I\) and thus its eigenvalues are smaller than \(1\). Then, we have
\begin{equation}
d^2\, \mathfrak{u}=\sum_i\tr\left(E_i^\dag E_i\right)^2\leq \tr\left(E_1^\dag E_1\right)\sum_i\tr\left(E_i^\dag E_i\right)=\tr\left(E_1^\dag E_1\right) d\,\mathfrak{t},
\end{equation}
which means
\begin{equation}
\tr\left(E_1^\dag E_1\right)\geq d\,\frac{\mathfrak{u}}{\mathfrak{t}}.
\end{equation}
Thus
\begin{equation}
\eqref{eq11252032}\geq\frac{d^2\frac{\mathfrak{u}^2}{\mathfrak{t}^2}+d\,\mathfrak{t}}{d(d+1)}=\frac{d}{d+1}\frac{\mathfrak{u}^2}{\mathfrak{t}^2}+\frac{1}{d+1}\mathfrak{t}\geq
\frac{d}{d+1}\mathfrak{u}^2+\frac{1}{d+1}\sqrt{\mathfrak{u}},
\end{equation}
where the second inequality is due to the first two properties in Lemma~\ref{lm-12131618}.
\end{proof}

\section{Alternative definition of unitarity}\label{sec-ver}
There is another definition of unitarity proposed in \cite{wallman2015estimating}, which is to characterize the coherence of noise processes. We will use \(\mathfrak{u}'\) to denote this alternative definition:
\begin{equation}
\mathfrak{u}' (\mathcal{E}):=\frac{d}{d-1} \int_\psi \textup{d}\psi \,\,\tr\left[\mathcal{E}'(\psi)^\dag \mathcal{E}'(\psi)\right],
\end{equation}
where \(\mathcal{E}'\) is a linear map such that \(\mathcal{E}'(X)=\mathcal{E}(X)-\left[\tr[\mathcal{E}(X)]/d\right]I\) for all traceless \(X\) and \(\mathcal{E}'(I)=0\). The definition of \(\mathcal{E}'\) aims to eliminate the effects caused by the non-unitality and state-dependent leakage of quantum channel \(\mathcal{E}\). We will show that our upper bounds and lower bounds also apply to this alternative unitarity (see Table~\ref{table-12132342}).
\begin{table}[ht]
\centering
\renewcommand{\arraystretch}{1.5}
\setlength{\tabcolsep}{3mm}{
\begin{tabular}{ccc}
\hline
& Coherent access & Incoherent access \\
\hline
Upper bound & \(O(\epsilon^{-2})\) &   \(O(\sqrt{d}\cdot\epsilon^{-2})\)  \\
Lower bound & \(\Omega(\epsilon^{-2})\) &  \(\Omega(\sqrt{d}+\epsilon^{-2})\) \\
\hline
\end{tabular}
}
\caption{Our results for the alternative unitarity \(\mathfrak{u}'\).}
\label{table-12132342}
\end{table}

As a quick start, we remark that (see Section~\ref{sec-12162325})
\begin{equation}
|\mathfrak{u}(\mathcal{E})-\mathfrak{u}'(\mathcal{E})|\leq O(d^{-1}).
\end{equation}
Thus for the case that \(\epsilon\geq d^{-1}\), an \(O(\epsilon)\)-estimate for \(\mathfrak{u}\) directly implies an \(O(\epsilon)\)-estimate for \(\mathfrak{u}'\) and vice versa, which further implies that the upper and lower bounds for \(\mathfrak{u}\) apply to \(\mathfrak{u}'\). For the general case, we can still obtain the same results, but with more sophisticated strategy (see Section~\ref{sec-12162320} and Section~\ref{sec-12162322}).

\subsection{Closeness}\label{sec-12162325}
Suppose \(\mathcal{E}\) is a quantum channel acting on a \(d\)-dimensional system. Consider the following subspaces of \(\mathbb{C}^{d^2}\):
\begin{equation}\label{eq-11261629}
\{\kett{cI}\,|\, c\in\mathbb{C}\},\quad\quad \{\kett{X}\,|\,\tr(X)=0, X\in\mathbb{M}_d\}.
\end{equation}
Then, the matrix representation of quantum channel \(\mathcal{E}\), under the orthonormal basis vectors from these two subspaces respectively, can be written as
\begin{equation}
\begin{bmatrix}
\mathfrak{t}(\mathcal{E}) & \mathcal{E}_{\textup{sdl}}\\
\mathcal{E}_{\textup{n}} & \mathcal{E}_{\textup{u}}
\end{bmatrix},
\end{equation}
where \(\mathfrak{t}(\mathcal{E})\) is the trace-preservation index defined in Eq. \eqref{eq-11262004}, \(\mathcal{E}_{\textup{sdl}}\), \(\mathcal{E}_{\textup{n}}\) and \(\mathcal{E}_{\textup{u}}\) refer to the state-dependent leakage, nonunital and unital blocks of \(\mathcal{E}\), respectively. Then, we have
\begin{proposition}[See \cite{wallman2015estimating}]
\begin{equation}\label{eq-11261702}
\mathfrak{u}'(\mathcal{E})=\frac{1}{d^2-1}\tr\left(\mathcal{E}_{\textup{u}}^\dag\mathcal{E}_{\textup{u}}\right).
\end{equation}
\end{proposition}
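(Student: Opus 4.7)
The plan is to prove the identity by vectorizing the traceless decomposition of $\mathcal{E}'(\psi)$ and then evaluating the Haar integral via the standard $2$-design formula. Let $P_0=\kettbbra{I/\sqrt{d}}{I/\sqrt{d}}$ and $P_1=I_{d^2}-P_0$ denote the orthogonal projectors onto the two $\mathbb{U}_d$-invariant subspaces of Eq.~\eqref{eq-11261629}. For any pure state $\psi=\ketbra{\psi}{\psi}$, write $\psi=I/d+(\psi-I/d)$; since $\mathcal{E}'(I)=0$ and $\mathcal{E}'$ subtracts the trace part after applying $\mathcal{E}$, one immediately gets $\kett{\mathcal{E}'(\psi)} = P_1\,\mathcal{E}\,P_1\kett{\psi}$. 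In the chosen basis the operator $P_1\,\mathcal{E}\,P_1$ is exactly the block-diagonal embedding of $\mathcal{E}_{\textup{u}}$, so
\begin{equation*}
\tr\!\left[\mathcal{E}'(\psi)^\dag\mathcal{E}'(\psi)\right] \;=\; \bbrakett{\mathcal{E}'(\psi)}{\mathcal{E}'(\psi)} \;=\; \bbra{\psi}\,P_1\,\mathcal{E}^\dag\,P_1\,\mathcal{E}\,P_1\,\kett{\psi}.
\end{equation*}

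Next I would compute the Haar average $\int\textup{d}\psi\,\kettbbra{\psi}{\psi}$ as an operator on $\mathbb{C}^{d^2}$. Writing the $(ij,kl)$ entry as $\int\textup{d}\psi\,\psi_i\psi_l\,\psi_j^*\psi_k^*$ and applying the standard $2$-design identity $\int\textup{d}\psi\,\psi_a\psi_b\psi_c^*\psi_d^* = (\delta_{ac}\delta_{bd}+\delta_{ad}\delta_{bc})/(d(d+1))$, the matrix is seen to equal
\begin{equation*}
\int\textup{d}\psi\,\kettbbra{\psi}{\psi} \;=\; \frac{d\,P_0 + I_{d^2}}{d(d+1)},
\end{equation*}
which is manifestly invariant under conjugation by $U\otimes U^*$, as it should be.

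Sandwiching by $P_1$ annihilates the $P_0$ piece and leaves $\int\textup{d}\psi\,\kettbbra{\psi-I/d}{\psi-I/d} = P_1/(d(d+1))$. Substituting into the definition of $\mathfrak{u}'$,
\begin{equation*}
\mathfrak{u}'(\mathcal{E}) \;=\; \frac{d}{d-1}\,\tr\!\left[P_1\,\mathcal{E}^\dag\,P_1\,\mathcal{E}\,P_1\cdot \frac{P_1}{d(d+1)}\right] \;=\; \frac{1}{(d-1)(d+1)}\,\tr\!\left(\mathcal{E}_{\textup{u}}^\dag\mathcal{E}_{\textup{u}}\right),
\end{equation*}
since $\tr(P_1\mathcal{E}^\dag P_1\mathcal{E}P_1)=\tr(\mathcal{E}_{\textup{u}}^\dag\mathcal{E}_{\textup{u}})$ when the trace is interpreted on the traceless block.

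The only nontrivial piece is the $2$-design computation, and even that reduces to a one-line application of the standard moment formula once the vectorization $\kett{\psi}=\ket{\psi}\otimes\ket{\psi^*}$ is written out explicitly. The main pitfall is bookkeeping: one must verify that $P_1\mathcal{E}P_1$ really encodes $\mathcal{E}_{\textup{u}}$ (not $\mathcal{E}_{\textup{u}}$ together with off-block contributions), which is immediate from the definition of the block decomposition in the basis where $\kett{I/\sqrt{d}}$ is singled out, together with the facts $\mathcal{E}'(I)=0$ and that $\mathcal{E}'$ projects onto the traceless subspace on output.
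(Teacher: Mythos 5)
Your proof is correct. Note that the paper itself does not prove this proposition: it is stated with a pointer to \cite{wallman2015estimating}, so there is no in-paper argument to compare against. Your derivation is a sound, self-contained verification. The key identifications are all right: for a pure state $\psi$, $\kett{\mathcal{E}'(\psi)}=P_1\mathcal{E}P_1\kett{\psi}$ (using linearity, $\mathcal{E}'(I)=0$, and the fact that subtracting $[\tr\mathcal{E}(X)/d]I$ is exactly the projection $P_1=I_{d^2}-\kettbbra{I/\sqrt{d}}{I/\sqrt{d}}$ applied on the output), the second-moment formula gives $\int\textup{d}\psi\,\kettbbra{\psi}{\psi}=(dP_0+I_{d^2})/(d(d+1))$ (its trace is $1$, as it should be), and after sandwiching with $P_1$ the prefactors combine to $\frac{d}{d-1}\cdot\frac{1}{d(d+1)}=\frac{1}{d^2-1}$, with $\tr(P_1\mathcal{E}^\dag P_1\mathcal{E}P_1)=\tr(\mathcal{E}_{\textup{u}}^\dag\mathcal{E}_{\textup{u}})$ by the block structure. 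Your computation is also consistent with the analogous Haar-moment calculations the paper does carry out in the same appendix (the expressions for $\mathfrak{p}(\mathcal{E})$, $\mathfrak{o}(\mathcal{E})$ and $\mathfrak{s}(\mathcal{E})$ in terms of the blocks $\mathfrak{t},\mathcal{E}_{\textup{n}},\mathcal{E}_{\textup{sdl}},\mathcal{E}_{\textup{u}}$), which is a useful sanity check; those are obtained there by Schur-orthogonality of the irreducible block $\mathcal{U}_{\textup{u}}$ rather than by the explicit $2$-design moment you use, but the two routes agree.
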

We also have: 
\begin{proposition}
\begin{equation}\label{eq-11261703}
\mathfrak{u}(\mathcal{E})=\frac{1}{d^2}\left[\mathfrak{t}(\mathcal{E})^2+\tr\left(\mathcal{E}_{\textup{sdl}}^\dag\mathcal{E}_{\textup{sdl}}\right)+\tr\left(\mathcal{E}_{\textup{n}}^\dag\mathcal{E}_{\textup{n}}\right)+\tr\left(\mathcal{E}_{\textup{u}}^\dag\mathcal{E}_{\textup{u}}\right)\right].
\end{equation}
\end{proposition}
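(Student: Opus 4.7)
The plan is to use the third property of unitarity stated earlier, namely $\mathfrak{u}(\mathcal{E})=\tfrac{1}{d^2}\tr(\mathcal{E}^\dag\mathcal{E})$, and simply evaluate the trace in the block basis of $\mathbb{C}^{d^2}$ used to define the decomposition $\mathcal{E}=\begin{bmatrix}\mathfrak{t}(\mathcal{E})&\mathcal{E}_{\textup{sdl}}\\ \mathcal{E}_{\textup{n}}&\mathcal{E}_{\textup{u}}\end{bmatrix}$. Since trace is basis-independent, switching to this orthonormal basis (one vector $\kett{I/\sqrt{d}}$ for the scalar block and an orthonormal basis of the traceless subspace for the rest) is harmless.

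First I would verify that the top-left entry really is the scalar $\mathfrak{t}(\mathcal{E})$. This is a one-line check:
\begin{equation}
\bbra{I/\sqrt{d}}\,\mathcal{E}\,\kett{I/\sqrt{d}}=\tfrac{1}{d}\bbrakett{I}{\mathcal{E}(I)}=\tfrac{1}{d}\tr(\mathcal{E}(I))=\tr(\mathcal{E}(I/d))=\mathfrak{t}(\mathcal{E}),
\end{equation}
so the labeling of the blocks is consistent with Eq.~\eqref{eq-11262004}. Moreover $\mathfrak{t}(\mathcal{E})$ is a real nonnegative number (the trace of the positive operator $\mathcal{E}(I/d)$), so $|\mathfrak{t}(\mathcal{E})|^{2}=\mathfrak{t}(\mathcal{E})^{2}$.

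Next I would expand $\mathcal{E}^\dag\mathcal{E}$ blockwise. For a general block matrix $M=\begin{bmatrix}A&B\\ C&D\end{bmatrix}$ one has
\begin{equation}
M^\dag M=\begin{bmatrix}A^\dag A+C^\dag C & A^\dag B+C^\dag D\\ B^\dag A+D^\dag C & B^\dag B+D^\dag D\end{bmatrix},
\end{equation}
so $\tr(M^\dag M)=\tr(A^\dag A)+\tr(B^\dag B)+\tr(C^\dag C)+\tr(D^\dag D)$. Substituting $A=\mathfrak{t}(\mathcal{E})$, $B=\mathcal{E}_{\textup{sdl}}$, $C=\mathcal{E}_{\textup{n}}$, $D=\mathcal{E}_{\textup{u}}$ and dividing by $d^{2}$ yields Eq.~\eqref{eq-11261703}. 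There is no real obstacle here; the only point that deserves a sentence is the identification of the top-left block with $\mathfrak{t}(\mathcal{E})$ and the observation that it is a real scalar, which makes the $\mathfrak{t}(\mathcal{E})^{2}$ term on the right-hand side unambiguous.
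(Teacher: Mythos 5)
Your proof is correct and is essentially the argument the paper leaves implicit: the paper states this proposition without proof, treating it as immediate from $\mathfrak{u}(\mathcal{E})=\tfrac{1}{d^2}\tr(\mathcal{E}^\dag\mathcal{E})$ together with the block decomposition of $\mathcal{E}$ in the orthonormal basis $\{\kett{I/\sqrt{d}}\}\cup\{\text{traceless part}\}$, and your blockwise evaluation of $\tr(\mathcal{E}^\dag\mathcal{E})$ (valid because the change of basis is unitary, both bases being orthonormal for the Hilbert--Schmidt inner product) is exactly that computation. Your two explicit checks --- that the top-left entry is $\bbra{I/\sqrt{d}}\mathcal{E}\kett{I/\sqrt{d}}=\mathfrak{t}(\mathcal{E})$ and that $\mathfrak{t}(\mathcal{E})\geq 0$ so $|\mathfrak{t}(\mathcal{E})|^2=\mathfrak{t}(\mathcal{E})^2$ --- are the only points worth spelling out, and you do.
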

Then, we can bound the difference between \(\mathfrak{u}\) and \(\mathfrak{u}'\) using Eq. \eqref{eq-11261702} and Eq. \eqref{eq-11261703}.
\begin{proposition}[Closeness of \(\mathfrak{u}\) and \(\mathfrak{u}'\)]
For any quantum channel \(\mathcal{E}\),
\begin{equation}
\left|\mathfrak{u}(\mathcal{E})-\mathfrak{u}'(\mathcal{E})\right|\leq \frac{1}{d}+\frac{1}{d^2}= O(\frac{1}{d}).
\end{equation}
\end{proposition}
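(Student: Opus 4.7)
The plan is to work directly from the block-form identities Eq.~\eqref{eq-11261702} and Eq.~\eqref{eq-11261703}. Setting $a := \mathfrak{t}(\mathcal{E})^2$, $b := \tr(\mathcal{E}_{\textup{sdl}}^\dag \mathcal{E}_{\textup{sdl}})$, $c := \tr(\mathcal{E}_{\textup{n}}^\dag \mathcal{E}_{\textup{n}})$ and $u := \tr(\mathcal{E}_{\textup{u}}^\dag \mathcal{E}_{\textup{u}})$, a direct algebraic rearrangement yields
\begin{equation*}
\mathfrak{u}(\mathcal{E}) - \mathfrak{u}'(\mathcal{E}) = \frac{a+b+c}{d^2} - \frac{u}{d^2(d^2-1)},
\end{equation*}
so it suffices to bound $a+b+c$ and $u$ from above. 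The bound on $u$ is immediate: $\mathfrak{u}'(\mathcal{E}) \leq 1$ forces $u \leq d^2-1$, which gives $\mathfrak{u}'(\mathcal{E}) - \mathfrak{u}(\mathcal{E}) \leq 1/d^2$.

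For $a+b+c$, the key observation is that both $a+c$ and $a+b$ admit clean Frobenius-norm interpretations. Applying the block matrix of $\mathcal{E}$ to the basis vector $\kett{I/\sqrt{d}}$ produces a vector whose trace-block component has squared length $a$ and whose traceless-block component has squared length $c$, so $a+c = \|\kett{\mathcal{E}(I/\sqrt{d})}\|^2 = \tfrac{1}{d}\tr(\mathcal{E}(I)^2)$. The same computation applied to the adjoint channel $\mathcal{E}^\dag$ (whose matrix in this basis is simply the Hermitian conjugate of the matrix of $\mathcal{E}$, so that the roles of $\mathcal{E}_{\textup{sdl}}$ and $\mathcal{E}_{\textup{n}}$ are swapped) gives $a+b = \tfrac{1}{d}\tr(\mathcal{E}^\dag(I)^2)$. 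I would then bound each trace by exploiting that for positive semidefinite $A$ with $A \sqsubseteq \lambda I$, one has $A^2 \sqsubseteq \lambda A$ and hence $\tr(A^2) \leq \lambda \tr(A)$. On the $\mathcal{E}$ side, $\mathcal{E}(I)$ is positive semidefinite with $\tr(\mathcal{E}(I)) = d\mathfrak{t}$, so its top eigenvalue is bounded by the trace, giving $\mathcal{E}(I) \sqsubseteq d\mathfrak{t} \cdot I$, hence $\tr(\mathcal{E}(I)^2) \leq (d\mathfrak{t})^2$ and $a+c \leq d\mathfrak{t}^2$. On the $\mathcal{E}^\dag$ side, $\mathcal{E}$ being trace-non-increasing means $\mathcal{E}^\dag$ is sub-unital, so $\mathcal{E}^\dag(I) \sqsubseteq I$ directly, yielding $\tr(\mathcal{E}^\dag(I)^2) \leq \tr(\mathcal{E}^\dag(I)) = d\mathfrak{t}$ and $a + b \leq \mathfrak{t}$.

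Adding these two inequalities and using $\mathfrak{t} \leq 1$, we obtain $a + b + c \leq \mathfrak{t} + (d-1)\mathfrak{t}^2 \leq d$, so $\mathfrak{u}(\mathcal{E}) - \mathfrak{u}'(\mathcal{E}) \leq 1/d$. Combining this with $\mathfrak{u}'(\mathcal{E}) - \mathfrak{u}(\mathcal{E}) \leq 1/d^2$ yields $|\mathfrak{u}(\mathcal{E}) - \mathfrak{u}'(\mathcal{E})| \leq \max(1/d,\, 1/d^2) \leq 1/d + 1/d^2$, as claimed. The only mildly delicate point is the asymmetric treatment of $b$ versus $c$, which reflects the asymmetry that $\mathcal{E}^\dag(I) \sqsubseteq I$ holds automatically from the sub-unital-adjoint property, whereas $\mathcal{E}(I)$ is only controlled by the weaker estimate $\mathcal{E}(I) \sqsubseteq d\mathfrak{t}\cdot I$ coming purely from its trace; everything else is routine.
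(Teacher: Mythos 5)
Your proof is correct and takes essentially the same route as the paper: the same block-form identities for $\mathfrak{u}$ and $\mathfrak{u}'$, with $a+c=\frac{1}{d}\tr\bigl(\mathcal{E}(I)^2\bigr)$ and $a+b=\frac{1}{d}\tr\bigl(\mathcal{E}^\dag(I)^2\bigr)$ bounded by the same positivity/sub-normalization arguments the paper uses (your constants $d\mathfrak{t}^2$ and $\mathfrak{t}$ are just slightly sharper than the paper's $d$ and $1$). The one step you should not call ``immediate'' is $\mathfrak{u}'(\mathcal{E})\le 1$ for a possibly non-trace-preserving channel — it does hold, e.g.\ since $u\le d^2\mathfrak{u}-\mathfrak{t}^2\le (d^2-1)\mathfrak{t}^2\le d^2-1$ using $\mathfrak{u}\le\mathfrak{t}^2$ (Lemma~\ref{lm-12131618}), and in any case the cruder bound $u\le d^2\mathfrak{u}\le d^2$ already gives $\mathfrak{u}'-\mathfrak{u}\le \frac{1}{d^2-1}\le\frac{1}{d}+\frac{1}{d^2}$, so the conclusion is unaffected.
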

\begin{proof}
\begin{equation}
\begin{split}
d&\geq d\,\tr\bigl[\mathcal{E}\left(I/d\right)\mathcal{E}\left(I/d\right)\bigr]=\tr\left[\frac{I}{\sqrt{d}}\mathcal{E}^\dag\mathcal{E}\left(\frac{I}{\sqrt{d}}\right)\right]\\
&=\begin{bmatrix}1 & 0\end{bmatrix}
\begin{bmatrix}\mathfrak{t}(\mathcal{E})^2+\mathcal{E}_{\textup{n}}^\dag\mathcal{E}_{\textup{n}}&\cdot\\\cdot&\mathcal{E}_{\textup{sdl}}^\dag\mathcal{E}_{\textup{sdl}}+\mathcal{E}_{\textup{u}}^\dag\mathcal{E}_{\textup{u}}\end{bmatrix}
\begin{bmatrix}1\\0\end{bmatrix}=\mathfrak{t}(\mathcal{E})^2+\mathcal{E}_\textup{n}^\dag\mathcal{E}_{\textup{n}},
\end{split}
\end{equation}
and
\begin{equation}
\begin{split}
d&=\tr\left(I^2\right) \geq\tr\left[\left(\sum_i E_i^\dag E_i\right)^2\right]=\tr\left[\mathcal{E}^\dag(I)\mathcal{E}^\dag(I)\right]=d\tr\left[\frac{I}{\sqrt{d}}\mathcal{E}\mathcal{E}^\dag \left(\frac{I}{\sqrt{d}}\right)\right]\\
&=d\begin{bmatrix}1 & 0\end{bmatrix}
\begin{bmatrix}\mathfrak{t}(\mathcal{E})^2+\mathcal{E}_{\textup{sdl}}\mathcal{E}_{\textup{sdl}}^\dag&\cdot\\\cdot&\mathcal{E}_{\textup{n}}\mathcal{E}_{\textup{n}}^\dag+\mathcal{E}_{\textup{u}}\mathcal{E}_{\textup{u}}^\dag\end{bmatrix}
\begin{bmatrix}1\\0\end{bmatrix}
=d \left(\mathfrak{t}(\mathcal{E})^2+\mathcal{E}_{\textup{sdl}}\mathcal{E}_{\textup{sdl}}^\dag\right).
\end{split}
\end{equation}
Therefore,
\begin{equation}
\frac{1}{d^2}\left[\mathfrak{t}(\mathcal{E})^2+\tr\left(\mathcal{E}_\textup{n}^\dag\mathcal{E}_{\textup{n}}\right)\right]\leq \frac{1}{d},\quad\quad
\frac{1}{d^2}\left[\mathfrak{t}(\mathcal{E})^2+\tr\left(\mathcal{E}_{\textup{sdl}}^\dag\mathcal{E}_{\textup{sdl}}\right)\right]\leq \frac{1}{d^2},
\end{equation}
which directly imply
\begin{equation}
\begin{split}
\left|\mathfrak{u}(\mathcal{E})-\mathfrak{u}'(\mathcal{E})\right|&\leq\left|-\frac{1}{d^2}\mathfrak{u}'(\mathcal{E})+\frac{1}{d^2}\mathfrak{t}(\mathcal{E})^2+\frac{1}{d^2}\tr\left(\mathcal{E}^\dag_{\textup{sdl}}\mathcal{E}_{\textup{sdl}}\right)+\frac{1}{d^2}\tr\left(\mathcal{E}_{\textup{n}}^\dag\mathcal{E}_{\textup{n}}\right)\right|\\
&\leq \frac{1}{d}+\frac{1}{d^2}= O(\frac{1}{d}).
\end{split}
\end{equation}
\end{proof}

\subsection{Upper bound}\label{sec-12162320}
For any unitary channel \(\mathcal{U}\), its matrix representation under the orthonormal basis vectors of the subspaces in \eqref{eq-11261629} can be written as
\begin{equation}
\begin{bmatrix}
1&0\\
0& \mathcal{U}_\textup{u}
\end{bmatrix},
\end{equation}
where \(\{\mathcal{U}_\textup{u}\}_{U\in\mathbb{U}_d}\) is an irreducible representation of the unitary group \(\mathbb{U}_d\). For any pure state \(\rho\), its vector representation under the orthonormal basis vectors of the subspaces in \eqref{eq-11261629} can be written as
\begin{equation}
\begin{bmatrix}
1/\sqrt{d}\\ \sqrt{1-1/d}\bm{\rho}'
\end{bmatrix},
\end{equation}
where \(\bm{\rho}'\) is a unit vector of \(d^2-1\) dimension.

Then, we have
\begin{equation}
\begin{split}
\mathbb{E}_{U}\left[\mathcal{E}^\dag\mathcal{U}\mathcal{E}\right]
&=\begin{bmatrix}
\mathfrak{t}(\mathcal{E})& \mathcal{E}_{\textup{n}}^\dag\\
\mathcal{E}_{\textup{sdl}}^\dag & \mathcal{E}_{\textup{u}}^\dag 
\end{bmatrix}
\begin{bmatrix}
\mathfrak{t}(\mathcal{E})& \mathcal{E}_{\textup{sdl}}\\
\mathbb{E}_{U}\left[\mathcal{U}_{\textup{u}}\right]\mathcal{E}_{\textup{n}}& \mathbb{E}_{U}\left[\mathcal{U}_{\textup{u}}\right]\mathcal{E}_{\textup{u}}
\end{bmatrix}\\
&=\begin{bmatrix}
\mathfrak{t}(\mathcal{E})& \mathcal{E}_{\textup{n}}^\dag\\
\mathcal{E}_{\textup{sdl}}^\dag & \mathcal{E}_{\textup{u}}^\dag 
\end{bmatrix}
\begin{bmatrix}
\mathfrak{t}(\mathcal{E})& \mathcal{E}_{\textup{sdl}}\\
0& 0
\end{bmatrix}
=\begin{bmatrix}
\mathfrak{t}(\mathcal{E})^2 & * \\
* & \mathcal{E}_{\textup{sdl}}^\dag\mathcal{E}_{\textup{sdl}}
\end{bmatrix}.
\end{split}
\end{equation}
Thus
\begin{equation}
\mathbb{E}_{U,V}\left[\mathcal{V}^\dag\mathcal{E}^\dag\mathcal{U}\mathcal{E}\mathcal{V}\right]
=\begin{bmatrix}
\mathfrak{t}(\mathcal{E})^2 & *\,\mathbb{E}_{V}\left[\mathcal{V}_{\textup{u}}\right] \\
\mathbb{E}_{V}\left[\mathcal{V}_{\textup{u}}^\dag\right]*\,\, & \mathbb{E}_{V}\left[\mathcal{V}_{\textup{u}}^\dag\mathcal{E}_{\textup{sdl}}^\dag\mathcal{E}_{\textup{sdl}}\mathcal{V}_{\textup{u}}\right]
\end{bmatrix}
=\begin{bmatrix}
\mathfrak{t}(\mathcal{E})^2& 0\\
0 &  \frac{\tr\left(\mathcal{E}_{\textup{sdl}}^\dag\mathcal{E}_{\textup{sdl}}\right)}{d^2-1} I
\end{bmatrix}.
\end{equation}
Then, we define the quantity \(\mathfrak{s}(\mathcal{E})\) by
\begin{equation}\label{eq-2001102}
\begin{split}
\mathfrak{s}(\mathcal{E}):&=\mathbb{E}_{U,V}\left[\bbra{\rho_0}\mathcal{V}^\dag\mathcal{E}^\dag\mathcal{U}\mathcal{E}\mathcal{V}\kett{\rho_0}\right]\\
&=\begin{bmatrix}1/\sqrt{d}&\sqrt{1-1/d}\bm{\rho}'^{\dag}_0\end{bmatrix}\begin{bmatrix}
\mathfrak{t}(\mathcal{E})^2& 0\\
0 &  \frac{\tr\left(\mathcal{E}_{\textup{sdl}}^\dag\mathcal{E}_{\textup{sdl}}\right)}{d^2-1} I
\end{bmatrix}\begin{bmatrix}1/\sqrt{d}\\\sqrt{1-1/d}\bm{\rho}'_0\end{bmatrix}\\
&=\frac{1}{d}\mathfrak{t}(\mathcal{E})^2+\frac{1}{d(d+1)}\tr\left(\mathcal{E}_{\textup{sdl}}^\dag\mathcal{E}_{\textup{sdl}}\right),
\end{split}
\end{equation}
where \(\rho_0=\ketbra{0}{0}\).
On the other hand, recall the definition of the purity-preservation index \(\mathfrak{p}\),
\begin{equation}\label{eq-2001103}
\begin{split}
\mathfrak{p}(\mathcal{E}):&=\mathbb{E}_{U}\left[\bbra{\rho_0}\mathcal{U}^\dag\mathcal{E}^\dag\mathcal{E}\mathcal{U}\kett{\rho_0}\right]\\
&=\begin{bmatrix}1/\sqrt{d}&\sqrt{1-1/d}\bm{\rho}'^{\dag}_0\end{bmatrix}
\begin{bmatrix}
\mathfrak{t}(\mathcal{E})^2+\mathcal{E}_{\textup{n}}^\dag\mathcal{E}_{\textup{n}} & *\,\mathbb{E}_{U}\left[\mathcal{U}_{\textup{u}}\right] \\
\mathbb{E}_{U}\left[\mathcal{U}_{\textup{u}}^\dag\right]*\,\, & \mathbb{E}_{U}\left[\mathcal{U}_{\textup{u}}^\dag\left(\mathcal{E}_{\textup{sdl}}^\dag\mathcal{E}_{\textup{sdl}}+\mathcal{E}_{\textup{u}}^\dag\mathcal{E}_{\textup{u}}\right)\mathcal{U}_{\textup{u}}\right]
\end{bmatrix}\begin{bmatrix}1/\sqrt{d}\\\sqrt{1-1/d}\bm{\rho}'_0\end{bmatrix}\\
&=\begin{bmatrix}1/\sqrt{d}&\sqrt{1-1/d}\bm{\rho}'^{\dag}_0\end{bmatrix}\begin{bmatrix}
\mathfrak{t}(\mathcal{E})^2+\mathcal{E}_{\textup{n}}^\dag\mathcal{E}_{\textup{n}} & 0\\
0 &  \frac{\tr\left(\mathcal{E}_{\textup{sdl}}^\dag\mathcal{E}_{\textup{sdl}}+\mathcal{E}_{\textup{u}}^\dag\mathcal{E}_{\textup{u}}\right)}{d^2-1} I
\end{bmatrix}\begin{bmatrix}1/\sqrt{d}\\\sqrt{1-1/d}\bm{\rho}'_0\end{bmatrix}\\
&=\frac{1}{d} \mathfrak{t}(\mathcal{E})^2 +\frac{1}{d}\tr\left(\mathcal{E}_{\textup{n}}^\dag\mathcal{E}_{\textup{n}}\right)+ \frac{1}{d(d+1)}\tr\left(\mathcal{E}_{\textup{sdl}}^\dag\mathcal{E}_{\textup{sdl}}+\mathcal{E}_{\textup{u}}^\dag\mathcal{E}_{\textup{u}}\right).
\end{split}
\end{equation}
Similarly, recall the definition of the orthogonality-preservation index \(\mathfrak{o}\),
\begin{equation}\label{eq-2001104}
\begin{split}
\mathfrak{o}(\mathcal{E}):&=\mathbb{E}_{U}\left[\bbra{\rho_1}\mathcal{U}^\dag\mathcal{E}^\dag\mathcal{E}\mathcal{U}\kett{\rho_0}\right]\\
&=\begin{bmatrix}1/\sqrt{d}&\sqrt{1-1/d}\bm{\rho}'^{\dag}_1\end{bmatrix}\begin{bmatrix}
\mathfrak{t}(\mathcal{E})^2+\mathcal{E}_{\textup{n}}^\dag\mathcal{E}_{\textup{n}} & 0\\
0 &  \frac{\tr\left(\mathcal{E}_{\textup{sdl}}^\dag\mathcal{E}_{\textup{sdl}}+\mathcal{E}_{\textup{u}}^\dag\mathcal{E}_{\textup{u}}\right)}{d^2-1} I
\end{bmatrix}\begin{bmatrix}1/\sqrt{d}\\\sqrt{1-1/d}\bm{\rho}'_0\end{bmatrix}\\
&=\frac{1}{d} \mathfrak{t}(\mathcal{E})^2 +\frac{1}{d}\tr\left(\mathcal{E}_{\textup{n}}^\dag\mathcal{E}_{\textup{n}}\right)- \frac{1}{(d-1)d(d+1)}\tr\left(\mathcal{E}_{\textup{sdl}}^\dag\mathcal{E}_{\textup{sdl}}+\mathcal{E}_{\textup{u}}^\dag\mathcal{E}_{\textup{u}}\right),
\end{split}
\end{equation}
where the last equality is because that \(0=\tr\left(\rho_1^\dag \rho_0\right)=1/d+(1-1/d)\bm{\rho}'^\dag_1\bm{\rho}'_0\). 

Then, by Eq. \eqref{eq-2001102}, Eq. \eqref{eq-2001103} and Eq. \eqref{eq-2001104}, we have
\begin{equation}
\mathfrak{u}'(\mathcal{E})=\frac{1}{d^2-1}\tr\left(\mathcal{E}_{\textup{u}}^\dag\mathcal{E}_{\textup{u}}\right)=\mathfrak{p}(\mathcal{E})-\mathfrak{o}(\mathcal{E})-\frac{d}{d-1}\mathfrak{s}(\mathcal{E})+\frac{1}{d-1}\mathfrak{t}(\mathcal{E})^2.
\end{equation}

Note that \(\mathfrak{s}(\mathcal{E})\) can be estimated to precision \(O(\epsilon)\) by a similar strategy to that for \(\mathfrak{p}(\mathcal{E})\), except that a Haar random unitary is applied to one of the states before the SWAP test or distributed quantum inner product estimation. This implies the same upper bounds also apply to \(\mathfrak{u}'(\mathcal{E})\).

\subsection{Lower bound}\label{sec-12162322}
Note that the depolarizing vs random unitary problem is reducible to the estimation problem for \(\mathfrak{u}'\) with constant precision. This is because:
\begin{equation}
\mathfrak{u}'(\mathcal{D})=0,\quad\quad \mathfrak{u}'(\mathcal{U})=1,
\end{equation}
where \(\mathcal{D}\) is the completely depolarizing channel and \(\mathcal{U}\) is a unitary channel. Therefore, the lower bound \(\Omega(\sqrt{d})\) also applies to the estimation problem of \(\mathfrak{u}'\) with incoherent access. On the other hand, recall from Problem \ref{pro-11232006}, \(\mathcal{E}_1,\mathcal{E}_2\) are both unital and trace-preserving channels, which implies that their nonunital block and state-dependent leakage block are zero, and the trace-preservation index \(\mathfrak{t}=1\). This means
\begin{equation}
\mathfrak{u}'(\mathcal{E}_2)-\mathfrak{u}'(\mathcal{E}_1)=\frac{d^2}{d^2-1}\left(\mathfrak{u}(\mathcal{E}_2)-\mathfrak{u}(\mathcal{E}_1)\right)=\Omega(\epsilon).
\end{equation}
Therefore, the lower bound \(\Omega(\epsilon^{-2})\) also applies to the estimation problem for \(\mathfrak{u}'(\mathcal{E})\) with coherent access. Combined with the previous result, the lower bound \(\Omega(\sqrt{d}+\epsilon^{-2})\) for incoherent access follows.

\end{document}